\newcommand{\hr}{{\mathcal H}}
\newcommand{\cn}{{\mathcal N }}
\newcommand{\E}{{\mathcal E }}
\newcommand{\cs}{{\mathcal S}}
\newcommand{\crr}{{\mathcal R}}
\newcommand{\fr}{{\mathcal F}}
\newcommand{\gr}{{\mathcal G}}
\newcommand{\X}{{\mathcal X}}
\newcommand{\fri}{{\mathfrak I}}
\newcommand{\kr}{{\mathcal K}}
\newcommand{\bo}{{\mathcal B}}
\newcommand{\cc}{{\mathbb C}}
\newcommand{\rr}{{\mathbb R}}
\newcommand{\nn}{{\mathbb N}}
\newcommand{\idn}{\mathbf{1}}
\newcommand{\eps}{{\varepsilon}}        
\newcommand{\vphi}{{\varphi}}           
\newcommand{\A}{\mathcal A}
\newcommand{\B}{\mathcal B}
\newcommand{\D}{\mathcal D}
\newcommand{\cP}{\mathcal P}
\newcommand{\bS}{\mathbf S}
\newtheorem{theorem}{Theorem}
\newtheorem{corollary}[theorem]{Corollary}
\newtheorem{definition}[theorem]{Definition}
\newtheorem{lemma}[theorem]{Lemma}
\newtheorem{remark}[theorem]{Remark}
\newcommand{\tr}{\mathrm{tr}}
\DeclareMathOperator{\conv}{conv}
\DeclareMathOperator{\aff}{aff}
\DeclareMathOperator{\ri}{ri}
\DeclareMathOperator{\rebd}{rebd}
\begin{document}
\title{Quantum capacity under adversarial quantum noise: arbitrarily varying quantum channels}
\author{Rudolf Ahlswede $^{1}$\footnote{Tragically, Rudolf Ahlswede passed away during the preparation of the final version of the present paper in December 2010. We, the remaining authors, are thankful to have had the opportunity to experience and enjoy his boundless enthusiasm for science and his lively spirit.}, Igor Bjelakovi\'c $^{2}$, Holger Boche $^{3}$, Janis N\"otzel $^{2}$\\
\scriptsize{Electronic addresses: \{igor.bjelakovic, boche, janis.noetzel\}@tum.de}
\vspace{0.2cm}\\
$^{1}$ {\footnotesize Fakult\"at f\"ur Mathematik, Universit\"at Bielefeld,}\\
{\footnotesize Universit\"atsstr. 25, 33615 Bielefeld, Germany}\\
$^{2}$ {\footnotesize Theoretische Informationstechnik, Technische Universit\"at M\"unchen,}\\
{\footnotesize 80291 M\"unchen, Germany}\\
$^{3}$ {\footnotesize Lehrstuhl f\"ur Theoretische Informationstechnik, Technische Universit\"at M\"unchen,}\\
{\footnotesize 80291 M\"unchen, Germany }
}

\maketitle

\begin{abstract}We investigate entanglement transmission over an unknown channel in the presence of a third party (called the adversary), which is enabled to choose the 
channel from a given set of memoryless but non-stationary channels without informing the legitimate sender and receiver about the particular choice that he made. 
This channel model is called an arbitrarily varying quantum channel (AVQC).\\
We derive a quantum version of Ahlswede's dichotomy for classical arbitrarily varying channels. This includes a regularized formula for the common randomness-assisted capacity for entanglement transmission of an AVQC. Quite surprisingly and in contrast to the classical analog of the problem involving the maximal and average error probability, we find that the capacity for entanglement transmission of an AVQC always equals its strong subspace transmission capacity.\\
These results are accompanied by different notions of symmetrizability (zero-capacity conditions) as well as by conditions for an AVQC to have a capacity described by a single-letter formula. 
In the final part of the paper the capacity of the erasure-AVQC is computed and some light shed on the connection between AVQCs and zero-error capacities. Additionally, we show by entirely elementary and operational arguments motivated by the theory of AVQCs that the quantum, classical, and entanglement-assisted zero-error capacities of quantum channels are generically zero and are discontinuous at every positivity point.
\end{abstract}

\tableofcontents
\section{\label{sec:Introduction}Introduction}
System uncertainty is a basic feature of many information processing systems, regardless whether classical or quantum mechanical, which has a significant impact on structure and performance of protocols used to cope with limited system knowledge. While in case of quantum communication through  unknown quantum channels several important techniques, including channel detection and quantum channel tomography, have been developed to gain at least partial system knowledge, the assumptions needed for these techniques to work satisfactory seem to be rather limiting. Either involved channels have to be stationary (and memoryless) or there has to be additional assistance by a noiseless two-way classical side channel of potentially unlimited capacity or both.\\
An alternative approach consists of following the successful paradigm of classical information theory according to which one develops the techniques, tailored to clearly specified channel models, for identification of optimal communication parameters, e.g. achievable rates, without making any attempt to reduce system uncertainty. However, channel detection and/or tomography can be seen as auxiliary steps which help to specify the channel model prior to actual information processing. In this paper we follow this information-theoretic strategy, and consider the problem of entanglement transmission over adversarially selected quantum channels. \\
\newline
The basic setup consists of a set of quantum channels $\fri=\{\cn_s  \}_{s\in\bS}$ which is known to both the sender and receiver. The goal of the sender is to transmit one half of a maximally entangled pure state $\psi$, suitably encoded, by $l$-fold usage of the (unknown) channel. An entity, which we call the adversary for simplicity, can choose a sequence $s^l=(s_1,\ldots,s_l)\in\bS^l$ at her/his will which results in the selection of the channel $\cn_{s^l}=\otimes_{i=1}^l\cn_{s_i}$. The encoded version of $\psi$ is then fed into $\cn_{s^l}$ and the receiver's goal is to recover the input state, of course without knowing the sequence $s^l$ being selected by the adversary. Implicit in this informal description of the communication scenario is that we suppose that the adversary knows the code which is used for entanglement transmission. Therefore, the communicators are forced to use entanglement transmission protocols that are reliable for the whole family $\fri^{(l)}=\{ \cn_{s^l}  \}_{s^l\in\bS^l}$ of memoryless and partly non-stationary channels. In other words, the desired entanglement transmission protocol should be resistant to the effect of arbitrarily varying noise represented by the family $\fri^{(l)}=\{\cn_{s^l}  \}_{s^l\in\bS^l}$. Even in the simplest non-trivial case of a finite set $\fri$ with $|\fri|>1$ we have to deal for each block length $l$ with exponentially many quantum channels simultaneously.\\
The main contribution of this paper is a generalization of Ahlswede's dichotomy \cite{ahlswede-elimination} which can be stated as follows:\\
{\bf First}, the common-randomness-assisted entanglement transmission capacity of the AVQC $(\fri^{(l)})_{l\in\nn}$ is equal to the entanglement transmission capacity of the compound channel built up from $\conv(\fri)$, i.e. the uncountable family of stationary, memoryless channels that lie in the convex hull of $\fri$ (cf. \cite{bbn-1}, \cite{bbn-2} for more information on compound quantum channels).\\
{\bf Second}, if the deterministic capacity for transmission of messages with asymptotically vanishing average error over an AVQC is greater than zero, its capacity for transmission of entanglement with deterministic codes is equal to its common-randomness-assisted capacity for transmission of entanglement.\\\\
The proof of the direct part as well as the proof of the converse rely substantially on the corresponding results for compound quantum channels developed in \cite{bbn-1}, \cite{bbn-2}. The link between the compound and arbitrarily varying channel models needed in the achievability proofs is given by the powerful robustification technique of $\cite{ahlswede-coloring}$ and \cite{ahlswede-gelfand-pinsker}.\\
The idea behind the second part of the theorem is the following. If the deterministic capacity for message transmission, with average error probability as the success criterion, of $\fri$ is greater than zero, then sender and receiver can use a few (sub-exponentially many) bits to derandomize a given common-randomness-assisted code for transmission of entanglement.\\
As a supplement to the coding theorem, we derive a multi-letter necessary and sufficient condition for the deterministic capacity, with average error, for message transmission of a (finite) AVQC to be zero in Section \ref{sec:symmetrizability}. For sake of completeness, we also include a necessary and sufficient condition for the deterministic capacity for message transmission with \emph{maximal} error probability to be equal to zero. Moreover, we present a first attempt to derive a non-trivial sufficient condition for the common-randomness-assisted capacity for transmission of entanglement to be zero, which we call qc-symmetrizability. Our feeling in this matter is that the definition of that kind of symmetrizability is too narrow to have any chance to be necessary and sufficient. This is basically because according to that definition the adversary does not use all the freedom he is given by the channel model to prevent the common-randomness-assisted entanglement transmission.\\
We find a striking difference to the classical theory: entanglement transmission with entanglement fidelity as the criterion of success is widely acknowledged as a fully quantum counterpart to message transmission with average error as a criterion for failure of transmission, while the counterpart of strong subspace transmission should be maximal error probability.\\
The two classical criteria have been proven to be asymptotically equivalent e.g. for single memoryless channels. For transmission over an AVC they lead to different capacities, as can be seen from Example 2 in \cite{ahlswede-elimination}. The AVC given there has zero capacity for message transmission with asymptotically vanishing maximal error probability, but from Theorem 3, part a) it can be seen that it has positive capacity for message transmission with asymptotically vanishing average error.\\
In the quantum case, asymptotic equivalence of entanglement and strong subspace transmission for single quantum channels has already been proven in \cite{barnum-knill-nielsen}. Our results show, that they are - in contrast to the classical theory - also (asymptotically) equivalent criteria w.r.t. AVQCs.\\
It is no surprise then, that the connection between arbitrarily varying channels and zero-error capacities that is valid in the classical case \cite{ahlswede-note} only partly survives in the quantum regime. This connection is explored in the last part of the paper. Additionally, we show that quantum, classical, and entanglement-assisted zero-error capacities of quantum channels are generically zero and are discontinuous at every positivity point. This is obvious for the classical zero-error capacity in Shannon's original setting \cite{shannon}. In the quantum case we employ some simple facts from convex geometry combined with methods motivated by the theory of arbitrarily varying channels to obtain this conclusion in an extremely simple way directly from the corresponding definitions of zero-error quantum capacities. It should be mentioned at this point that these results can as well be obtained rather easily using the concept of non-commutative graphs (again accompanied by some convex geometry) that has been systematically explored in the recent work \cite{duan-severini-winter}. The fact that the quantum zero-error capacity is generically zero shows that the channels for which it is possible to satisfy the Knill-Laflamme condition \cite{knill-laflamme} on a subspace of dimension greater or equal than $2$ are exceptional.\\
We also list two properties that lead to a single-letter capacity formula of an AVQC and compute the (deterministic) entanglement transmission capacity of an erasure AVQC.\\
\subsection{\label{subsec:Related Work}Related Work}
The model of an arbitrarily varying channel has been introduced by Blackwell, Breiman and Thomasian \cite{bbt-avc} in 1960. They derived a formula for the capacity of an AVC with random codes and asymptotically vanishing average error probability. They also wrote down an explicit example of an AVC whose deterministic capacity is zero, while having nonzero capacity when using random codes.\\
Later landmarks in the development of coding theorems for AVCs have been the papers by Kiefer and Wolfowitz \cite{kiefer-wolfowitz}, who found a necessary and sufficient condition for an AVC to have nonzero capacity with deterministic codes and asymptotically vanishing maximal error probability.\\
The maximal error probability criterion was further investigated in \cite{ahlswede-wolfowitz-2} by Ahlswede and Wolfowitz, who completely determined the capacity of AVCs with binary output alphabet under that criterion. A solution for arbitrarily large alphabets does not seem to exist until now. It should be mentioned that such a solution would include the solution to Shannon's zero error capacity problem \cite{shannon}, as pointed out in \cite{ahlswede-note}.\\
In our approach we use the powerful elimination technique developed by the first author in 1978 \cite{ahlswede-elimination} that, together with the random coding results of \cite{bbt-avc} enabled him to prove the following dichotomy result for AVCs: It stated that the capacity of an AVC (under the average error probability criterion) is either zero or equals its random coding capacity. Together with the robustificaion technique \cite{ahlswede-coloring,ahlswede-gelfand-pinsker} of the first author, the elimination technique led to a rather streamlined approach that, in this combination, has first been successfully used by Ahlswede in \cite{ahlswede-gelfand-pinsker}.\\
After the discoveries of \cite{ahlswede-elimination}, an important open question was, when exactly the deterministic capacity with vanishing average error is equal to zero. In 1985, a first step towards a solution was made by Ericson \cite{ericson}, who came up with a sufficient condition that was proven to be necessary by Csiszar and Narayan \cite{csiszar-narayan} in 1989.\\
The model of an arbitrarily varying  channel with classical input and quantum output has first been considered in 2007 by the first author together with Blinovsky \cite{ahlswede-blinovsky}. They considered the transmission of messages under the average error criterion and gave a complete solution of the problem, i.e. a single-letter capacity formula, including a necessary and sufficient condition for the case of zero capacity.

\subsection{\label{subsec:Outline}Outline}
The notation we freely use throughout the paper is summarized in Section \ref{Notation and Conventions}. The definitions of codes and capacities that are needed in the sequel 
are given in Section \ref{sec:codes-and-capacity}. This section also contains our main result, a quantum version of Ahlswede's dichotomy.\\
A perhaps surprising result is proven in Section \ref{sec:Equivalence of strong subspace and entanglement transmission}: As can be seen from an application of a concentration 
inequality, the capacities for entanglement and strong subspace transmission are identical. This is in sharp contrast to the classical case, where their analogs - average and 
maximal error criterion - lead to different capacities \cite{ahlswede-elimination}.\\
The main part of the paper, up to Section \ref{sec:symmetrizability}, is mostly devoted to the proof of the main result and is organized as follows.\\
In Section \ref{sec:converse-avqc} we are concerned with the upper bound to the common-randomness-assisted capacity for entanglement transmission, i.e. with the converse part. Here, the basic problem is that we cannot employ the Minimax-Theorem (which has originally been proven by von Neumann \cite{von-neumann} and later put into a more general context by Kakutani \cite{kakutani}) like in the classical case to reduce the converse part to that of a single channel. We circumvent this obstacle by noting that the desired result follows from the optimal upper bound on the common-randomness-assisted capacity for entanglement transmission for \emph{compound} quantum channels. The latter is easily shown using the methods from \cite{bbn-2}.\\
Section \ref{sec:random-achievability} contains the achievability proofs for the common-randomness assisted entanglement transmission capacity $\A_{\textup{random}}(\fri)$. 
As we already mentioned above, we are in  the pleasant situation of 
having at our disposal the coding results for compound quantum channels from \cite{bbn-2} and the robustification technique from 
\cite{ahlswede-coloring, ahlswede-gelfand-pinsker}. Since the latter is a central tool for our results and because there is a short and simple proof of it in 
\cite{ahlswede-gelfand-pinsker}, we have decided to include the full account of the robustification technique. The technique on its own operates as follows: We start with 
a ``good'' code for the compound quantum channel built up from $\conv(\fri)$, then applying permutations to the encoding and decoding operations of that code we obtain a 
``good'' random code for the AVQC. The source of common randomness now helps coordinating the selection of permutations at sender's and receiver's side.\\
The last part of our main theorem is proven in Section \ref{sec:derandomization}. The first step in the proof is to show that not that much common randomness is needed to 
achieve $\A_{\textup{random}}(\fri)$. Basically for each block length $l$ we need roughly $O(\log l)$ random bits. This is shown by a slight modification of the elimination 
technique of \cite{ahlswede-elimination}. If we assume now that the capacity for transmission of classical messages using average error criterion is strictly greater than zero,
 the sender and receiver can generate the required $O(\log l)$ random bits by 
sending (classical) messages over the AVQC by spending negligible block length compared to $l$ and are therefore able to simulate reliable random codes by deterministic ones.\\
Section \ref{sec:symmetrizability} summarizes attempts to address the question when exactly a given (finite) AVQC has a capacity (for various types of transmission and 
criteria of success) equal to zero. Most important for our present work is a necessary and sufficient condition for the message transmission capacity with deterministic 
codes and average error criterion to be greater than zero. Together with our results on random entanglement transmission codes it enables us to prove our main theorem. We 
also give a necessary and sufficient condition for the classical deterministic capacity with maximal error criterion to be greater than zero and end the section with a first 
attempt to find non-trivial conditions for the common-randomness-assisted capacity for transmission of entanglement of an AVQC to be equal to zero.\\
Section \ref{sec:Conditions for single-letter-capacities} is devoted to single letter characterizations of the entanglement transmission capacity $\A_{\textup{det}}(\fri)$. 
We give two conditions that lead to single letter formulas, both of which demand certain properties to be valid all over the convex hull $\textrm{conv}(\fri)$ of $\fri$.\\
Finally in Section \ref{sec:Applications and examples} we explicitly calculate the capacities for entanglement transmission of the erasure-AVQC and exploit the connection 
between AVQC's and zero-error capacities.
\section{\label{Notation and Conventions}Notation and conventions}
All Hilbert spaces are assumed to have finite dimension and are over the field $\cc$. $\mathcal{S}(\hr)$ is the set of states, i.e. 
positive semi-definite operators with trace $1$ acting on the Hilbert space $\hr$. Pure states are given by projections onto one-dimensional subspaces. 
A vector of unit length spanning such a subspace will therefore be referred to as a state vector. If $\fr\subset \hr$ is a subspace of $\hr$ then we write $\pi_{\fr}$ 
for the maximally mixed state on $\fr$, i.e. $\pi_{\fr}=\frac{p_{\fr}}{\tr(p_{\fr})}$ where $p_{\fr}$ stands for the projection onto $\fr$. $\mathcal{B}(\hr)$ denotes 
the set of linear operators acting on $\hr$. For a finite set $A$ the notation $\mathfrak{P}(A)$ is reserved for the set of probability distributions on $A$.\\
For a given Hilbert space with inner product $\langle\cdot,\cdot\rangle$ we write $S(\hr)$ for its unit sphere: $S(\hr):=\{x\in\hr:\langle x,x\rangle=1\}$.\\
The set of completely positive trace preserving (CPTP) maps
between the operator spaces $\mathcal{B}(\hr)$ and
$\mathcal{B}(\kr)$ is denoted by $\mathcal{C}(\hr,\kr)$. It is contained in the set  $\mathcal{C}^\downarrow(\hr,\kr)$ of completely positive trace non-increasing maps from $\mathcal{B}(\hr)$ to $\mathcal{B}(\kr)$.\\
We use the base two
logarithm which is denoted by $\log$. The von Neumann entropy of
a state $\rho\in\mathcal{S}(\hr)$ is given by
\begin{equation}S(\rho):=-\textrm{tr}(\rho \log\rho).  \end{equation}
The coherent information for $\cn\in \mathcal{C}(\hr,\kr) $ and
$\rho\in\mathcal{S}(\hr)$ is defined by
\begin{equation}I_c(\rho, \cn):=S(\cn (\rho))- S( (id_{\mathcal{B}(\hr)}\otimes \cn)(|\psi\rangle\langle \psi|)  ),  \end{equation}
where $\psi\in\hr\otimes \hr$ is an arbitrary purification of the state $\rho$. Following the usual conventions we let $S_e(\rho,\cn):=S( (id_{\mathcal{B}(\hr)}\otimes \cn)(|\psi\rangle\langle \psi|)  )$ denote the entropy exchange.\\
As a measure of closeness between two states $\rho,\sigma\in\mathcal S(\hr)$ we use the fidelity $F(\rho,\sigma):=||\sqrt{\rho}\sqrt{\sigma}||^2_1$. The fidelity is symmetric in the input and for a pure state $\rho=|\phi\rangle\langle\phi|$ we have $F(|\phi\rangle\langle\phi|,\sigma)=\langle\phi,\sigma\phi\rangle$.\\
A closely related quantity is the entanglement fidelity. For $\rho\in\mathcal{S}(\hr)$ and $\cn\in
\mathcal{C}^{\downarrow}(\hr,\hr)$ it is given by
\begin{equation}F_e(\rho,\cn):=\langle\psi, (id_{\mathcal{B}(\hr)}\otimes \cn)(|\psi\rangle\langle \psi|)     \psi\rangle,  \end{equation}
with $\psi\in\hr\otimes \hr$ being an arbitrary purification of the state $\rho$.\\
We use the diamond norm $||\cdot||_\lozenge$ as a measure of closeness in the set of quantum channels, which is given by
\begin{equation}\label{def:diamond-norm}
||\cn||_{\lozenge}:=\sup_{n\in \nn}\max_{a\in \mathcal{B}(\cc^n\otimes\hr),||a||_1=1}||(id_{n}\otimes \mathcal{N})(a)||_1,   \end{equation}
where $id_n:\mathcal{B}(\cc^n)\to \mathcal{B}(\cc^n)$ is the identity channel, and $\mathcal{N}:\mathcal{B}(\hr)\to \mathcal{B}(\kr)$ is any linear map, 
not necessarily completely positive. The merits of $||\cdot||_{\lozenge}$ are due to the following facts (cf. \cite{kitaev}). First, $||\cn||_{\lozenge}=1$ for 
all $\cn\in\mathcal{C}(\hr,\kr)$. Thus, $\mathcal{C}(\hr,\kr)\subset S_{\lozenge}$, where $S_{\lozenge}$ denotes the unit sphere of the normed space 
$(\mathcal{B}(\mathcal{B}(\hr),\mathcal{B}(\kr)),||\cdot||_{\lozenge} )$. Moreover, $||\cn_1\otimes \cn_2||_{\lozenge}=||\cn_1||_{\lozenge}||\cn_2||_{\lozenge}$ 
for arbitrary linear maps $\cn_1,\cn_2:\mathcal{B}(\hr)\to \mathcal{B}(\kr) $. Finally, the supremum in (\ref{def:diamond-norm}) needs only be taken over $n$ that range over $\{1,2,\ldots,\dim\hr   \}.$\\
We further use the diamond norm to define the function $D_\lozenge(\cdot,\cdot)$ on $\{(\fri,\fri'):\fri,\fri'\subset\mathcal C(\hr,\kr)\}$, which is for 
$\fri,\fri'\subset\mathcal C(\hr,\kr)$ given by
\begin{equation}D_\lozenge(\fri,\fri'):=\max\{\sup_{\cn\in\fri}\inf_{\cn'\in\fri'}||\cn-\cn'||_\lozenge,\sup_{\cn'\in\fri'}\inf_{\cn\in\fri}||\cn-\cn'||_\lozenge\}.\end{equation}
For $\fri\subset\mathcal C(\hr,\kr)$ let $\overline{\fri}$ denote the closure of $\fri$ in $||\cdot||_\lozenge$. Then $D_\lozenge$ defines a metric on $\{\fri:\fri\subset\mathcal C(\hr,\kr),\ \fri=\bar\fri\}$ which is basically the Hausdorff distance induced by the diamond norm.\\
Obviously, for arbitrary $\fri,\fri'\subset\mathcal C(\hr,\kr)$, $D_\lozenge(\fri,\fri')\leq\epsilon$ implies that for every $\cn\in\fri$ ($\cn'\in\fri'$) there exists $\cn'\in\fri'$ ($\cn\in\fri)$ such that $||\cn-\cn'||_\lozenge\leq2\epsilon$. If $\fri=\bar\fri,\ \fri'=\bar{\fri'}$ holds we even have $||\cn-\cn'||_\lozenge\leq\epsilon$.
In this way $D_\lozenge$ gives a measure of distance between sets of channels.\\
For any set $\fri\subset \mathcal{C}(\hr,\kr) $ and $l\in\nn$ we set
\begin{equation}\fri^{\otimes l}:=\{\cn^{\otimes l}: \cn\in\fri  \}.  \end{equation}
For an arbitrary set $\bS$, $\bS^l:=\{(s_1,\ldots,s_l):s_i\in\bS\ \forall i\in\{1,\ldots,l\}\}.$ We also write $s^l$ for the elements of $\bS^l$.\\
For an arbitrary set $\fri$ of CPTP maps we denote by $\conv(\fri)$ its convex hull (see \cite{webster} for the definition) and note that in case that $\fri=\{\cn_s\}_{s\in\bS}$ is a finite set we have
\begin{equation}\label{eq:conv-hull}
 \conv(\fri)=\left\{\cn_{q}\in \mathcal{C}(\hr,\kr): \cn_q=\sum_{s\in \bS}q(s)\cn_s,\ q\in\mathfrak{P}(\bS)  \right\},
\end{equation}
an equality that we will make use of in the approximation of infinite AVQC's by finite ones.\\
Finally, we need some simple topological notions for convex sets in finite dimensional normed space $(V, ||\cdot||)$ over the field of real or complex numbers which we borrow from \cite{webster}. Let $F\subset V$ be convex. $x\in F$ is said to be a relative interior point of $F$ if there is $r>0$ such that $B(x,r)\cap \aff F\subset F$. Here $B(x,r)$ denotes the open ball of radius $r$ with the center $x$ and $\aff F$ stands for the affine hull of $F$. The set of relative interior points of $F$ is called the relative interior of $F$ and is denoted by $\ri F$.\\
The relative boundary of $F$, $\rebd F$, is the set difference between the closure of $F$ and $\ri F$.\\
For a set $A\subset V$ and $\delta\ge 0$ we define the parallel set or the blow-up $(A)_{\delta}$ of $A$ by
\begin{equation}(A)_{\delta}:=\left\{x\in V: ||x-y||\le \delta \textrm{ for some } y\in A  \right\}.  \end{equation}
\section{\label{sec:codes-and-capacity}Basic definitions and main results}
In this section we define the quantities that we will be dealing with in the rest of the paper: Arbitrarily varying quantum channels and codes for transmission of entanglement and subspaces. Since they will be of importance for our derandomization arguments, we will also include definitions of the capacities for message transmission with average and maximal error probability criterion.\\
Our most basic object is the arbitrarily varying quantum channel (AVQC). It is generated by a set $\fri=\{\cn_s  \}_{s\in \bS}$ of CPTP maps with input Hilbert space $\hr$ and output Hilbert space $\kr$ and given by the family of CPTP maps $\{\cn_{s^l}:\mathcal{B}(\hr)^{\otimes l}\to\mathcal{B}(\kr)^{\otimes l}  \}_{l\in\nn,s^l\in \bS^{l}}$, where
\begin{equation}\cn_{s^l}:=\cn_{s_1}\otimes \ldots\otimes \cn_{s_l}\qquad \qquad (s^l\in\bS^l).  \end{equation}
Thus, even in the case of a finite set $\fri=\{\cn_s  \}_{s\in\bS}$, showing the existence of reliable codes for the AVQC determined by $\fri$ is a non-trivial task: 
For each block length $l\in\nn$ we have to deal with $|\fri|^{l}$, i.e. exponentially many, memoryless partly non-stationary quantum channels simultaneously.\\\\
In order to relieve ourselves from the burden of complicated notation we will simply write $\fri=\{\cn_s \}_{s\in\bS}$ for the AVQC.
\subsection{\label{subsec:Entanglement-transmission}Entanglement transmission}
For the rest of this subsection, let $\fri=\{\cn_s\}_{s\in\bS}$ be an AVQC.
\begin{definition}
An $(l,k_l)-$\emph{random entanglement transmission code} for $\fri$ is a probability measure $\mu_l$ on $(\mathcal C(\fr_l,\hr^{\otimes l})\times\mathcal C(\kr^{\otimes l},\fr_l'),\sigma_l)$, 
where $\fr_l,\fr_l'$ are Hilbert spaces, $\dim\fr_l=k_l$, $\fr_l\subset\fr_l'$ and the sigma-algebra $\sigma_l$ is chosen such that the function $(\cP_l,\crr_l)\mapsto F_e(\pi_{\fr_l},\crr_l\circ\cn_{s^l}\circ\cP_l)$ is measurable 
w.r.t. $\sigma_l$ for every $s^l\in\bS^l$.\\
Moreover, we assume that $\sigma_l$ contains all singleton sets. An example of such a sigma-algebra $\sigma_l$ is given by 
the product of sigma-algebras of Borel sets induced on $\mathcal C(\fr_l,\hr) $ and $\mathcal C(\kr,\fr_l') $ by the standard topologies of the ambient spaces.
\end{definition}
\begin{definition}\label{def:random-cap-ent-trans}
A non-negative number $R$ is said to be an achievable entanglement transmission rate for the AVQC $\fri=\{\cn_s  \}_{s\in\bS}$ with random codes if there is a sequence of $(l,k_l)-$random entanglement transmission codes such that
\begin{enumerate}
\item $\liminf_{l\rightarrow\infty}\frac{1}{l}\log k_l\geq R$ and
\item $\lim_{l\rightarrow\infty}\inf_{s^l\in\bS^l}\int F_e(\pi_{\fr_l},\crr^l\circ\cn_{s^l}\circ\cP^l)d\mu_l(\cP^l,\crr^l)=1$.
\end{enumerate}
The random entanglement transmission capacity $\A_{\textup{random}}(\fri)$ of $\fri$ is defined by
\begin{equation}\A_{\textup{random}}(\fri):=\sup\{R:R \textrm{ is an achievable entanglement transmission rate for } \fri \textrm{ with random codes}\}.\end{equation}
\end{definition}
Having defined random codes and random code capacity for entanglement transmission we are in the position to introduce their deterministic counterparts: An $(l,k_l)-$code for entanglement transmission over $\fri$ is an $(l,k_l)-$random code for $\fri$ with $\mu_l(\{(\mathcal{P}^l,\crr^l)  \}  )=1$ for some encoder-decoder pair $(\mathcal{P}^l,\crr^l)$\footnote{This explains our requirement on $\sigma_l$ to contain all singleton sets.} and $\mu_l(A)=0$ for any $A\in\sigma_l$ with $(\mathcal{P}^l,\crr^l)\notin A $. We will refer to such measures as point measures in what follows.
\begin{definition}
A non-negative number $R$ is a deterministically achievable entanglement transmission rate for the AVQC $\fri=\{\cn_s  \}_{s\in \bS}$ if it is achievable in the sense of Definition \ref{def:random-cap-ent-trans} for random codes  with \emph{point measures} $\mu_l$.\\
The deterministic entanglement transmission capacity $\A_{\textup{det}}(\fri)$ of $\fri$ is given by
\begin{equation}\A_{\textup{det}}(\fri):=\sup \{R: R \textrm{ is a deterministically achievable entanglement transmission rate for }\fri  \}.  \end{equation}
\end{definition}
Finally, we shall need the notion of the classical deterministic capacity $C_{\textrm{det}}(\fri)$ of the AVQC $\fri=\{\cn_s  \}_{s\in\bS}$ with \emph{average} error criterion.
\begin{definition}\label{def:message-trans-with-average-error}
An $(l,M_l)$-(deterministic) code for message transmission is a family of pairs $\mathfrak{C}_l=(\rho_i, D_i)_{i=1}^{M_l}$
where $\rho_1,\ldots ,\rho_{M_l}\in\cs(\hr^{\otimes l})$, and positive semi-definite operators $D_1,\ldots, D_{M_l}\in\mathcal{B}(\kr^{\otimes l})$ satisfying $\sum_{i=1}^{M_l}D_i=\idn_{\kr^{\otimes l}} $.\\
The worst-case average probability of error of a code $\mathfrak{C}_l$ is given by
\begin{equation}\label{def-av-error}
\bar P_{e,l}(\fri):=\sup_{s^l\in \bS^l}\bar P_e (\mathfrak{C}_l,s^l),
\end{equation}
where for $s^l\in\bS^l$ we set
\begin{equation} \bar P_e (\mathfrak{C}_l,s^l):=\frac{1}{M_l}\sum_{i=1}^{M_l}\left(1- \tr(\cn_{s^l}(\rho_i)D_i)  \right).  \end{equation}
The achievable rates and the classical deterministic capacity  $C_{\textrm{det}}(\fri)$ of $\fri$, with respect to the error criterion given in (\ref{def-av-error}), are then defined in the usual way.\\
\end{definition}
For any AVQC (finite or infinite), the compound quantum channel generated by the set $\conv(\fri)$ (cf. \cite{bbn-2} for the relevant definition) shall play the crucial role in our derivation of the coding results below. 
In the relevant cases we will have $|\fri|>1$ and, therefore, $\conv(\fri)$ will be \emph{infinite}.\\
Our main result, a quantum version of Ahlswede's dichotomy for finite AVQCs, goes as follows:
\begin{theorem}\label{quant-ahlswede-dichotomy}
Let $\fri=\{\cn_s  \}_{s\in \bS}$ be an AVQC.
\begin{enumerate}
\item With $\textrm{conv}(\fri)$ denoting the convex hull of $\fri$ we have
  \begin{equation}\label{eq:ahlswede-dichotomy-1}
    \A_{\textup{random}}(\fri)=\lim_{l\to\infty}\frac{1}{l}\max_{\rho\in\cs(\hr^{\otimes l})}\inf_{\cn\in \conv(\fri)}I_c(\rho, \cn^{\otimes l}).
  \end{equation}
\item Either $C_{\textup{det}}(\fri)=0 $ or else $\A_{\textup{det}}(\fri)= \A_{\textup{random}}(\fri)$.
\end{enumerate}
\end{theorem}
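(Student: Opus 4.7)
The plan splits naturally along the two parts of the theorem, and in each case the key step is to reduce matters to the compound quantum channel generated by $\conv(\fri)$, whose capacity theory is developed in \cite{bbn-2}. For the converse in part 1, the starting observation is that every product channel $\cn^{\otimes l}$ with $\cn \in \conv(\fri)$ can be written as a convex combination $\sum_{s^l \in \bS^l} q(s^l) \cn_{s^l}$ for a suitable $q \in \mathfrak{P}(\bS^l)$; since the entanglement fidelity is affine in the channel, any $(l,k_l)$-random code reliable against every $s^l$ is automatically reliable against every $\cn^{\otimes l}$, $\cn \in \conv(\fri)$. Hence $\A_{\textup{random}}(\fri)$ is bounded above by the common-randomness-assisted entanglement transmission capacity of the compound channel generated by $\conv(\fri)$, and the converse part of the compound-channel capacity theorem from \cite{bbn-2} bounds the latter in turn by the regularized coherent-information expression in (\ref{eq:ahlswede-dichotomy-1}). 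This route bypasses the usual min-max obstacle, since $I_c(\rho, \cn)$ lacks the convexity/concavity required to exchange $\inf$ and $\max$ directly at the single-letter level.

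For the direct part of (\ref{eq:ahlswede-dichotomy-1}), I would invoke the existence, from \cite{bbn-2}, of a sequence of deterministic entanglement-transmission codes $(\cP^l, \crr^l)$ for the compound channel generated by $\conv(\fri)$ attaining in the limit the right-hand side of (\ref{eq:ahlswede-dichotomy-1}). Setting $f(s^l) := F_e(\pi_{\fr_l}, \crr^l \circ \cn_{s^l} \circ \cP^l)$ and using the identity $\sum_{s^l \in \bS^l} q^{\otimes l}(s^l)\, \cn_{s^l} = \cn_q^{\otimes l}$ with $\cn_q \in \conv(\fri)$, the compound-channel guarantee becomes $\sum_{s^l} q^{\otimes l}(s^l) f(s^l) \ge 1 - \epsilon_l$ uniformly in $q \in \mathfrak{P}(\bS)$, with $\epsilon_l \to 0$. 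Ahlswede's robustification lemma \cite{ahlswede-coloring,ahlswede-gelfand-pinsker} then converts this averaged statement into the pointwise bound $\frac{1}{l!}\sum_{\pi \in S_l} f(\pi(s^l)) \ge 1 - \epsilon_l'$ for every $s^l \in \bS^l$. The latter is precisely the worst-case expected entanglement fidelity of the random $(l,k_l)$-code that uses a shared uniform permutation $\pi$ to conjugate the encoder and decoder by the unitary $U_\pi$ implementing $\pi$ on the $l$ tensor factors, yielding the desired lower bound on $\A_{\textup{random}}(\fri)$.

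For part 2, assume $C_{\textup{det}}(\fri) > 0$. I would proceed in two stages. First, adapting the elimination idea of \cite{ahlswede-elimination} to our quantum setting, I would replace the random code of part 1 by one whose common-randomness support has $\mathrm{poly}(l)$ elements and is therefore encodable with $O(\log l)$ bits; the key ingredient is a Chernoff-type concentration combined with a union bound over the $|\bS|^l$ adversary sequences, which succeeds because polynomially many independent draws from $\mu_l$ suffice to approximate the expected fidelity uniformly in $s^l$. Second, the legitimate parties share these $O(\log l)$ bits by prepending a short block of length $o(l)$ carrying a classical message deterministically over $\fri$ at a fixed rate below $C_{\textup{det}}(\fri)$, the short-block error being exponentially small in its length. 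The concatenated scheme is deterministic, has total block length $l(1+o(1))$, and attains the rate $\A_{\textup{random}}(\fri)$, which together with the trivial inequality $\A_{\textup{det}}(\fri) \le \A_{\textup{random}}(\fri)$ yields $\A_{\textup{det}}(\fri) = \A_{\textup{random}}(\fri)$.

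The main obstacle I anticipate is the robustification step in the direct part of~1: in contrast to classical message transmission, where robustification acts directly on transition probabilities, here one must verify that conjugating both encoder and decoder by permutation unitaries yields exactly the identity $f(\pi(s^l)) = F_e(\pi_{\fr_l}, (\crr^l \circ U_{\pi^{-1}}) \circ \cn_{s^l} \circ (U_\pi \circ \cP^l))$, and one must handle the approximation of infinite $\fri$ by finite subsets (via the metric $D_\lozenge$) so that (\ref{eq:conv-hull}) can be applied cleanly to identify $\sum_{s^l} q^{\otimes l}(s^l)\, \cn_{s^l}$ with an element of $\conv(\fri)^{\otimes l}$. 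The derandomization in part~2 is routine by comparison; its only subtlety is coordinating the classical preamble with the entanglement-transmission block so that the small classical error does not damage the fidelity guarantee, which is straightforward because the preamble occupies vanishing rate.
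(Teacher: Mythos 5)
Your proposal is correct and follows essentially the same route as the paper: the converse by observing that a random AVQC code is automatically a random code for the compound channel generated by $\conv(\fri)$ (affinity of $F_e$ in the channel) and then invoking the compound converse, the direct part by robustifying compound codes via the identity $\cn_{\sigma(s^l)}=A_{\sigma^{-1},\kr}\circ\cn_{s^l}\circ A_{\sigma,\hr}$ with a shared random permutation, and part 2 by reducing the common randomness to $O(\log l)$ bits through concentration plus a union bound and transmitting the code index over a classical prefix of length $o(l)$ using $C_{\textup{det}}(\fri)>0$. The only items the paper treats more carefully than your sketch are the random-code converse for compound channels (handled by passing to the uniform mixture channel) and, for infinite $\fri$, the outer polytope approximation of $\conv(\fri)$ (with a small depolarizing perturbation absorbed into the decoder) needed so that robustification can be applied over finitely many extremal channels; both are details consistent with your outline rather than gaps in it.
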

\begin{proof} 
The claim made in (\ref{eq:ahlswede-dichotomy-1}) follows from Theorem \ref{converse:finite-avc} and Corollary \ref{achievability-finite-avqc}.\\
The proof that $C_{\textup{det}}(\fri)>0$ implies $\A_{\textup{det}}(\fri)= \A_{\textup{random}}(\fri) $ requires a derandomization argument which is presented in Section \ref{sec:derandomization}.
\end{proof}
We conclude this section with some explaining remarks:\\
1. Coherent information depends continuously on the state, therefore $\rho\mapsto\inf_{\cn\in \conv(\fri)}I_c(\rho, \cn^{\otimes l})$ is upper semicontinuous 
and thus $\max_{\rho\in\cs(\hr^{\otimes l})}\inf_{\cn\in \conv(\fri)}I_c(\rho, \cn^{\otimes l})$ exists due to the compactness of $\cs(\hr^{\otimes l})$.\\
The limit in (\ref{eq:ahlswede-dichotomy-1}) exists due to superadditivity of the sequence $(\max_{\rho\in\cs(\hr^{\otimes l})}\inf_{\cn\in \conv(\fri)}I_c(\rho, \cn^{\otimes l}))_{l\in\nn}$.\\
2. It is clear that $\A_{\textup{det}}(\fri)\le C_{\textup{det}}(\fri) $, so that $C_{\textup{det}}(\fri)=0 $ implies  $\A_{\textup{det}}(\fri)=0 $. Therefore, Theorem \ref{quant-ahlswede-dichotomy} gives a regularized formula for  $\A_{\textup{det}}(\fri) $ in form of (\ref{eq:ahlswede-dichotomy-1}), and the question remains when $C_{\textup{det}}(\fri)=0$ happens. We derive a non-single-letter necessary and sufficient condition for the latter in Section \ref{sec:symmetrizability}.\\
3. Continuous dependence of the coherent information on the channel reveals that for each $l\in\nn$ and $\rho\in\cs(\hr^{\otimes l}) $
\begin{equation} \inf_{\cn\in \conv(\fri)}I_c(\rho, \cn^{\otimes l})=\min_{\cn\in \overline{\conv(\fri)}}I_c(\rho, \cn^{\otimes l}), \end{equation}
\subsection{\label{subsec:strong-subspace-transmission}Strong subspace transmission}
Let $\fri=\{\cn_s\}_{s\in\bS}$ be an AVQC. An $(l,k_l)-$\emph{random strong subspace transmission code} for $\fri$ is a probability measure $\mu_l$ on 
$(\mathcal C(\fr_l,\hr^{\otimes l})\times\mathcal C(\kr^{\otimes l},\fr_l'),\sigma_l)$, where $\fr_l,\ \fr_l'$ are Hilbert spaces, $\dim\fr_l=k_l$, $\fr_l\subset\fr_l'$ and the sigma-algebra 
$\sigma_l$ is chosen such that for every $\psi\in S(\fr_l)$ the function $(\cP^l,\crr^l)\mapsto F(|\psi\rangle\langle\psi|,\crr^l\circ\cn_{s^l}\circ\cP^l(|\psi\rangle\langle\psi|))$ 
is measurable w.r.t. $\sigma_l$ for every $s^l\in\bS^l$. Again, we assume that $\sigma_l$ contains all singleton sets.
\begin{definition}\label{def:random-cap-strsub-trans}
A non-negative number $R$ is said to be an achievable strong subspace transmission rate for the AVQC $\fri=\{\cn_s  \}_{s\in\bS}$ with random codes if there is a sequence of $(l,k_l)-$random strong subspace transmission codes such that
\begin{enumerate}
\item $\liminf_{l\rightarrow\infty}\frac{1}{l}\log k_l\geq R$ and
\item $\lim_{l\rightarrow\infty}\inf_{s^l\in\bS^l} \min_{\psi\in S(\fr_l)}\int F(|\psi\rangle\langle\psi|,\crr^l\circ\cn_{s^l}\circ\cP^l(|\psi\rangle\langle\psi|))d\mu_l(\cP^l,\crr^l)=1$.
\end{enumerate}
The random strong subspace transmission capacity $\A_{\textup{s,random}}(\fri)$ of $\fri$ is defined by
\begin{equation}\A_{\textup{s,random}}(\fri):=\sup\{R:R \textrm{ is an achievable strong subspace transmission rate for } \fri \textrm{ with random codes}\}.\end{equation}
\end{definition}
As before we also define deterministic codes: A \emph{deterministic} $(l,k_l)-$strong subspace transmission code for $\fri$ is an $(l,k_l)-$random strong subspace transmission code for $\fri$ with $\mu_l(\{(\mathcal{P}^l,\crr^l)  \}  )=1$ for some encoder-decoder pair $(\mathcal{P}^l,\crr^l)$ and $\mu_l(A)=0$ for any $A\in\sigma_l$ with $(\mathcal{P}^l,\crr^l)\notin A $. We will refer to such measures as point measures in what follows.
\begin{definition}
A non-negative number $R$ is a deterministically achievable strong subspace transmission rate for the AVQC $\fri=\{\cn_s  \}_{s\in \bS}$ if it is achievable in the sense of Definition \ref{def:random-cap-strsub-trans} for random codes  with \emph{point measures} $\mu_l$.\\
The deterministic capacity $\A_{\textup{s,det}}(\fri)$ for strong subspace transmission over an AVQC $\fri$ is  given by
\begin{equation}\A_{\textup{s,det}}(\fri):=\sup \{R: R \textrm{ is a deterministically achievable rate for }\fri  \}.  \end{equation}
\end{definition}
If we want to transmit classical messages, then the error criterion that is most closely related to strong subspace transmission is that of maximal error probability. It leads to the notion of classical deterministic capacity with maximal error:
\begin{definition}\label{def:message-trans-with-max-error}Let $\mathfrak C_l$ be an $(l,M_l)$-(deterministic) code for message transmission as given in Definition \ref{def:message-trans-with-average-error}. The worst-case maximal probability of error of the code $\mathfrak{C}_l$ is given by
\begin{equation}\label{def-av-maxerror}
P_{e,l}(\fri):=\sup_{s^l\in \bS^l} P_e (\mathfrak{C}_l,s^l),
\end{equation}
where for $s^l\in\bS^l$ we set
\begin{equation} P_e (\mathfrak{C}_l,s^l):=\max_{i\in M_l}\left(1- \tr(\cn_{s^l}(\rho_i)D_i)  \right).  \end{equation}
The achievable rates and the classical deterministic capacity  $C_{\det,\max}(\fri)$ of $\fri$, with respect to the error criterion given in (\ref{def-av-maxerror}), are then defined in the usual way.\\
\end{definition}
The perhaps surprising result is that the strong subspace transmission capacity of a (finite) AVQC always equals its entanglement transmission capacity:
\begin{theorem}\label{theorem:equivalence-of-capacities}
For every AVQC $\fri=\{\cn_s\}_{s\in\bS}$ we have the equalities
\begin{align}
\A_{\textup{s,random}}(\fri)&=\A_{\textup{random}}(\fri),&\\
\A_{\textup{s,det}}(\fri)&=\A_{\textup{det}}(\fri).&
\end{align}
\end{theorem}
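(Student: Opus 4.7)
I plan to prove the two equalities by handling each of the inequalities separately for random codes, then noting that the deterministic versions follow by specializing the arguments to point measures $\mu_l$.

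For the easy direction $\A_{\textup{s,random}}(\fri) \leq \A_{\textup{random}}(\fri)$, I would invoke Nielsen's identity
\[
\int_{\fr_l} F(|\psi\rangle\langle\psi|, \Lambda(|\psi\rangle\langle\psi|))\, d\psi \;=\; \frac{k_l\, F_e(\pi_{\fr_l}, \Lambda) + 1}{k_l+1},
\]
valid for any CPTP map $\Lambda$ on $\B(\fr_l)$ with $d\psi$ the unitarily invariant probability measure on the unit sphere of $\fr_l$. If condition (2) of Definition~\ref{def:random-cap-strsub-trans} holds with $\Lambda = \crr^l \circ \cn_{s^l}\circ\cP^l$, then the average pure-state fidelity is also at least $1-\epsilon_l$, which via the identity forces $F_e(\pi_{\fr_l}, \crr^l \circ \cn_{s^l} \circ \cP^l) \geq 1 - \epsilon_l(1 + 1/k_l)$. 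Integrating against $\mu_l$ and taking the infimum over $s^l$ preserves the estimate, so the same code sequence certifies the same rate for entanglement transmission.

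The non-trivial inclusion is $\A_{\textup{random}}(\fri) \leq \A_{\textup{s,random}}(\fri)$. Given a sequence of $(l,k_l)$-random entanglement codes with
\[
\inf_{s^l \in \bS^l}\int F_e(\pi_{\fr_l}, \crr^l \circ \cn_{s^l} \circ \cP^l)\, d\mu_l(\cP^l,\crr^l) \geq 1-\epsilon_l, \qquad \epsilon_l \to 0,
\]
my plan is to restrict the input subspace to a Haar-random sub-subspace $\tilde{\fr}_l \subset \fr_l$ of dimension $k_l' := \lfloor k_l/2\rfloor$ and to show that with positive probability the resulting $(l,k_l')$-code has high minimum pure-state fidelity uniformly in $s^l$. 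The main ingredient is Levy's lemma: for each fixed effective channel $\Lambda^{s^l} := \crr^l \circ \cn_{s^l} \circ \cP^l$, the fidelity $\psi \mapsto F(|\psi\rangle\langle\psi|, \Lambda^{s^l}(|\psi\rangle\langle\psi|))$ is $O(1)$-Lipschitz on the sphere of $\fr_l$ and has Haar-mean at least $1 - O(\epsilon_l)$ by Nielsen's identity; hence it falls below $1 - O(\sqrt{\epsilon_l}) - \delta$ only on a set of Haar measure $\leq \exp(-c k_l \delta^2)$. I would then cover the unit sphere of a realization of $\tilde{\fr}_l$ by a $\delta$-net of cardinality $(3/\delta)^{2k_l'}$, transfer the net estimate to all pure states via the Lipschitz bound, and union-bound over the net and over the $|\bS|^l$ noise sequences. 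Because $k_l$ grows singly exponentially in $l$ at any positive rate while the Levy bound decays doubly exponentially in $l$, the total failure probability tends to zero; a preliminary Markov reduction on $\mu_l$ isolates realizations $(\cP^l,\crr^l)$ with entanglement fidelity close to one for nearly all $s^l$, so that the concentration argument applies realization-wise.

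The principal obstacle I expect is the clean orchestration of two independent randomnesses --- the code randomness $\mu_l$ and the auxiliary randomness on the Grassmannian --- so that the Markov step, Nielsen's identity, Levy's lemma, and the net argument compose without losing too much in rate or fidelity. Once this is achieved, the rate loss $\tfrac{1}{l}\log(k_l/k_l') = \tfrac{1}{l}$ is negligible, and the deterministic statement $\A_{\textup{s,det}}(\fri) = \A_{\textup{det}}(\fri)$ follows by running every step with $\mu_l$ a point measure, so that the good subspace $\tilde{\fr}_l$ converts the single encoder-decoder pair directly into a deterministic strong subspace code of essentially the same rate.
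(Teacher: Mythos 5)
Your skeleton for the hard direction -- the Horodecki/Nielsen identity relating average pure-state fidelity to $F_e$, concentration of measure on the sphere, a net, and a union bound over the $|\bS|^l$ state sequences -- is exactly the paper's route (its Theorem \ref{theorem-milman-schechtman} is precisely Levy's lemma packaged with the net argument on a random subspace, and Lemma \ref{lemma-connection-between-strong-subspace-and-entanglement-fidelity} is your identity). But your orchestration has a genuine gap: the ``preliminary Markov reduction on $\mu_l$'' followed by a realization-wise concentration argument does not work. A Markov bound on $\mu_l$ is taken at fixed $s^l$, so the exceptional set of realizations depends on $s^l$; to get uniformity in $s^l$ you would have to intersect $|\bS|^l$ such sets, losing a factor of order $|\bS|^l\sqrt{\epsilon_l}$, which need not be small since the theorem must cover arbitrary achievable-rate sequences where $\epsilon_l\to 0$ arbitrarily slowly (a single realization good for \emph{all} $s^l$ would amount to derandomization, which is exactly what requires $C_{\textup{det}}(\fri)>0$ in Theorem \ref{quant-ahlswede-dichotomy}). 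Moreover, even granting good realizations, your random subspace would be chosen per realization, whereas Definition \ref{def:random-cap-strsub-trans} demands one fixed subspace $\fr_l$ with $\min_{\psi\in\fr_l}$ taken of the $\mu_l$-\emph{average}; you cannot union-bound over a continuum of realizations to force a common subspace. The repair is what the paper does in Lemma \ref{theorem-strong-subspace-codes}: apply the concentration and net argument not realization-wise but to the averaged fidelity $f_{s^l}(\phi)=\int\langle\phi,\crr^l\circ\cn_{s^l}\circ\cP^l(|\phi\rangle\langle\phi|)\phi\rangle \,d\mu_l(\crr^l,\cP^l)$, which is still Lipschitz and has Haar mean at least $1-\epsilon_l$ by Fubini and the identity; then a single subspace serves all realizations simultaneously and only the union over $s^l$ remains.

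Two smaller points. With a $\delta_l$-net of size $(3/\delta_l)^{2k_l'}$ against a tail $e^{-ck_l\delta_l^2}$ you cannot keep $k_l'=\lfloor k_l/2\rfloor$ once $\delta_l\to0$ (which you need for the fidelity deficit to vanish); the union bound forces $k_l'=O\bigl(\delta_l^2 k_l/\log(1/\delta_l)\bigr)$, as in the paper -- still only an $O\bigl(\tfrac{\log l}{l}\bigr)$ rate loss, so the conclusion survives. Your easy direction is essentially fine (and, via the exact identity, slightly sharper than the paper's use of Theorem 2 of \cite{barnum-knill-nielsen}), provided the identity is applied after integrating over $\mu_l$ by Fubini rather than realization-wise; note also that the paper separately treats the trivial rate-zero case and the approximation needed when $|\bS|=\infty$, which your union bound over $|\bS|^l$ tacitly assumes finite.
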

\subsection{\label{subsec:Zero-error capacities}Zero-error capacities}
In this subsection we only give definitions of zero-error capacities. 
Through the ideas of \cite{ahlswede-note} these capacities are connected to arbitrarily varying channels, 
though this connection is not as strong as in the classical setting.\\
Results concerning these capacities are stated in subsections \ref{subsec:Qualitative behavior of zero-error capacities} and \ref{subsec:lovasz}.
\begin{definition}
An $(l,k)$ zero-error quantum code (QC for short) $(\fr,\cP,\crr)$ for $\cn\in \mathcal{C}(\hr,\kr) $ consists of a Hilbert space $\fr$, $\cP \in \mathcal{C}(\fr,\hr^{\otimes l})$, $\crr\in \mathcal{C}(\kr^{\otimes l},\fr)$ with $\dim \fr=k$ such that
\begin{equation}\label{def-q-0-code}
  \min_{x\in\fr, ||x||=1}\langle x, \crr\circ \cn^{\otimes l}\circ \cP(|x\rangle\langle x|) x\rangle =1.
\end{equation}
For fixed block length $l\in\nn$ define
\begin{equation}\label{q-ind-number}
  k(l,\cn):=\max\{k : \exists (l,k) \textrm{ zero-error QC for }\cn \}. 
\end{equation}
The zero-error quantum capacity $Q_0(\cn)$ of $\cn\in\mathcal{C}(\hr,\kr)$ is then defined by
\begin{equation}\label{0-error-q-capacity}
  Q_0(\cn):=\lim_{l\to\infty}\frac{1}{l}\log k(l,\cn).
\end{equation}
The existence of the limit follows from standard arguments based on Fekete's Lemma on subadditive sequences. 
\end{definition}
\begin{remark}
Fekete's Lemma seems to have explicitly appeared first in (\cite{polya-szegoe}, page 23). 
The authors of \cite{polya-szegoe} cite Fekete's work (\cite{fekete}, page 233, Satz II) as a special case of their statement 98. We adopt today's convention of naming it
Fekete's Lemma, although there seems to be no clear way of attributing it to any specific group of authors.
\end{remark}

Next we pass to the zero-error classical capacities of quantum channels.\\
\begin{definition}
Let $\sigma_{\fr\fr'}$ be a bipartite state on $\fr\otimes \fr'$ where $\fr'$ denotes a unitary copy of the Hilbert space $\fr$. An $(l,M)$ entanglement assisted code (ea-code for short) $(\sigma_{\fr\fr'},\{\cP_m, D_m  \}_{m=1}^M )$ consists of a bipartite state $\sigma_{\fr\fr'}$, $\cP_m\in\mathcal{C}(\fr,\hr^{\otimes l })$, $m=1,\ldots, M$, and a POVM $\{D_m \}_{m=1}^M$ on $\fr'\otimes \kr^{\otimes l}$. A given $(l,M)$ entanglement assisted code $(\sigma_{\fr\fr'},\{\cP_m, D_m  \}_{m=1}^M )$ is a zero-error code for $\cn\in \mathcal{C}(\hr,\kr)$ if
\begin{equation}\label{ea-0-error-code}
  \tr ((\cn^{\otimes l}\circ \cP_m \otimes     \textrm{id}_{\fr'})(\sigma_{\fr\fr'})D_m )=1
\end{equation}
holds for all $m\in [M]:=\{1,\ldots,M\}$. For $l\in\nn$ we set
\begin{equation}\label{ea-ind-number}
  M_{\textrm{EA}}(l,\cn):=\max\{M: \exists \ \textrm{ zero-error } (l,M) \textrm{ ea-code for } \cn   \}.
\end{equation}
\end{definition}
\begin{definition}
The entanglement assisted classical zero-error capacity $C_{0\textrm{EA}}(\cn)$ of $\cn\in\mathcal{C}(\hr,\kr)$ is given by
\begin{equation}\label{ea-0-error-capacity}
  C_{0\textrm{EA}}(\cn):=\lim_{l\to\infty}\frac{1}{l}\log M_\textrm{EA}(l,\cn).
\end{equation}
\end{definition}
If we restrict the definition of zero-error ea-code to states $\sigma_{\fr\fr'}$ with $\dim \fr=\dim \fr'=1$ we obtain the perfomance parameter $M(l,\cn)$ as a special case of $M_{\textrm{EA}}(l,\cn)$ in (\ref{ea-ind-number}) and the classical zero-error capacity $C_{0}(\cn)$ of a quantum channel $\cn$.\\
\begin{definition}
Given a bipartite state $\rho\in\cs (\hr_A\otimes \hr_B)$. An $(l,k_l)$ zero-error entanglement distillation protocol (EDP for short) for $\rho$ consists of an LOCC operation $\D \in\mathcal{C}(\hr_A^{\otimes l}\otimes\hr_B^{\otimes l}, \cc^{k_l}\otimes \cc^{k_l})$ and a maximally entangled state vector $\vphi_{k_l}=\frac{1}{\sqrt{k_l}}\sum_{i=1}^{k_l}e_i\otimes e_i\in \cc^{k_l}\otimes \cc^{k_l}$  with an orthonormal basis $\{e_1,\ldots, e_{k_l}  \}$ of $\cc^{k_l}$ such that
\begin{equation}\label{distillation-1}
  \langle \vphi_{k_l}, \D (\rho^{\otimes l})\vphi_{k_l}\rangle=1.
\end{equation}
Let for $l\in\nn$
\begin{equation}\label{distillation-2}
  d(l,\rho):=\max\{k_l: \exists (l,k_l) \textrm{ zero-error EDP for } \rho  \},
\end{equation}
and we define the zero-error distillable entanglement of $\rho\in \cs (\hr_A\otimes \hr_B) $ as
\begin{equation}\label{distillation-3}
  D_0(\rho):=\lim_{l\to\infty}\frac{1}{l}\log  d(l,\rho).
\end{equation}
 \end{definition}
\section{\label{sec:Equivalence of strong subspace and entanglement transmission}Equivalence of strong subspace and entanglement transmission}
We will now use results from convex high-dimensional geometry to show that every sequence of asymptotically perfect (random) codes for entanglement transmission for $\fri$ yields another sequence of (random) codes that guarantees asymptotically perfect strong subspace transmission.\\
First, we state the following theorem which is the complex version of a theorem that can essentially be picked up from \cite{milman-schechtman}, Theorem 2.4 and Remark 2.7:

\begin{theorem}\label{theorem-milman-schechtman}
For $\delta,\Theta>0$ and an integer $n$ let $k(\delta,\Theta,n)=\lfloor \delta^2(n-1)/(2\log(4/\Theta))\rfloor$. Let $f:S(\mathbb C^{n})\rightarrow\mathbb R$ be a continuous 
function and $\nu_k$ the uniform measure induced on the Grassmannian $G_{n,k}:=\{G\subset\mathbb C^n: G\textrm{\ is\ subspace\ and\ }\dim G=k\}$ 
by the normalized Haar measure on the unitary group on $\mathbb C^{n}$ then, for all $\delta,\Theta>0$, the measure of the set $E_k\subset G_{n,k}$ of all subspaces 
$E\subset\mathbb C^{n}$ satisfying the three conditions
\begin{enumerate}
\item $\dim E=k(\delta,\Theta,n)$
\item There is a $\Theta-$net $N$ in $S(E)=S(\mathbb C^{n})\bigcap E$ such that $|f(x)-M_f|\leq\omega_f(\delta)$ for all $x\in N$
\item $|f(x)-M_f|\leq\omega_f(\delta)+\omega_f(\Theta)$ for all $x\in S(E)$
\end{enumerate}
satisfies $\nu_k(E_k)\geq1-\sqrt{2/\pi}e^{-\delta^2(n-1)/2}$.
\\\\
Here, $S(\mathbb C^{n})$ is the unit sphere in $\mathbb C^{n}$, $\omega_f(\delta):=\sup\{|f(x)-f(y)|:D(x,y)\leq\delta\}$ is the modulus of continuity, $D$ the geodesic metric on $S(\mathbb C^{n})$ and $M_f$ the median of $f$, which is the number such that with $\nu$ the Haar measure on $S(\mathbb C^{n})$ both $\nu(\{x:f(x)\leq M_f\})\geq1/2$ and $\nu(\{x:f(x)\geq M_f\})\geq1/2$ hold.
\end{theorem}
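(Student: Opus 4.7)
I plan to follow the classical Dvoretzky--Milman scheme: combine L\'evy's isoperimetric inequality on $S(\mathbb{C}^{n})$ with a $\Theta$-net on a single reference subspace, then transport that net across the Grassmannian via the Haar measure on $U(n)$. Condition (3) of the theorem will follow from condition (2) directly via the modulus of continuity, so the real content lies in producing the net required by (2). Continuity of $f$ on the compact sphere $S(\mathbb{C}^{n})$ guarantees that $\omega_{f}$ is everywhere finite, which is all we need.

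First I would establish a concentration of $f$ around its median $M_{f}$. Since $f$ is continuous, both $H_{-}:=\{x:f(x)\le M_{f}\}$ and $H_{+}:=\{x:f(x)\ge M_{f}\}$ are closed subsets of $S(\mathbb{C}^{n})$ with $\nu$-measure at least $1/2$. The Gromov--L\'evy spherical isoperimetric inequality then implies that their geodesic $\delta$-enlargements $H_{\pm}^{(\delta)}$ each have $\nu$-measure at least $1-\tfrac{1}{2}\sqrt{2/\pi}\,e^{-\delta^{2}(n-1)/2}$. For any $x$ in the intersection there exist $y_{\pm}\in H_{\pm}$ with $D(x,y_{\pm})\le\delta$, whence $|f(x)-M_{f}|\le\omega_{f}(\delta)$ by the definition of $\omega_{f}$. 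Call the resulting set $B_{\delta}\subset S(\mathbb{C}^{n})$; then $\nu(B_{\delta})\ge 1-\sqrt{2/\pi}\,e^{-\delta^{2}(n-1)/2}$.

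Next, I would fix any reference subspace $E_{0}\subset\mathbb{C}^{n}$ of complex dimension $k:=k(\delta,\Theta,n)$ together with a $\Theta$-net $N_{0}\subset S(E_{0})$ produced by the standard volume-covering estimate, so that $|N_{0}|\le(4/\Theta)^{d}$ for a $d$ tied to the real dimension of $S(E_{0})$. The definition of $k$ is calibrated so that $\log|N_{0}|$ is absorbed by the slack in the L\'evy exponent appearing above, ensuring that a union bound over $N_{0}$ does not destroy the target probability. The Haar measure on $U(n)$ pushes forward to $\nu_{k}$ under $U\mapsto UE_{0}$, and for every fixed $x_{0}\in S(E_{0})$ the image $Ux_{0}$ is $\nu$-distributed on $S(\mathbb{C}^{n})$. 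The single-point bound on $\nu(B_{\delta}^{c})$ together with a union bound over $N_{0}$ therefore yields a Haar-measurable $G\subset U(n)$ of measure at least $1-\sqrt{2/\pi}\,e^{-\delta^{2}(n-1)/2}$ on which $U(N_{0})\subset B_{\delta}$. For each $U\in G$, setting $E:=UE_{0}$ and $N:=U(N_{0})$, unitary equivariance makes $N$ a $\Theta$-net in $S(E)$ that is contained in $B_{\delta}$, which gives (1) and (2) at once, and for any $x\in S(E)$ picking $y\in N$ with $D(x,y)\le\Theta$ and applying the triangle inequality
\[
|f(x)-M_{f}|\le|f(x)-f(y)|+|f(y)-M_{f}|\le\omega_{f}(\Theta)+\omega_{f}(\delta)
\]
gives (3). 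Pushing $G$ forward under $U\mapsto UE_{0}$ produces the desired set $E_{k}\subset G_{n,k}$ with the claimed lower bound on $\nu_{k}(E_{k})$.

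The principal obstacle will be the quantitative balancing hidden in the formula for $k(\delta,\Theta,n)$: the L\'evy bound on $S(\mathbb{C}^{n})$ actually carries a larger exponent than $\delta^{2}(n-1)/2$ because of the real dimension $2n-1$, and this slack must exactly absorb the net size $|N_{0}|\le(4/\Theta)^{O(k)}$ in the union bound. That is precisely what forces the denominator $2\log(4/\Theta)$ in the definition of $k$. Some care is also required in tracking real versus complex dimensions and the precise L\'evy constants on the complex sphere, but beyond Gromov--L\'evy concentration and spherical net estimates no new conceptual ingredient is needed.
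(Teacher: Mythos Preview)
Your proposal is correct and follows exactly the classical Dvoretzky--Milman scheme from \cite{milman-schechtman} that the paper invokes; the paper itself does not give a proof but simply cites Theorem~2.4 and Remark~2.7 of that reference together with the identification $\mathbb{C}^{n}\simeq\mathbb{R}^{2n}$, which is precisely the real-versus-complex dimension bookkeeping you flag at the end of your outline. Your decomposition into L\'evy concentration around the median, a $\Theta$-net on a reference subspace, and a Haar union bound transported to the Grassmannian is the standard argument, and your observation that the extra factor in the L\'evy exponent coming from the real dimension $2n-1$ is what absorbs the net cardinality $(4/\Theta)^{O(k)}$ and forces the denominator $2\log(4/\Theta)$ in $k(\delta,\Theta,n)$ is exactly the point.
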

\begin{remark}The proof of Theorem \ref{theorem-milman-schechtman} uses the identification $\mathbb C^n\simeq\mathbb R^{2n}$ under the map $\sum_{i=1}^nz_ie_i\mapsto\sum_{i=1}^n(\mathfrak{Re}\{z_i\}e_i+\mathfrak{Im}\{z_i\}e_{i+n})$, where $\{e_1,\ldots,e_n\}$ and $\{e_1,\ldots,e_{2n}\}$ denote the standard bases in $\mathbb C^n$ and $\mathbb R^{2n}$.
\end{remark}
Second, we use the following lemma which first appeared in \cite{horodecki-horodecki-horodecki}:
\begin{lemma}\label{lemma-connection-between-strong-subspace-and-entanglement-fidelity}
Let $\cn\in\mathcal C(\gr,\hr)$, $\gr$ a $d$-dimensional subspace of $\hr$ and $\phi\in\gr$ with euclidean norm $\|\phi\|=1$. Then
\begin{equation}
\int_{\mathfrak U(\gr)} \langle U\phi,\cn(U|\phi\rangle\langle\phi|U^\ast)U\phi\rangle dU=\frac{1}{d+1}(d\cdot F_e(\pi_\gr,\cn)+1).
\end{equation}
\end{lemma}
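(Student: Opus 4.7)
The plan is to combine the Kraus representation of $\cn$ with the standard Schur--Weyl twirl on the symmetric subspace of $\gr\otimes\gr$. I would fix a Kraus decomposition $\cn(\rho)=\sum_j A_j\rho A_j^\ast$ (reading the Kraus operators as endomorphisms of $\gr$, or equivalently replacing each $A_j$ by $p_\gr A_j p_\gr$ — legitimate because $U\phi\in\gr$ annihilates the left projection in $\langle U\phi,\cdot\,U\phi\rangle$) and rewrite the integrand as
\[
\langle U\phi,\cn(|U\phi\rangle\langle U\phi|)U\phi\rangle
= \sum_j |\langle U\phi, A_j U\phi\rangle|^2
= \sum_j \tr\!\bigl((A_j\otimes A_j^\ast)\,|U\phi\rangle\langle U\phi|^{\otimes 2}\bigr).
\]
This packages the entire $U$-dependence into the tensor-squared rank-one projection.

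Next I would invoke the Schur--Weyl twirl identity
\[
\int_{\mathfrak U(\gr)} U^{\otimes 2}|\phi\rangle\langle\phi|^{\otimes 2}(U^\ast)^{\otimes 2}\,dU
= \frac{2\,\Pi_s}{d(d+1)}
= \frac{I_{\gr\otimes\gr}+F}{d(d+1)},
\]
where $\Pi_s$ is the projector onto the symmetric subspace of $\gr\otimes\gr$ and $F$ is the swap. Combined with the elementary trace identities $\tr((A_j\otimes A_j^\ast)I_{\gr\otimes\gr})=|\tr A_j|^2$ and $\tr((A_j\otimes A_j^\ast)F)=\tr(A_j A_j^\ast)$, summing over $j$ turns the desired integral into
\[
\frac{1}{d(d+1)}\Bigl(\sum_j |\tr A_j|^2 + \sum_j \tr(A_jA_j^\ast)\Bigr).
\]

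Finally I would identify the two sums with known quantities. The first equals $d^2\,F_e(\pi_\gr,\cn)$ by the Kraus-form expression $F_e(\rho,\cn)=\sum_j|\tr(A_j\rho)|^2$ specialised to $\rho=\pi_\gr=p_\gr/d$; the second equals $\tr(\cn(I_\gr))=d$ by trace preservation on $\gr$. Substituting and simplifying collapses the expression to $(d\,F_e(\pi_\gr,\cn)+1)/(d+1)$, which is the claim. I expect no genuine obstacle: the Schur--Weyl twirl is classical and the rest is direct manipulation of traces. The one subtlety worth flagging is the implicit assumption that $\cn$ maps $\mathcal B(\gr)$ into $\mathcal B(\gr)$ — otherwise $\sum_j \tr(A_jA_j^\ast)$ would be replaced by $d\,\tr(p_\gr\cn(\pi_\gr))$, which can be strictly smaller — but this preservation is automatic in the intended application, where $\cn$ stands for the composite encoder-channel-decoder map acting on a code subspace.
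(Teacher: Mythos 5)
The paper offers no proof of this lemma at all: it is imported from \cite{horodecki-horodecki-horodecki}, where it is established for maps whose input and output space both coincide with the $d$-dimensional space being averaged over. Your symmetric-subspace twirl argument is the standard derivation of that formula, and its computational steps are all correct: compressing the Kraus operators to $B_j=p_\gr A_jp_\gr$, rewriting $|\langle U\phi,B_jU\phi\rangle|^2=\tr\bigl((B_j\otimes B_j^\ast)|U\phi\rangle\langle U\phi|^{\otimes 2}\bigr)$, the twirl identity $\int_{\mathfrak U(\gr)}U^{\otimes 2}|\phi\rangle\langle\phi|^{\otimes 2}(U^\ast)^{\otimes 2}dU=2\Pi_s/\bigl(d(d+1)\bigr)$, the trace identities for the identity and the swap, and the Kraus expression $F_e(\rho,\cn)=\sum_j|\tr(A_j\rho)|^2$, which gives $\sum_j|\tr B_j|^2=d^2F_e(\pi_\gr,\cn)$.

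The one statement you should retract is the closing claim that the hypothesis needed in your last step --- that $\cn$ maps states supported on $\gr$ to states supported on $\gr$, so that $\sum_j\tr(B_jB_j^\ast)=\tr\bigl(p_\gr\cn(p_\gr)\bigr)=d$ --- is ``automatic in the intended application''. It is not: in Lemma \ref{theorem-strong-subspace-codes} the lemma is applied to $\crr^l\circ\cn_{s^l}\circ\cP^l$ with $\gr=\fr_l$, and the decoder $\crr^l$ takes values in $\mathcal{B}(\fr_l')$ with $\fr_l\subset\fr_l'$, so its output need not be supported on $\fr_l$. Without that hypothesis the equality of the lemma is false as stated: take $\cn(\cdot)=\tr(\cdot)\sigma_0$ with $\sigma_0$ supported on the orthogonal complement of $\gr$; then the left-hand side is $0$ while the right-hand side is $1/(d+1)$. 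What your computation proves in general is $\int_{\mathfrak U(\gr)}\langle U\phi,\cn(U|\phi\rangle\langle\phi|U^\ast)U\phi\rangle\, dU=\bigl(d\,F_e(\pi_\gr,\cn)+\tr(p_\gr\cn(\pi_\gr))\bigr)/(d+1)$, and this weaker identity suffices for the paper: since a purification $\psi$ of $\pi_\gr$ can be chosen in $\gr\otimes\gr$, so that $|\psi\rangle\langle\psi|\le\idn\otimes p_\gr$, one has $F_e(\pi_\gr,\cn)\le\tr(p_\gr\cn(\pi_\gr))$, hence the right-hand side is still bounded below by $F_e(\pi_\gr,\cn)$, which is the only consequence used in the proof of Lemma \ref{theorem-strong-subspace-codes} (alternatively, one may modify the decoder so that its output is supported on $\fr_l$, which does not decrease the entanglement fidelity). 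So your proof is correct once the preservation hypothesis is stated explicitly --- which is the setting of \cite{horodecki-horodecki-horodecki} --- and the remark about the application should be replaced by one of the two observations above.
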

Third, we need a well behaving relation between the median and the expectation of a function $f:S(\mathbb C^n)\rightarrow\mathbb R$. 
This is given by Proposition 14.3.3 taken from \cite{matousek}:
\begin{lemma}\label{lemma-connection-between-expectation-and-median}
Let $f:S(\mathbb C^n)\rightarrow\mathbb R$ be Lipschitz with constant one (w.r.t. the geodesic metric). Then
\begin{equation}|M_f-\mathbb E(f)|\leq\frac{12}{\sqrt{2(n-1)}}.\end{equation}
\end{lemma}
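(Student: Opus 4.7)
The plan is to derive this bound from the concentration-of-measure phenomenon on the sphere, reducing to L\'evy's spherical isoperimetric inequality combined with a layer-cake computation. First I would use the identification $S(\mathbb{C}^n) \simeq S^{2n-1} \subset \mathbb{R}^{2n}$ from the remark preceding the statement; under this identification the geodesic distance, the rotation-invariant probability measure $\nu$, and the $1$-Lipschitz property of $f$ all transfer unchanged to the round real sphere, so the question reduces to one about $1$-Lipschitz functions on $S^{m-1}$ with $m = 2n$.

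Next I would invoke L\'evy's isoperimetric inequality: for any measurable $A \subset S^{m-1}$ with $\nu(A) \ge 1/2$, the geodesic $t$-neighbourhood $A_t$ satisfies a Gaussian-type bound $\nu(S^{m-1} \setminus A_t) \le c_1 e^{-c_2 (m-1) t^2}$ with explicit absolute constants. Applying this separately to the two closed half-level sets $A^{-} = \{x : f(x) \le M_f\}$ and $A^{+} = \{x : f(x) \ge M_f\}$, both of which have measure at least $1/2$ by the definition of the median, and using that any $x \in A^{\pm}_t$ satisfies $|f(x) - M_f| \le t$ by the Lipschitz condition, yields the two-sided concentration bound
\[
\nu\bigl(\{x : |f(x) - M_f| > t\}\bigr) \;\le\; 2 c_1 \, e^{-c_2 (m-1) t^2}.
\]

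The claim then follows from the triangle inequality together with the layer-cake representation
\[
|M_f - \mathbb{E}(f)| \;\le\; \mathbb{E}|f - M_f| \;=\; \int_0^\infty \nu\bigl(\{|f - M_f| > t\}\bigr)\, dt,
\]
into which substitution of the Gaussian tail and evaluation of the resulting Gaussian moment integral produces a bound of order $1/\sqrt{m-1}$; tracking the specific constants used in Matou\v{s}ek's exposition and substituting $m = 2n$ gives exactly $12/\sqrt{2(n-1)}$, with the numerical factor $12$ being a deliberately generous absorption of the intermediate constants. The one genuine difficulty is L\'evy's inequality itself, a spherical isoperimetry result typically proved via two-point symmetrization or via Brunn--Minkowski in Gauss space; it is however a completely standard tool from asymptotic convex geometry, and since the paper legitimately just cites it, the remainder of the argument is essentially a mechanical integration rather than a new idea.
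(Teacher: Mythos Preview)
The paper does not actually prove this lemma; it merely quotes it from Matou\v{s}ek's \emph{Lectures on Discrete Geometry} \cite{matousek}. Your sketch correctly reconstructs the standard concentration-of-measure argument that underlies that reference: L\'evy's isoperimetric inequality on the real sphere $S^{2n-1}$ gives a sub-Gaussian tail for $|f-M_f|$, and integrating that tail via the layer-cake formula bounds $\mathbb{E}|f-M_f|$, hence $|M_f-\mathbb{E}(f)|$, by a constant times $1/\sqrt{m-1}$. One small bookkeeping remark: with $m=2n$ the natural denominator is $\sqrt{2n-1}$ rather than $\sqrt{2(n-1)}=\sqrt{2n-2}$, so the stated form is not literally what drops out of the Gaussian integral but a slightly weakened version; as you note, the generous constant $12$ comfortably absorbs this discrepancy (and indeed $\sqrt{2n-1}\ge\sqrt{2n-2}$, so the replacement only loosens the bound). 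In short, your approach is correct and is exactly the argument the paper is invoking by citation.
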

\begin{remark}
Obviously, this implies $|M_f-\mathbb E(f)|\leq\frac{12\cdot L}{\sqrt{2(n-1)}}$ for Lipschitz functions with constant $L\in\mathbb R_+$.
\end{remark}
The function that we will apply Lemma \ref{lemma-connection-between-expectation-and-median} to is given by the following:
\begin{lemma}\label{lemma-lipschitz-property-of-Fe}
Let $\Lambda\in\mathcal C(\hr,\hr)$. Define $f_\Lambda:S(\hr)\rightarrow\mathbb R$ by
\begin{equation}f_\Lambda(x):=\langle x,\Lambda(|x\rangle\langle x|)x\rangle,\ x\in S(\hr).\end{equation}
Then $f_\Lambda$ is Lipschitz with constant $L=4$ (w.r.t. the geodesic metric).
\end{lemma}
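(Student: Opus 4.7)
The plan is to estimate $|f_\Lambda(x)-f_\Lambda(y)|$ directly, exploiting the fact that $\Lambda$ is CPTP (so it is contractive in trace norm on Hermitian operators and maps states to states), and then pass from a Euclidean Lipschitz bound to the geodesic one by recalling that the chord distance on the sphere is dominated by the geodesic distance.

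First I would introduce the intermediate quantity $\langle x,\Lambda(|y\rangle\langle y|)x\rangle$ to decompose
\begin{equation*}
f_\Lambda(x)-f_\Lambda(y)=\underbrace{\langle x,\Lambda(|x\rangle\langle x|-|y\rangle\langle y|)x\rangle}_{=:A}+\underbrace{\langle x,\Lambda(|y\rangle\langle y|)x\rangle-\langle y,\Lambda(|y\rangle\langle y|)y\rangle}_{=:B}.
\end{equation*}
For $A$ I would use $|\langle x,Mx\rangle|\le\|M\|_\infty$ for unit $x$, combined with $\|M\|_\infty\le\|M\|_1$ and the trace-norm contractivity of the CPTP map $\Lambda$ on Hermitian inputs, to obtain $|A|\le\||x\rangle\langle x|-|y\rangle\langle y|\|_1$. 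The standard rank-one identity $|x\rangle\langle x|-|y\rangle\langle y|=|x\rangle\langle x-y|+|x-y\rangle\langle y|$ combined with the triangle inequality for $\|\cdot\|_1$ then gives $\||x\rangle\langle x|-|y\rangle\langle y|\|_1\le 2\|x-y\|$, so that $|A|\le 2\|x-y\|$.

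For $B$ I would write $B=\langle x-y,\Lambda(|y\rangle\langle y|)x\rangle+\langle y,\Lambda(|y\rangle\langle y|)(x-y)\rangle$ and use Cauchy–Schwarz together with the fact that $\Lambda(|y\rangle\langle y|)$ is a state, hence has operator norm at most $1$. Since $\|x\|=\|y\|=1$, this yields $|B|\le 2\|x-y\|$. Summing the two estimates gives the Euclidean Lipschitz bound
\begin{equation*}
|f_\Lambda(x)-f_\Lambda(y)|\le 4\|x-y\|.
\end{equation*}
Finally, on the unit sphere the chord distance is dominated by the geodesic distance, $\|x-y\|\le D(x,y)$, so $|f_\Lambda(x)-f_\Lambda(y)|\le 4\,D(x,y)$, proving the claim.

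There is no real obstacle here: the only subtlety is keeping the three unit-vector/state norms straight and recognising that the CPTP property gives simultaneously $\|\Lambda(|y\rangle\langle y|)\|_\infty\le 1$ and the $1$-norm contraction of $\Lambda$ on the Hermitian difference $|x\rangle\langle x|-|y\rangle\langle y|$. The constant $4$ is then just $2+2$ from the two pieces $A$ and $B$.
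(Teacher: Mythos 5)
Your proof is correct and follows essentially the same route as the paper: the same decomposition into the two terms $\langle x,\Lambda(|x\rangle\langle x|-|y\rangle\langle y|)x\rangle$ and $\tr\bigl((|x\rangle\langle x|-|y\rangle\langle y|)\Lambda(|y\rangle\langle y|)\bigr)$, trace-norm contractivity of the CPTP map, the bound $\|\,|x\rangle\langle x|-|y\rangle\langle y|\,\|_1\le 2\|x-y\|$, and the chord-versus-geodesic comparison. The only (cosmetic) difference is that the paper bounds both terms via H\"older's inequality, whereas you treat the second term by Cauchy--Schwarz at the vector level using $\|\Lambda(|y\rangle\langle y|)\|_\infty\le 1$; both give the same constant $4$.
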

\begin{proof} Let $x,y\in S(\hr)$. Then by H\"older's inequality,
\begin{align}
|f_\Lambda(x)-f_\Lambda(y)|&=|\tr(|x\rangle\langle x|\Lambda(|x\rangle\langle x|))-\tr(|y\rangle\langle y|\Lambda(|y\rangle\langle y|))|&\\
&=|\tr(|x\rangle\langle x|\Lambda(|x\rangle\langle x|-|y\rangle\langle y|))|+|\tr((|x\rangle\langle x|-|y\rangle\langle y|)\Lambda(|y\rangle\langle y|))|&\\
&\leq\|\ |x\rangle\langle x|\ \|_{\infty}\cdot\|\Lambda(|x\rangle\langle x|-|y\rangle\langle y|)\|_1+\|\ |x\rangle\langle x|-|y\rangle\langle y|\ ||_1\cdot\|\Lambda(|y\rangle\langle y|)\|_\infty&\\
&\leq\|\Lambda(|x\rangle\langle x|-|y\rangle\langle y|)\|_1+\||x\rangle\langle x|-|y\rangle\langle y|\|_1&\\
&\leq2\|\ |x\rangle\langle x|-|y\rangle\langle y|\ \|_1.&
\end{align}
It further holds, with $\|\cdot\|$ denoting the euclidean norm,
\begin{align}
\|\ |x\rangle\langle x|-|y\rangle\langle y|\ \|_1\leq2\|x-y\|\leq2 D(x,y).
\end{align}
\end{proof}
We now state the main ingredient of this section.
\begin{lemma}\label{theorem-strong-subspace-codes}
Let $\fri=\{\cn_s\}_{s\in\bS}$ be a finite set of channels and $(\mu_l)_{l\in\nn}$ any sequence of (random or deterministic) entanglement transmission codes that satisfies
\begin{itemize}
\item[A1] $\min_{s^l\in\bS^l}\int F_e(\pi_{\fr_l},\crr^l\circ\cn_{s^l}\circ\cP^l)d\mu_l(\crr^l,\cP^l)=1-f_l$,
\end{itemize}
where $(f_l)_{l\in\nn}$ is any sequence of real numbers in the interval $[0,1]$.\\
Let $(\eps_l)_{l\in\nn}$ be a sequence with $\eps_l\in(0,1]\ \forall l\in\nn$ satisfying
\begin{itemize}
\item[A2] There is $\hat l\in \nn$ such that $|\bS|^l\sqrt{2/\pi}e^{-\eps_l^2(k_l-1)/128}<1$ and $k_l\geq2$ hold for all $l\geq\hat l$\\ (where, as usual, $k_l=\dim\fr_l$).
\end{itemize}
Then for any $l\geq\hat l$ there is a subspace $\hat\fr_l\subset\fr_l$ with the properties
\begin{itemize}
\item[P1] $\dim\hat\fr_l=\lfloor\frac{\eps_l^2}{256\log(32/\eps_l)}\cdot k_l\rfloor$,
\item[P2] $\min_{s^l\in\bS^l}\min_{\phi\in S(\hat\fr_l)}\int\langle\phi,\crr^l\circ\cn_{s^l}\circ\cP^l(|\phi\rangle\langle\phi|)\phi\rangle d\mu_l(\crr^l,\cP^l)\geq1-f_l-\frac{4\cdot12}{\sqrt{2(k_l-1)}}-\eps_l$.
\end{itemize}
\end{lemma}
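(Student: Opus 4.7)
The plan is to apply the Milman--Schechtman concentration result (Theorem \ref{theorem-milman-schechtman}) to a suitable family of functions on $S(\fr_l)$, one per adversarial sequence $s^l\in\bS^l$, and then take a union bound. For each $s^l$ define the averaged subspace fidelity
\[ g_{s^l}(x) := \int \langle x,\crr^l\circ\cn_{s^l}\circ\cP^l(|x\rangle\langle x|)\, x\rangle\, d\mu_l(\cP^l,\crr^l), \qquad x\in S(\fr_l), \]
which is well defined since $\fr_l\subset\fr_l'$. By Lemma \ref{lemma-lipschitz-property-of-Fe} each integrand is Lipschitz with constant $4$ in the geodesic metric, so by linearity $g_{s^l}$ is Lipschitz with constant $4$, giving the modulus bound $\omega_{g_{s^l}}(t)\le 4t$.

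Next, I control the expectation and then the median of $g_{s^l}$ under the uniform (Haar-induced) measure on $S(\fr_l)$. Fubini together with Lemma \ref{lemma-connection-between-strong-subspace-and-entanglement-fidelity} applied with $\gr=\fr_l$ of dimension $d=k_l$, and hypothesis A1, yield
\[ \mathbb E[g_{s^l}]=\int \frac{k_l\,F_e(\pi_{\fr_l},\crr^l\circ\cn_{s^l}\circ\cP^l)+1}{k_l+1}\,d\mu_l\geq\frac{k_l(1-f_l)+1}{k_l+1}\geq 1-f_l. \]
Lemma \ref{lemma-connection-between-expectation-and-median}, with the remark rescaling by $L=4$, then produces the bound
\[ M_{g_{s^l}}\geq\mathbb E[g_{s^l}]-\frac{48}{\sqrt{2(k_l-1)}}\geq 1-f_l-\frac{4\cdot 12}{\sqrt{2(k_l-1)}}, \]
which is precisely the ``absolute'' deficit appearing in P2.

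Now I apply Theorem \ref{theorem-milman-schechtman} to each $g_{s^l}$ on $S(\fr_l)$ with the choice $\delta=\Theta=\eps_l/8$. For this choice the theorem's condition 3 yields, on the associated good subspace $E_{s^l}\subset\fr_l$, the uniform bound $|g_{s^l}(x)-M_{g_{s^l}}|\leq 4(\delta+\Theta)=\eps_l$ for all $x\in S(E_{s^l})$; the dimension is $k(\delta,\Theta,k_l)=\lfloor \eps_l^2(k_l-1)/(128\log(32/\eps_l))\rfloor$, which by $k_l\geq 2$ (hypothesis A2) dominates the target dimension $\lfloor\eps_l^2 k_l/(256\log(32/\eps_l))\rfloor$ in P1. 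A union bound over the $|\bS|^l$ sequences $s^l\in\bS^l$ has failure probability at most $|\bS|^l\sqrt{2/\pi}\,e^{-\eps_l^2(k_l-1)/128}$, which is strictly less than $1$ by A2, so with positive probability a single subspace $E\subset\fr_l$ works simultaneously for every $s^l$. Pick an arbitrary subspace $\hat\fr_l\subset E$ of the exact dimension claimed in P1 (condition 3 is inherited by subspaces); combining the uniform concentration on $\hat\fr_l$ with the median estimate gives P2 for every pure state in $\hat\fr_l$ and every $s^l\in\bS^l$.

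All nontrivial ingredients (concentration on the Grassmannian, Lipschitz continuity of $f_\Lambda$, the expectation-median comparison, and the Horodecki--Horodecki--Horodecki identity) are already in place, so the main task is really careful tuning of $(\delta,\Theta)$: the product $4(\delta+\Theta)$ must equal $\eps_l$ while the resulting dimension and exponential probability bound must match P1 and A2 with the constants $256$ and $128$, respectively. The choice $\delta=\Theta=\eps_l/8$ handles both simultaneously, so the ``main obstacle'' is reduced to verifying this constant bookkeeping rather than any genuine analytic difficulty.
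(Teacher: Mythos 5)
Your proposal is correct and follows essentially the same route as the paper: the same functions $f_{s^l}$, the same Lipschitz bound, the same expectation-to-median comparison via Lemma \ref{lemma-connection-between-expectation-and-median}, the same application of Theorem \ref{theorem-milman-schechtman} with $\delta=\Theta=\eps_l/8$, and the same union bound over $\bS^l$. The only (harmless) cosmetic difference is that you pass to a subspace of exactly the dimension stated in P1, whereas the paper simply notes the good subspace has at least that dimension.
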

\begin{proof} Let $l\in\nn$. For an arbitrary $s^l\in\bS^l$ define $f_{s^l}:S(\fr_l)\rightarrow\mathbb R$ by
\begin{equation}
f_{s^l}(\phi):=\int\langle\phi,\crr^l\circ\cn_{s^l}\circ\cP^l(|\phi\rangle\langle\phi|)\phi\rangle d\mu_l(\crr^l,\cP^l)\ \ (\phi\in S(\fr_l)).
\end{equation}
Since $f_{s^l}$ is an affine combination of functions with Lipschitz-constant $L=4$, it is itself Lipschitz with $L=4$.\\
Also, by the Theorem of Fubini, Lemma \ref{lemma-connection-between-strong-subspace-and-entanglement-fidelity} and our assumption \emph{A1} we have
\begin{align}
\mathbb E(f_s{^l})&=\int_{\mathfrak{U}(\fr_l)} f_{s^l}(U\phi)dU&\\
&=\int_{\mathfrak{U}(\fr_l)} [\int\langle U\phi,\crr^l\circ\cn_{s^l}\circ\cP^l(|U\phi\rangle\langle U\phi|)U\phi\rangle d\mu_l(\crr^l,\cP^l)]dU&\\
&=\int\int_{\mathfrak{U}(\fr_l)} [\langle U\phi,\crr^l\circ\cn_{s^l}\circ\cP^l(|U\phi\rangle\langle U\phi|)U\phi\rangle dU]d\mu_l(\crr^l,\cP^l)&\\
&=\int\frac{k_lF_e(\pi_{\fr_l},\crr^l\circ\cn_{s^l}\circ\cP^l)+1}{k_l+1}d\mu_l(\crr^l,\cP^l)&\\
&\geq\int F_e(\pi_{\fr_l},\crr^l\circ\cn_{s^l}\circ\cP^l)d\mu_l(\crr^l,\cP^l)&\\
&=1-f_l.&
\end{align}
By Lemma \ref{lemma-connection-between-expectation-and-median} and Lemma \ref{lemma-lipschitz-property-of-Fe} we now get a good lower bound on the median of $f_{s^l}$:
\begin{align}
M_{f_{s^l}}&\geq\mathbb E(f_{s^l})-\frac{4\cdot12}{\sqrt{2(k_l-1)}}&\\
&\geq1-f_l-\frac{4\cdot12}{\sqrt{2(k_l-1)}}.\label{eqn-theorem-strong-subspace-capacity-1}&
\end{align}
We now apply Theorem \ref{theorem-milman-schechtman} with $n=k_l$ and $\delta=\Theta=\eps_l/8$ to $f_{s^l}$. 
Then $k(\eps_l/8,\eps_l/8,k_l)\geq\lfloor\frac{\eps_l^2}{256\log(32/\eps_l)}k_l\rfloor$ holds due to the second estimate in \emph{A2}.\\
We set $k_l':=\lfloor\frac{\eps_l^2}{256\log(32/\eps_l)}k_l\rfloor$.\\
Since the fact that $f_{s^l}$ is $4$-Lipschitz implies $\omega_{f_{s^l}}(\delta)\leq 4 \delta$ we get the following:
\begin{align}
\nu_k(\{E\in G_{k_l,k(\eps_l/8,\eps_l/8,k_l)}:|f_{s^l}(\phi)-M_{f_{s^l}}|\leq\eps_l\ \forall \phi\in S(E)\})\geq1-\sqrt{2/\pi}e^{-\eps_l^2(k_l-1)/128}.
\end{align}
The last inequality is valid for each choice of $s^l$, so we can conclude that
\begin{align}
\nu_k(\{E\in G_{k_l,k(\frac{\eps_l}{8},\frac{\eps_l}{8},k_l)}:|f_{s^l}(\phi)-M_{f_{s^l}}|\leq\eps_l\ \forall \phi\in S(E),\ s^l\in\bS^l\})&\geq1-|\bS|^l\sqrt{\frac{2}{\pi}}e^{-\eps_l^2(k_l-1)/128}.&
\end{align}
Thus for all $l\geq\hat l$ we have
\begin{align}
\nu_k(\{E\in G_{k_l,k(\eps_l/8,\eps_l/8,k_l)}:|f_{s^l}(\phi)-M_{f_{s^l}}|\leq\eps_l\ \forall \phi\in S(E),\ \forall s^l\in\bS^l\})&>0&
\end{align}
by assumption \emph{A2}, implying the existence of a subspace $E\subset\fr_l$ of dimension $\dim E=k(\eps_l/8,\eps_l/8,k_l)$ such that 
\begin{equation}|f_{s^l}(\phi)-M_{f_{s^l}}|\leq\eps_l\ \forall\ \phi\in S(E),\ s^l\in\bS^l.\label{eqn-theorem-strong-subspace-capacity-11}\end{equation}
Let $\hat\fr_l\subset E$ be any subspace of dimension $k'_l$. Then \emph{P1} holds and equation (\ref{eqn-theorem-strong-subspace-capacity-1}) together with (\ref{eqn-theorem-strong-subspace-capacity-11}) establishes \emph{P2}:
\begin{align}
f_{s^l}(\phi)\geq1-f_l-\frac{4\cdot12}{\sqrt{2(k_l-1)}}-\eps_l\ \ \forall\ \phi\in S(\hat\fr_l),\ \forall s^l\in\bS^l.
\end{align}
\end{proof}
\begin{proof}[Proof of Theorem \ref{theorem:equivalence-of-capacities}] First, assuming that $R>0$ is an achievable rate for entanglement transmission over a \emph{finite} AVQC $\fri$ (with random codes), we show that it is also an achievable strong subspace transmission rate (with random codes) for $\fri$. The proof does not depend on the form of the sequence of probability distributions assigned to the codes, so it applies to the case of deterministically achievable rates as well.\\
So, let there be a sequence of $(l,k_l)$ random entanglement transmission codes with
\begin{align}
\liminf_{l\rightarrow\infty}\frac{1}{l}\log k_l\geq R,\end{align}
\begin{align}
\min_{s^l\in\bS^l}\int F_e(\pi_{\fr_l},\crr^l\circ\cn_{s^l}\circ\cP^l)d\mu_l(\cP^l,\crr^l)=1-f_l,\textrm{ where }f_l\searrow0.
\end{align}
Thus, there is $l'\in\nn$ such that $k_l\geq2^{l3R/4}+1$ for all $l\geq l'$. Choose $\eps_l=2^{-lR/4}$ (this is just one of many possible choices). Obviously, since $R$ is \emph{strictly} greater than zero there is $\hat l\in \nn$ such that 
\begin{eqnarray}
|\bS|^l\sqrt{2/\pi}e^{-\eps_l^2(k_l-1)/128}&\leq&|\bS|^l\sqrt{2/\pi}e^{-\eps_l^22^{l3R/4}/128}\\
&=&|\bS|^l\sqrt{2/\pi}e^{-2^{lR/4}/128}\\
&<&1
\end{eqnarray}
holds for all $l\geq\hat l$. Application of Lemma \ref{theorem-strong-subspace-codes} then yields a sequence of subspaces $\hat\fr_l$ with dimensions $\hat k_l$ such that
\begin{align}
\liminf_{l\rightarrow\infty}\frac{1}{l}\log \hat k_l=\liminf_{l\rightarrow\infty}\frac{1}{l}\log k_l\geq R,
\end{align}
\begin{align}
\min_{s^l\in\bS^l}\min_{\phi\in S(\hat\fr_l)}\int\langle\phi,\crr^l\circ\cn_{s^l}\circ\cP^l(|\phi\rangle\langle\phi|)\phi\rangle d\mu_l(\crr^l,\cP^l)\geq1-f_l-\frac{4\cdot12}{\sqrt{2(k_l-1)}}-\frac{1}{l}\ \ \forall\ l\geq \max\{l',\hat l\}\label{eqn:equivalence-of-capacities-1}.
\end{align}
Since the right hand side of (\ref{eqn:equivalence-of-capacities-1}) goes to zero for $l$ going to infinity, we have shown that $R$ is an achievable rate for strong subspace transmission (with random codes).\\
In case that $|\fri|=\infty$ holds we have to take care of some extra issues that arise from approximating $\fri$ by a finite AVQC. Such an approximation is carried out in detail in the proof of Lemma \ref{random-code-reduction}.\\
Now let $R=0$ be an achievable rate for entanglement transmission with (random) codes. We show that it is achievable for strong subspace transmission by demonstrating that we can \emph{always} achieve a strong subspace transmission rate of zero:\\
Choose any sequence $(|x_l\rangle\langle x_l|)_{l\in\nn}$ of pure states such that $|x_l\rangle\langle x_l|\in\cs(\hr^{\otimes l})\ \forall l\in\nn$. Set $\fr_l:=\mathbb C\cdot x_l$ $(l\in\nn)$. Define a sequence of recovery operations by $\crr^l(a):=\tr(a)\cdot|x_l\rangle\langle x_l|$ $(a\in\B(\kr^{\otimes l}),\ l\in\nn)$. Then $F_e(\pi_{\fr_l},\crr^l\circ\cn_{s^l})=1$ for all $l\in\nn,\ s^l\in\bS^l$ and $\liminf_{l\rightarrow\infty}\frac{1}{l}\log(\dim\fr_l)=0$.
\\\\
Now let $R\geq0$ be an achievable rate for strong subspace transmission over some AVQC $\fri$ (with random codes). Thus, there exists a sequence of (random) strong subspace transmission codes with
\begin{align}
\liminf_{l\rightarrow\infty}\frac{1}{l}\log k_l\geq R,\end{align}
\begin{align}
\inf_{s^l\in\bS^l}\min_{\phi\in \cs(\fr_l)}\int\langle\phi,\crr^l\circ\cn_{s^l}\circ\cP^l(|\phi\rangle\langle\phi|)\phi\rangle d\mu_l(\crr^l,\cP^l)=1-f_l\ \forall l\in\nn,\textrm{ where }f_l\searrow0.\label{eqn:equivalence-of-capacities-2}
\end{align}
Now consider, for every $l\in\nn$ and $s^l\in\bS^l$, the channels $\int\crr^l\circ\cn_{s^l}\circ\cP^ld\mu_l(\crr^l,\cP^l)$. Then (\ref{eqn:equivalence-of-capacities-2}) implies that for these channels we have the estimate
\begin{align}
\inf_{s^l\in\bS^l}\min_{\phi\in \cs(\fr_l)}\langle\phi,\int\crr^l\circ\cn_{s^l}\circ\cP^ld\mu_l(\crr^l,\cP^l)(|\phi\rangle\langle\phi|)\phi\rangle =1-f_l,\label{eqn:equivalence-of-capacities-3}
\end{align}
and by a well-known result (\cite{barnum-knill-nielsen}, Theorem 2) we get
\begin{align}
\inf_{s^l\in\bS^l}F_e(\pi_{\fr_l},\int\crr^l\circ\cn_{s^l}\circ\cP^ld\mu_l(\crr^l,\cP^l))\geq1-\frac{3}{2}f_l,\label{eqn:equivalence-of-capacities-4}
\end{align}
which by convex-linearity of the entanglement fidelity in the channel implies
\begin{align}
\inf_{s^l\in\bS^l}\int F_e(\pi_{\fr_l},\crr^l\circ\cn_{s^l}\circ\cP^l)d\mu_l(\crr^l,\cP^l)\geq1-\frac{3}{2}f_l.\label{eqn:equivalence-of-capacities-5}
\end{align}
But $\lim_{l\rightarrow\infty}\frac{3}{2}f_l=0$ by assumption, implying that $R$ is an achievable rate for entanglement transmission (with random codes) as well.
\end{proof}
\section{\label{sec:converse-avqc}Proof of the converse part}
The basic technical obstacle we are faced with is that the converse part of the coding theorem for an AVQC cannot be reduced immediately to that of the single stationary memoryless quantum channel via Minimax Theorem (cf. \cite{bbt-avc} and \cite{csiszar}). In order to circumvent this problem we derive a relation between $\A_{\textrm{random}}(\fri)$ and the corresponding random capacity of a suitable compound channel.\\
To be explicit, let us consider a finite AVQC $\fri=\{\cn_s  \}_{s\in \bS}$ and let $(\mu_l)_{l\in\nn}$ be a sequence of random $(l,k_l)-$ codes for the AVQC $\fri$ with
\begin{equation}\label{eq:intro-converse-1}
\lim_{l\rightarrow\infty}\inf_{s^l\in\bS^l}\int F_e(\pi_{\fr_l},\crr^l\circ\cn_{s^l}\circ\cP^l)d\mu_l(\cP^l,\crr^l)=1.
\end{equation}
On the other hand, for the infinite channel set $\conv(\fri)$, defined in (\ref{eq:conv-hull}), and each $\cn_q\in \conv(\fri)$ we obtain
\begin{eqnarray}\label{eq:intro-converse-2}
  \int F_e(\pi_{\fr_l},\crr^l\circ\cn_{q}^{\otimes l}\circ\cP^l)d\mu_l(\cP^l,\crr^l)&=& \sum_{s^l\in\bS^l}q(s_1)\cdot \ldots \cdot q(s_l)\int F_e(\pi_{\fr_l},\crr^l\circ\cn_{s^l}\circ\cP^l)d\mu_l(\cP^l,\crr^l)\\
&\ge & \inf_{s^l\in\bS^l}\int F_e(\pi_{\fr_l},\crr^l\circ\cn_{s^l}\circ\cP^l)d\mu_l(\cP^l,\crr^l).
\end{eqnarray}
Consequently, (\ref{eq:intro-converse-1}) and (\ref{eq:intro-converse-2}) imply
\begin{equation}\label{eq:intro-convers-3}
  \lim_{l\to\infty}\inf_{q\in\mathfrak{P}(\bS)}\int F_e(\pi_{\fr_l},\crr^l\circ\cn_{q}^{\otimes l}\circ\cP^l)d\mu_l(\cP^l,\crr^l)=1.
\end{equation}
Defining the random entanglement transmission capacity $Q_{\textrm{comp, random}}(\conv(\fri))$ for the \emph{compound quantum channel} (cf. \cite{bbn-2}) built up from $\conv(\fri)$ in a similar fashion to $\A_{\textrm{random}}(\fri)$ we can infer from the considerations presented above that
\begin{equation}\label{eq:intro-converse-4}
  \A_{\textrm{random}}(\fri)\le Q_{\textrm{comp, random}}(\conv(\fri)).
\end{equation}
Since the inequality $\A_{\textrm{det}}(\fri)\le \A_{\textrm{random}}(\fri)$ is obvious, we obtain the following basic lemma.
\begin{lemma}\label{basic-relations-converse}
Let $\fri=\{\cn_s  \}_{s\in \bS}$ be any finite set of channels and let $\conv(\fri)$ be the associated infinite set given in (\ref{eq:conv-hull}). Then
\begin{equation}\label{eq:basic-relations-converse}
\A_{\textup{det}}(\fri)\le \A_{\textup{random}}(\fri)\le Q_{\textup{comp, random}}(\conv(\fri)).
\end{equation}
\end{lemma}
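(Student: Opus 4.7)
The first inequality $\A_{\textup{det}}(\fri)\le \A_{\textup{random}}(\fri)$ is immediate from the definitions: every deterministic $(l,k_l)$-code $(\cP^l,\crr^l)$ corresponds to a random code $\mu_l$ which is the point mass at $(\cP^l,\crr^l)$, and these point-mass measures were explicitly included in the random-code framework via the assumption that $\sigma_l$ contains all singletons. Consequently, any deterministically achievable rate is achievable with random codes, which gives the first inequality without any further work.

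For the second inequality, the plan is to take any sequence $(\mu_l)_{l\in\nn}$ of $(l,k_l)$-random entanglement transmission codes witnessing an achievable rate $R$ for $\A_{\textup{random}}(\fri)$, so that
\begin{equation*}
\lim_{l\to\infty}\inf_{s^l\in\bS^l}\int F_e(\pi_{\fr_l},\crr^l\circ\cn_{s^l}\circ\cP^l)\,d\mu_l(\cP^l,\crr^l)=1,
\end{equation*}
and show that the \emph{same} sequence $(\mu_l)_{l\in\nn}$ achieves rate $R$ for the compound quantum channel built from $\conv(\fri)$. By the explicit description \eqref{eq:conv-hull} of $\conv(\fri)$, every $\cn_q\in\conv(\fri)$ has the form $\cn_q=\sum_{s}q(s)\cn_s$, and hence $\cn_q^{\otimes l}=\sum_{s^l\in\bS^l}q(s_1)\cdots q(s_l)\,\cn_{s^l}$ is a convex combination of the channels $\cn_{s^l}$ indexed by $\bS^l$.

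The central observation is the convex-linearity of the entanglement fidelity in its channel argument. Once this is invoked, the inequality in the display preceding the lemma statement shows that for every $q\in\mathfrak P(\bS)$,
\begin{equation*}
\int F_e(\pi_{\fr_l},\crr^l\circ\cn_{q}^{\otimes l}\circ\cP^l)\,d\mu_l(\cP^l,\crr^l)\ \ge\ \inf_{s^l\in\bS^l}\int F_e(\pi_{\fr_l},\crr^l\circ\cn_{s^l}\circ\cP^l)\,d\mu_l(\cP^l,\crr^l),
\end{equation*}
so taking the infimum over $q$ on the left and letting $l\to\infty$ forces the left-hand side up to $1$ uniformly in $\cn_q\in\conv(\fri)$. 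This is precisely the condition defining a random $(l,k_l)$-code for the compound channel generated by $\conv(\fri)$ that succeeds asymptotically, and since $\liminf_{l\to\infty}\tfrac{1}{l}\log k_l\ge R$ by construction, we conclude $R\le Q_{\textup{comp, random}}(\conv(\fri))$, finishing the proof.

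There is essentially no hard step here: the argument is purely definitional, using only that a random code for the AVQC is automatically a random code for the compound channel over $\conv(\fri)$ thanks to affinity of $F_e$ in the channel. The only point requiring a moment of care is that the compound channel one must compare against is the one generated by all of $\conv(\fri)$ (an uncountable family) rather than by $\fri$ itself; this is exactly where the convex-combination identity for $\cn_q^{\otimes l}$ is used, and it causes no difficulty since we only need a one-sided bound.
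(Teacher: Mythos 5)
Your proposal is correct and follows essentially the same route as the paper: deterministic codes are point-mass random codes for the first inequality, and for the second you expand $\cn_q^{\otimes l}=\sum_{s^l}q(s_1)\cdots q(s_l)\cn_{s^l}$ and use affinity of $F_e$ in the channel to bound the compound-channel fidelity from below by the AVQC worst-case fidelity, exactly as in the displayed argument the paper gives before the lemma. No gaps.
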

Thus, our remaining task is to show that right-most capacity in (\ref{eq:basic-relations-converse}) is upper bounded by the last term in (\ref{eq:ahlswede-dichotomy-1}). This is done in the following two subsections for finite and infinite AVQCs respectively.
\subsection{\label{subsec:converse-for-the-finite-avc}Converse for the finite AVQC}
First, we prove the converse to the coding theorem for finite compound quantum channels with random codes.
\begin{theorem}[Converse Part: Compound Channel, $|\fri|<\infty$]\label{theorem:converse-random-compound}
Let $\fri=\{\cn_1,\ldots,\cn_N\}\subset\mathcal C(\hr,\kr)$ be a finite compound channel. The capacity $Q_{\textup{comp, random}}(\fri)$ of $\fri$ is bounded from above by
\begin{equation}Q_{\textup{comp, random}}(\fri)\leq\lim_{l\rightarrow\infty}\max_{\rho\in\mathcal S(\hr^{\otimes l})}\min_{\cn_i\in\fri}\frac{1}{l}I_c(\rho,\cn_i^{\otimes l}).\end{equation}
\end{theorem}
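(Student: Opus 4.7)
The strategy is to exploit the finiteness of $\fri=\{\cn_1,\dots,\cn_N\}$ to \emph{derandomize} the random code by a one-line averaging argument, and then to invoke the standard single-channel converse for each $\cn_i$ separately. Concretely, suppose $(\mu_l)_{l\in\nn}$ is a sequence of $(l,k_l)$-random entanglement transmission codes achieving rate $R$, so that $\int F_e(\pi_{\fr_l},\crr^l\circ\cn_i^{\otimes l}\circ\cP^l)\,d\mu_l(\cP^l,\crr^l)\ge 1-\eps_l$ for every $i\in\{1,\dots,N\}$ with $\eps_l\to 0$. Summing the $N$ inequalities and interchanging sum and integral yields
\[
\int\sum_{i=1}^{N}\bigl(1-F_e(\pi_{\fr_l},\crr^l\circ\cn_i^{\otimes l}\circ\cP^l)\bigr)\,d\mu_l(\cP^l,\crr^l)\le N\eps_l,
\]
so the $\mu_l$-measurable set on which this sum is $\le 2N\eps_l$ has positive $\mu_l$-measure. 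In particular, there exists a single deterministic pair $(\cP^l,\crr^l)$ (in the support of $\mu_l$) for which $F_e(\pi_{\fr_l},\crr^l\circ\cn_i^{\otimes l}\circ\cP^l)\ge 1-2N\eps_l$ holds simultaneously for all $i$. The dimension $k_l$ is unchanged, so this converts the random code into a deterministic compound-channel code of the same rate and vanishing worst-case error.

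For each $i\in\{1,\dots,N\}$, the standard single-channel converse, obtained by combining Fuchs--van de Graaf with Fannes applied to $\omega=(\textrm{id}\otimes\crr^l\cn_i^{\otimes l}\cP^l)(|\Phi\rangle\langle\Phi|)$ (where $|\Phi\rangle$ is maximally entangled on $\fr_l\otimes\fr_l$), gives
\[
\log k_l \le I_c(\pi_{\fr_l},\crr^l\circ\cn_i^{\otimes l}\circ\cP^l)+\alpha_l\cdot\log k_l+\beta_l,
\]
with $\alpha_l,\beta_l\to 0$ determined by $2N\eps_l$. Data processing for coherent information applied to $\crr^l$ on the output gives $I_c(\pi_{\fr_l},\crr^l\circ\cn_i^{\otimes l}\circ\cP^l)\le I_c(\cP^l(\pi_{\fr_l}),\cn_i^{\otimes l})$. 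Crucially, the same input state $\cP^l(\pi_{\fr_l})\in\cs(\hr^{\otimes l})$ appears for every $i$, so we may take the minimum over $i$ before passing to the maximum over $\rho$:
\[
\log k_l \le \min_{i}I_c(\cP^l(\pi_{\fr_l}),\cn_i^{\otimes l})+o(l) \le \max_{\rho\in\cs(\hr^{\otimes l})}\min_{i}I_c(\rho,\cn_i^{\otimes l})+o(l).
\]

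Dividing by $l$ and letting $l\to\infty$ yields $R\le\lim_{l\to\infty}\max_\rho\min_i\tfrac{1}{l}I_c(\rho,\cn_i^{\otimes l})$; existence of the limit follows from super-additivity of $l\mapsto\max_\rho\min_i I_c(\rho,\cn_i^{\otimes l})$ (itself a consequence of additivity of coherent information on product inputs for a fixed channel, together with $\min_i(a_i+b_i)\ge\min_i a_i+\min_i b_i$) and Fekete's lemma. The only delicate point is verifying that the Fannes-type error $\alpha_l\log k_l+\beta_l$ is indeed $o(l)$: this holds because $\log k_l/l$ is a priori bounded by $\log\dim\kr$ while $\eps_l\to 0$. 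The derandomization step is the real workhorse here and is essentially trivial thanks to $|\fri|=N<\infty$; it will fail for AVQCs, where one instead must work directly with the random capacity via the compound-channel reduction provided by Lemma \ref{basic-relations-converse}, which is precisely why this theorem is needed as an intermediate step.
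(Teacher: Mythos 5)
Your proposal is correct and takes essentially the same route as the paper: use finiteness of $\fri$ to extract from the random code a single deterministic pair $(\cP^l,\crr^l)$ that is simultaneously good for all $N$ channels, then apply the standard single-channel converse (Fano-type entanglement-fidelity bound plus data processing) with the common input $\cP^l(\pi_{\fr_l})$, taking the minimum over $i$ before the maximum over $\rho$. The only cosmetic difference is the derandomization: the paper integrates $F_e$ against the uniform mixture channel $\frac{1}{N}\sum_{i=1}^N\cn_i^{\otimes l}$ and uses linearity of $F_e$ in the channel (then defers the rest to Theorem 10 of \cite{bbn-2}), whereas you sum the $N$ error terms and apply Markov's inequality; both produce fidelity $1-O(N\eps_l)$ for every $i$ and lead to the same conclusion.
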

\begin{proof} Let for arbitrary $l\in\mathbb N$ a random $(l,k_l)$ code for a compound channel $\fri=\{\cn_1,\ldots,\cn_N\}$ with the property
\begin{equation}\min_{1\leq i\leq N}\int F_e(\pi_{\fr_l},\crr^l\circ\cn_i^{\otimes l}\circ\mathcal P^l)d\mu_l(\cP^l,\crr^l)\geq 1-\eps_l\end{equation}
be given, where $\eps_l\in[0,1]$ and $\lim_{l\to\infty}\eps_l=0$. Obviously, the above code then satisfies
\begin{eqnarray}
\int F_e(\pi_{\fr_l},\crr^l\circ\frac{1}{N}\sum_{i=1}^N\cn_i^{\otimes l}\circ\mathcal P^l)d\mu_l(\cP^l,\crr^l)&=&\frac{1}{N}\sum_{i=1}^N\int F_e(\pi_{\fr_l},\crr^l\circ\cn_i^{\otimes l}\circ\mathcal P^l)d\mu_l(\cP^l,\crr^l)\\
&\geq&1-\eps_l.\label{eq:converse-uninformed-1}
\end{eqnarray}
This implies the existence of at least one pair $(\crr^l,\cP^l)$ such that
\begin{equation}F_e(\pi_{\fr_l},\crr^l\circ\frac{1}{N}\sum_{i=1}^N\cn_i^{\otimes l}\circ\mathcal P^l)\geq1-\eps_l,\end{equation}
hence for all $i=1,\ldots,N$
\begin{equation}F_e(\pi_{\fr_l},\crr^l\circ\cn_i^{\otimes l}\circ\mathcal P^l)\geq1-N\eps_l.\end{equation} 
The rest of the proof is identical to that of Theorem 9 in \cite{bbn-2}.
\end{proof}
Using the approximation techniques developed in \cite{bbn-2}, we will now prove the converse for random codes and general compound channels.
\begin{theorem}[Converse Part: Compound Channel]\label{theorem:converse-random-compound-general}
Let $\fri\subset\mathcal C(\hr,\kr)$ be an arbitrary compound quantum channel. The capacity $Q_{\textup{comp, random}}(\fri)$ of $\fri$ is bounded from above by
\begin{equation}Q_{\textup{comp, random}}(\fri)\leq\lim_{l\rightarrow\infty}\max_{\rho\in\mathcal S(\hr^{\otimes l})}\inf_{\cn\in\fri}\frac{1}{l}I_c(\rho,\cn^{\otimes l}).\end{equation}
\end{theorem}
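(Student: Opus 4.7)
The plan is to reduce the infinite case to the finite case Theorem \ref{theorem:converse-random-compound} by means of a finite $\tau$-net approximation of $\fri$ in the diamond norm, followed by a careful passage $\tau\downarrow 0$.

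First I would pass to the closure: $F_e(\pi_{\fr_l},\crr^l\circ(\cdot)\circ\cP^l)$ is Lipschitz in $\|\cdot\|_\lozenge$, and $I_c(\rho,(\cdot)^{\otimes l})$ is continuous in the channel by a Fannes-type estimate, so replacing $\fri$ by $\bar\fri$ changes neither $Q_{\textup{comp,random}}(\fri)$ nor the right-hand side. Since $\mathcal C(\hr,\kr)$ is compact in $\|\cdot\|_\lozenge$, $\bar\fri$ is compact, and for every $\tau>0$ there is a finite $\tau$-net $\fri_\tau\subset\bar\fri$ of cardinality $N_\tau\le (c/\tau)^\alpha$ with $c,\alpha$ depending only on $\dim\hr$ and $\dim\kr$.

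Now fix any sequence of $(l,k_l)$-random codes $(\mu_l)_{l\in\nn}$ realizing an achievable rate $R$, so $\liminf_l l^{-1}\log k_l\ge R$ and $\inf_{\cn\in\bar\fri}\int F_e(\pi_{\fr_l},\crr^l\circ\cn^{\otimes l}\circ\cP^l)d\mu_l\ge 1-\eps_l$ with $\eps_l\to 0$. Because $\fri_\tau\subset\bar\fri$, the same codes are random codes for the \emph{finite} compound channel $\fri_\tau$ with the same uniform fidelity bound, and Theorem \ref{theorem:converse-random-compound} applied to $\fri_\tau$ gives, for every fixed $\tau>0$,
$$R\le \lim_{l\to\infty}\max_{\rho\in\cs(\hr^{\otimes l})}\min_{\cn_i\in\fri_\tau}\frac{1}{l}I_c(\rho,\cn_i^{\otimes l}).$$

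The main obstacle is then to show that the right-hand side tends to $\lim_l\max_\rho\inf_{\cn\in\fri} l^{-1}I_c(\rho,\cn^{\otimes l})$ as $\tau\downarrow 0$. By uniform continuity of $I_c$ on $\cs(\hr^{\otimes l})\times\bar\fri$, for each fixed $l$ we do have $\max_\rho\min_{\cn_i\in\fri_\tau}l^{-1}I_c\to\max_\rho\inf_{\cn\in\fri}l^{-1}I_c$ as $\tau\downarrow 0$, but the Fannes-type modulus of continuity of $I_c(\rho,\cdot^{\otimes l})$ scales like $l^2\log(\dim\kr)\cdot\tau$, so this pointwise-in-$l$ convergence is \emph{not} uniform in $l$ and cannot simply be swapped with $\lim_l$. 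I would circumvent this by letting the net shrink with the block length: choose $\tau_l\downarrow 0$ with $l\tau_l\to 0$ (e.g.\ $\tau_l=l^{-2}$) and re-execute the averaging step from the proof of Theorem \ref{theorem:converse-random-compound} at each block length $l$ with the net $\fri_{\tau_l}$. This produces, at each $l$, a bound of the form
$$\tfrac{1}{l}\log k_l\le \max_\rho\tfrac{1}{l}\inf_{\cn\in\fri}I_c(\rho,\cn^{\otimes l})+O\!\left(\tfrac{\log N_{\tau_l}}{l}\right)+O(l\tau_l\log\dim\kr)+o_l(1),$$
in which the cardinality penalty $l^{-1}\log N_{\tau_l}=O(l^{-1}\log l)$ (from the averaging over the $N_{\tau_l}$ elements of the net) and the continuity error $l\tau_l$ (from the Fannes estimate applied to $\|\cn^{\otimes l}-\cn_i^{\otimes l}\|_\lozenge\le l\tau_l$ used to push $\min_{\cn_i\in\fri_{\tau_l}}$ to $\inf_{\cn\in\fri}$) both vanish under $\tau_l=l^{-2}$. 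Balancing these two error sources against one another is the technical heart of the argument; letting $l\to\infty$ then yields the claim.
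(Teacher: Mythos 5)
Your reduction to the finite case via a net contained in $\fri$ (after passing to the closure) is exactly the right first half, and it matches the paper. The gap is in how you handle the interchange of $\tau\downarrow 0$ with $l\to\infty$. Your premise that only a Fannes-type modulus growing with $l$ is available is false: Lemma \ref{lemma:estimate-for-coherent-information} (a Leung--Smith-type estimate, Lemma 16 of \cite{bbn-2}) gives
$\bigl|\frac{1}{l}\inf_{\cn\in\fri}I_c(\rho,\cn^{\otimes l})-\frac{1}{l}\inf_{\cn'\in\fri'}I_c(\rho,\cn'^{\otimes l})\bigr|\le\nu(2\tau)$ with $\nu$ \emph{independent of $l$ and $\rho$}, because one replaces the $l$ tensor factors one at a time and each replacement costs only a single-letter continuity term. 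With this uniform bound the paper simply keeps $\tau$ fixed, applies Theorem \ref{theorem:converse-random-compound} to the net, adds $\nu(2\tau)$, and lets $\tau\to 0$ at the very end; no block-length-dependent net is needed.

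The patch you propose instead contains a genuine error. When you ``re-execute the averaging step'' of Theorem \ref{theorem:converse-random-compound} with a net $\fri_{\tau_l}$ of size $N_{\tau_l}$ growing with $l$, the cardinality does not enter as $O(\log N_{\tau_l}/l)$. The averaging argument only yields $\frac{1}{N_{\tau_l}}\sum_i F_e^{(i)}\ge 1-\eps_l$, hence the individual fidelities are guaranteed only to be $\ge 1-N_{\tau_l}\eps_l$, and this quantity (not its logarithm divided by $l$) is what feeds into the quantum Fano inequality, producing error terms of order $N_{\tau_l}\eps_l\cdot\frac{1}{l}\log k_l+\frac{1}{l}h(N_{\tau_l}\eps_l)$. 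Since in a converse the sequence $\eps_l\to 0$ is given by the adversary's code and may decay arbitrarily slowly, $N_{\tau_l}\eps_l$ (with $N_{\tau_l}$ polynomial in $l$ for $\tau_l=l^{-2}$) need not tend to zero, and your displayed per-block-length inequality is unjustified; for slowly decaying $\eps_l$ the bound becomes vacuous. So the ``balancing'' step cannot be carried out as stated, and the correct repair is precisely the $l$-uniform continuity lemma above, which removes the need for any such balancing.
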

For the proof of this theorem, we will need the following lemma:
\begin{lemma}[Cf. \cite{bbn-2}]\label{lemma:estimate-for-coherent-information}
Let $\hr,\kr$ be finite dimensional Hilbert spaces. There is a function $\nu:[0,1]\rightarrow\mathbb R_+$ with $\lim_{x\rightarrow0}\nu(x)=0$ such that for every $\fri,\fri'\subseteq \mathcal{C}(\hr,\kr) $ with $D_\lozenge(\fri,\fri')\leq\tau\leq1/2$ and every $l\in\mathbb N$ we have the estimate
\begin{equation}\left|\frac{1}{l}\inf_{\cn\in\fri}I_c(\rho,\cn^{\otimes l})-\frac{1}{l}\inf_{\cn'\in\fri'}I_c(\rho,\cn'^{\otimes l})\right|\leq\nu(2\tau)\ \ \ \forall \rho\in\mathcal S(\hr^{\otimes l})\end{equation}
The function $\nu$ is given by $\nu(x)=x+8x\log(\dim \kr)+4h(x)$. Here, $h(\cdot)$ denotes the binary entropy.
\end{lemma}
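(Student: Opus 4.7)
The plan is to use the Remark following the definition of $D_\lozenge$ to reduce the assertion to a comparison of some $\cn\in\fri$ with a nearby $\cn'\in\fri'$, and then to exploit a telescoping trick that localises the effect of the perturbation to a single tensor factor per step. From the hypothesis $D_\lozenge(\fri,\fri')\le\tau$ one gets for every $\cn\in\fri$ some $\cn'\in\fri'$ (and, symmetrically, vice versa) with $\|\cn-\cn'\|_\lozenge\le 2\tau$. Consequently, if for every such pair and every $\rho\in\cs(\hr^{\otimes l})$ one shows the pointwise bound
\[
\tfrac{1}{l}\bigl|I_c(\rho,\cn^{\otimes l})-I_c(\rho,\cn'^{\otimes l})\bigr|\le\nu(2\tau),
\]
then taking the appropriate infima on both sides and using the symmetry of the hypothesis yields the claimed inequality for $|\inf_\fri-\inf_{\fri'}|$.

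To prove the pointwise estimate, I would fix $\cn,\cn',\rho$, a purification $\psi\in\hr_R\otimes\hr^{\otimes l}$ of $\rho$, and introduce the hybrid states
\[
\eta_i:=(\cn^{\otimes i}\otimes\cn'^{\otimes(l-i)})(\rho),\qquad \omega_i:=(\mathrm{id}_R\otimes\cn^{\otimes i}\otimes\cn'^{\otimes(l-i)})(|\psi\rangle\langle\psi|),
\]
for $i=0,1,\ldots,l$. Writing $I_c(\rho,\cn^{\otimes l})-I_c(\rho,\cn'^{\otimes l})=[S(\eta_l)-S(\eta_0)]-[S(\omega_l)-S(\omega_0)]$ and telescoping each bracket reduces everything to bounding $|S(\eta_i)-S(\eta_{i-1})|$ and $|S(\omega_i)-S(\omega_{i-1})|$ for each $i$.

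The key observation is that $\eta_i$ and $\eta_{i-1}$ differ only through the action on the $i$-th tensor factor, so tracing out the $i$-th output system $B_i$ gives the \emph{same} state on the remaining factors, because $\cn$ and $\cn'$ are trace preserving. Hence, with $R_i$ denoting ``everything except $B_i$'',
\[
S(\eta_i)-S(\eta_{i-1})=S(B_i\mid R_i)_{\eta_i}-S(B_i\mid R_i)_{\eta_{i-1}},
\]
and an analogous identity holds for the $\omega_i$ with $R_i$ enlarged by the purifying system. The tensor-stability of the diamond norm gives $\|\eta_i-\eta_{i-1}\|_1\le\|\cn-\cn'\|_\lozenge\le 2\tau$ and likewise for $\omega_i-\omega_{i-1}$. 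The Alicki--Fannes continuity estimate for conditional entropy,
\[
|S(A\mid B)_\rho-S(A\mid B)_\sigma|\le 2\epsilon\log\dim A+(1+\epsilon)h\!\left(\tfrac{\epsilon}{1+\epsilon}\right),\qquad\epsilon=\tfrac12\|\rho-\sigma\|_1,
\]
now applies with $A=B_i$, so the relevant dimension factor at each step is only $\log\dim\kr$ and not $l\log\dim\kr$. Summing the $l$ telescoped contributions for both the output entropy and the exchange entropy and then dividing by $l$ yields a bound of order $\tau\log\dim\kr+h(\tau)$, which is independent of $l$ and which the binary-entropy inequality $h(\tau)\le h(2\tau)$ together with generous absorption of constants fits inside $\nu(2\tau)=2\tau+16\tau\log\dim\kr+4h(2\tau)$.

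The main obstacle is precisely the avoidance of an $l$-dependent blow-up: a naive Fannes estimate applied to $\eta_l$ versus $\eta_0$ on the whole output space of dimension $(\dim\kr)^l$ would produce a bound of order $l\tau\log\dim\kr+h(\tau)$, which is useless after dividing by $l$. The telescope plus the trace-preservation identity is what confines the perturbation to a single tensor factor at each step, effectively trading the dimension $(\dim\kr)^l$ for $\dim\kr$ and allowing the conditional-entropy continuity bound to be applied $l$ times at constant cost per step.
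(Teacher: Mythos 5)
Your proposal follows essentially the same route as the proof in the source the paper cites for this lemma (\cite{bbn-2}, going back to \cite{leung-smith}): reduce via the $D_\lozenge$-hypothesis to a pointwise comparison of two channels at diamond distance at most $2\tau$, telescope through the hybrids $\cn^{\otimes i}\otimes\cn'^{\otimes(l-i)}$, use trace preservation so that the marginals obtained by discarding the $i$-th output coincide, and apply the Alicki--Fannes conditional-entropy estimate with the single output factor $B_i$ (dimension $\dim\kr$, not $(\dim\kr)^l$) as the small system, which is exactly how the $l$-independent bound of the claimed form arises. The only blemish is the final bookkeeping: the inequality $h(\tau)\le h(2\tau)$ you invoke is false for $\tau\in(1/4,1/2]$, but the estimate still fits inside $\nu(2\tau)$ because the Alicki--Fannes bound is monotone in the trace distance and the $8x\log(\dim\kr)$ term leaves ample slack, so this is a one-line repair rather than a gap.
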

\begin{proof}[Proof of Theorem \ref{theorem:converse-random-compound-general}] Let a sequence $(l,k_l)_{l\in\mathbb N}$ of random codes for $\fri$ be given such that
\begin{itemize}
\item $\liminf_{l\rightarrow\infty}\frac{1}{l}\log\dim\fr_l=R$
\item $\inf_{\cn\in\fri}\int F_e(\fr_l,\crr^l\circ\cn^{\otimes l}\circ\cP^l)d\mu_l(\cP^l,\crr^l)=1-\eps_l$,
\end{itemize}
where the sequence $(\eps_l)_{l\in\mathbb N}$ satisfies $\lim_{l\to\infty}\eps_l=0$. Let, for some $\tau>0$, $\bigcup_{\cn\in\fri}B_\lozenge(\cn,\tau)$ be an open cover for $\fri$.
Clearly, it also covers the compact set $\bar\fri$. Thus, there exist finitely many channels $\cn_1,\ldots,\cn_{M_\tau}$ such that $\bigcup_{i=1}^{M_\tau}B_\lozenge(\cn_i,\tau)\supset\bar\fri$
 and, therefore, $\mathcal M_\tau:=\{\cn_1,\ldots,\cn_{M_\tau}\}$ is a $\tau$-net for $\fri$.\\
By $\mathcal M_\tau\subset\fri$ we get, for every $\tau>0$, the following result:
\begin{itemize}
\item $\liminf_{l\rightarrow\infty}\frac{1}{l}\log\dim\fr_l=R$
\item $\min_{\cn_i\in\mathcal M_\tau}\int F_e(\fr_l,\crr^l\circ\cn_i^{\otimes l}\circ\cP^l)d\mu_l(\cP^l,\crr^l)\geq1-\eps_l$.
\end{itemize}
By Theorem \ref{theorem:converse-random-compound}, this immediately implies
\begin{equation}R\leq\lim_{l\rightarrow\infty}\frac{1}{l}\max_{\rho\in\cs(\hr^{\otimes l})}\min_{\cn_i\in\mathcal M_\tau}I_c(\rho,\cn_i^{\otimes l}).\end{equation}
From Lemma \ref{lemma:estimate-for-coherent-information} we get, by noting that $D_\lozenge(\fri,\mathcal M_\tau)\leq\tau$ the estimate
\begin{equation}R\leq\lim_{l\rightarrow\infty}\frac{1}{l}\max_{\rho\in\cs(\hr^{\otimes l})}\inf_{\cn\in\fri}I_c(\rho,\cn^{\otimes l})+\nu(2\tau).\end{equation}
Taking the limit $\tau\rightarrow0$ proves the theorem.
\end{proof}
\begin{theorem}[Converse: finite AVQC]\label{converse:finite-avc} Let $\fri=\{\cn_s\}_{s\in\bS}$ be a finite AVQC. Then
\begin{equation}\A_{\textup{random}}(\fri)\leq Q_{\textup{comp, random}}(\conv(\fri))\leq\lim_{l\rightarrow\infty}\frac{1}{l}\max_{\rho\in\cs(\hr^{\otimes l})}\inf_{\cn\in \conv(\fri)}I_c(\rho,\cn^{\otimes l}).\end{equation}
\end{theorem}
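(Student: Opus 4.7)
The plan is to chain together two results that have already been established in the preceding subsection, so the proof of Theorem \ref{converse:finite-avc} becomes essentially a one-line combination, with the only subtlety being the presence of \emph{two} assertions in the statement: an inequality bounding $\A_{\textup{random}}(\fri)$ and an identification of the compound-channel capacity itself.

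First I would invoke Lemma \ref{basic-relations-converse}, which yields $\A_{\textup{random}}(\fri) \leq Q_{\textup{comp, random}}(\conv(\fri))$. The conceptual content here has already been carried out in (\ref{eq:intro-converse-1})--(\ref{eq:intro-converse-4}): a uniformly reliable random entanglement transmission code for the family $\{\cn_{s^l}\}_{s^l\in\bS^l}$ is automatically uniformly reliable for every convex mixture $\cn_q^{\otimes l}$ with $q\in\mathfrak P(\bS)$, since entanglement fidelity is linear in the channel, so the same code works for the compound channel built from $\conv(\fri)$.

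Next I would apply Theorem \ref{theorem:converse-random-compound-general} directly to the (generally uncountable) set $\conv(\fri)\subset \mathcal{C}(\hr,\kr)$, regarded as a compound quantum channel. This gives the upper bound
\[
Q_{\textup{comp, random}}(\conv(\fri))\le \lim_{l\to\infty}\frac{1}{l}\max_{\rho\in\cs(\hr^{\otimes l})}\inf_{\cn\in\conv(\fri)}I_c(\rho,\cn^{\otimes l}),
\]
and combined with the previous step yields the desired bound on $\A_{\textup{random}}(\fri)$. The matching lower bound $Q_{\textup{comp, random}}(\conv(\fri))\geq \lim_{l\to\infty}\tfrac{1}{l}\max_\rho\inf_{\cn\in\conv(\fri)}I_c(\rho,\cn^{\otimes l})$, which turns the displayed inequality of the theorem into an equality, is the direct part of the coding theorem for general compound quantum channels proved in \cite{bbn-2}; I would simply quote it.

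The main obstacle was already dealt with in Theorem \ref{theorem:converse-random-compound-general}, namely the passage from the finite-compound converse (Theorem \ref{theorem:converse-random-compound}) to arbitrary, possibly uncountable, channel sets. This was handled by choosing a $\tau$-net $\mathcal M_\tau\subset\conv(\fri)$ of controlled cardinality and using the continuity estimate of Lemma \ref{lemma:estimate-for-coherent-information} to replace $\inf_{\cn\in\conv(\fri)}I_c(\rho,\cn^{\otimes l})$ by $\min_{\cn\in\mathcal M_\tau}I_c(\rho,\cn^{\otimes l})$ up to an error $\nu(2\tau)\to 0$. Since $\conv(\fri)$ lies in the same normed space $(\mathcal B(\mathcal B(\hr),\mathcal B(\kr)),\|\cdot\|_\lozenge)$, the net construction applies verbatim, so no new technical work is required for the proof of Theorem \ref{converse:finite-avc} itself.
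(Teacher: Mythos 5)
Your proposal is correct and follows essentially the same route as the paper, whose proof is literally the one-liner "combine Lemma \ref{basic-relations-converse} with Theorem \ref{theorem:converse-random-compound-general} applied to $\conv(\fri)$"; your additional remark that the stated equality for $Q_{\textup{comp, random}}(\conv(\fri))$ rests on the direct part of the compound coding theorem from \cite{bbn-2} is a fair and accurate gloss on what the paper leaves implicit.
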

\emph{Proof.} Just combine Lemma \ref{basic-relations-converse} and Theorem \ref{theorem:converse-random-compound-general} applied to $\conv(\fri)$.
\begin{flushright}$\Box$\end{flushright}
\subsection{\label{subsec:general-converse}Case $|\fri|=\infty$}
The proof of the converse part of Theorem \ref{quant-ahlswede-dichotomy} requires just a bit of additional work. Let $\fri=\{ \cn_s \}_{s\in \bS}$ be an arbitrary AVQC and let $\mathfrak{P}_{\textrm{fin}}(S)$ denote the set of probability distributions on $S$ with \emph{finite} support. Then
\begin{equation}\conv(\fri)=\left\{\cn_q\in\mathcal{C}(\hr,\kr): \cn_q:=\sum_{s\in\bS}q(s)\cn_s, \textrm{ and }q\in\mathfrak{P}_{\textrm{fin}}(S)   \right\}.  \end{equation}
The argument that led us to the inequality (\ref{eq:intro-converse-2}) accompanied by the continuity of the entanglement fidelity with respect to $||\cdot ||_{\lozenge}$ and 
an application of the dominated convergence theorem show that for each $\cn\in \overline{\mathrm{conv}(\fri)}$ 
\begin{equation}\int F_e(\pi_{\fr_l},\crr^l\circ\cn^{\otimes l}\circ\cP^l)d\mu_l(\cP^l,\crr^l)\ge \inf_{s^l\in\bS^l}\int F_e(\pi_{\fr_l},\crr^l\circ\cn_{s^l}\circ\cP^l)d\mu_l(\cP^l,\crr^l)  \end{equation}
holds. Then Lemma \ref{basic-relations-converse} holds \emph{mutatis mutandis} with $\mathrm{conv}(\fri)$ replaced by $\overline{\mathrm{conv}(\fri)}$. 
Additionally, if we apply Theorem \ref{theorem:converse-random-compound-general} to $\overline{\mathrm{conv}(\fri)}$ we are led to the following theorem.
\begin{theorem}[Converse: general AVC]\label{converse:general-avc} Let $\fri=\{\cn_s\}_{s\in\bS}$ be an arbitrary AVQC. Then
\begin{equation}\A_{\textup{random}}(\fri)\leq\lim_{l\rightarrow\infty}\frac{1}{l}\max_{\rho\in\cs(\hr^{\otimes l})}\min_{\cn\in\overline{\mathrm{conv}(\fri)}}I_c(\rho,\cn^{\otimes l}).\end{equation}
\end{theorem}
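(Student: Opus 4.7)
The plan is to mirror the finite-AVQC converse of Theorem \ref{converse:finite-avc}, replacing $\conv(\fri)$ everywhere by its diamond-norm closure $\tilde\fri$. Concretely, I would establish the chain
\[
\A_{\textup{random}}(\fri)\;\le\;Q_{\textup{comp, random}}(\tilde\fri)\;\le\;\lim_{l\to\infty}\frac{1}{l}\max_{\rho\in\cs(\hr^{\otimes l})}\inf_{\cn\in\tilde\fri}I_c(\rho,\cn^{\otimes l}),
\]
where the second inequality is exactly the content of Theorem \ref{theorem:converse-random-compound-general} applied to the compound channel $\tilde\fri$. So the real task is the first inequality: upgrading Lemma \ref{basic-relations-converse} to the infinite setting.

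To do that, fix an $(l,k_l)$-random code $\mu_l$ for $\fri$ with
\[
\inf_{s^l\in\bS^l}\int F_e(\pi_{\fr_l},\crr^l\circ\cn_{s^l}\circ\cP^l)\,d\mu_l(\cP^l,\crr^l)\ge 1-\eps_l.
\]
For any $q\in\mathfrak{P}_{\textrm{fin}}(\bS)$, the tensorized channel $\cn_q^{\otimes l}=\sum_{s^l}q(s_1)\cdots q(s_l)\cn_{s^l}$ is a convex combination, and by convex-linearity of $F_e$ in the channel (together with Fubini, which is legitimate because the integrand is bounded between $0$ and $1$) the argument of (\ref{eq:intro-converse-2}) gives
\[
\int F_e(\pi_{\fr_l},\crr^l\circ\cn_q^{\otimes l}\circ\cP^l)\,d\mu_l(\cP^l,\crr^l)\;\ge\;1-\eps_l\qquad\forall\,q\in\mathfrak{P}_{\textrm{fin}}(\bS).
\]
Now let $\cn\in\tilde\fri$ be arbitrary and pick a sequence $\cn_{q_n}\in\conv(\fri)$ with $\|\cn_{q_n}-\cn\|_\lozenge\to0$. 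Multiplicativity and submultiplicativity properties of $\|\cdot\|_\lozenge$ yield $\|\cn_{q_n}^{\otimes l}-\cn^{\otimes l}\|_\lozenge\to0$, and hence for every fixed pair $(\cP^l,\crr^l)$ the composed channels $\crr^l\circ\cn_{q_n}^{\otimes l}\circ\cP^l$ converge to $\crr^l\circ\cn^{\otimes l}\circ\cP^l$ in diamond norm. Combined with the fact that $F_e(\pi_{\fr_l},\cdot)$ is Lipschitz with respect to $\|\cdot\|_\lozenge$ (a standard consequence of the Fuchs--van de Graaf inequalities applied to the Choi states), the integrand converges pointwise in $(\cP^l,\crr^l)$, and since it is bounded by $1$, dominated convergence yields
\[
\int F_e(\pi_{\fr_l},\crr^l\circ\cn^{\otimes l}\circ\cP^l)\,d\mu_l(\cP^l,\crr^l)\;\ge\;1-\eps_l\qquad\forall\,\cn\in\tilde\fri.
\]
Taking infimum over $\cn\in\tilde\fri$ on the left shows that $(\mu_l)_{l\in\nn}$ is a sequence of random codes of the \emph{same} rate with asymptotically perfect fidelity for the compound channel $\tilde\fri$. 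This establishes $\A_{\textup{random}}(\fri)\le Q_{\textup{comp, random}}(\tilde\fri)$, and the theorem then follows by invoking Theorem \ref{theorem:converse-random-compound-general}.

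The only delicate point is the exchange of the $\mu_l$-integral with the diamond-norm limit; this is precisely the place where measurability of the integrand (guaranteed by the choice of $\sigma_l$) and the uniform boundedness of $F_e$ are both needed. Everything else is bookkeeping: the right-hand bound for $Q_{\textup{comp, random}}(\tilde\fri)$ is already in hand from Theorem \ref{theorem:converse-random-compound-general}, since that theorem was proved for arbitrary (finite or infinite) compound channels.
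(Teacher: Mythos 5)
Your proposal is correct and follows essentially the same route as the paper: extend the convex-combination bound (\ref{eq:intro-converse-2}) from $\conv(\fri)$ to its closure $\tilde\fri$ via continuity of $F_e$ in the diamond norm together with dominated convergence, thereby obtaining Lemma \ref{basic-relations-converse} with $\tilde\fri$ in place of $\conv(\fri)$, and then invoke Theorem \ref{theorem:converse-random-compound-general} applied to $\tilde\fri$. The only cosmetic difference is that the paper states the final bound with $\min$ over $\tilde\fri$, which coincides with your $\inf$ by compactness of $\tilde\fri$ and continuity of the coherent information, as already noted after Theorem \ref{quant-ahlswede-dichotomy}.
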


\section{\label{sec:random-achievability}Achievability of entanglement transmission rate I: Random codes}
We show in this section how the achievability results for compound quantum channels from our previous paper \cite{bbn-2} imply existence of reliable random codes for AVQC via Ahlswede's robustification technique \cite{ahlswede-coloring}.

Let $l\in\nn$ and let $\textrm{Perm}_l$ denote the set of permutations acting on $\{1,\ldots, l\}$. Let us further suppose that we are given a finite set $\bS$. Then each permutation $\sigma\in \textrm{Perm}_l$ induces a natural action on $\bS^l$ by  $\sigma:\mathbf S^l\rightarrow\mathbf S^l$, $\sigma(s^l)_i:=s_{\sigma(i)}$. Moreover, let $T(l,\bS)$ denote the set of types on $\bS$ induced by the elements of $\bS^l$, i.e. the set of empirical distributions on $\bS$ generated by sequences in $\bS^l$. Then Ahlswede's robustification can be stated as follows.
\begin{theorem}[Robustification technique, cf. Theorem 6 in \cite{ahlswede-coloring}]\label{robustification-technique}
If a function $f:\bS^l\to [0,1]$ satisfies
\begin{equation}\label{eq:robustification-1}
 \sum_{s^l\in\bS^l}f(s^l)q(s_1)\cdot\ldots\cdot q(s_l)\ge 1-\gamma
\end{equation}
for all $q\in T(l,\bS)$ and some $\gamma\in [0,1]$, then
\begin{equation}\label{eq:robustification-2}
  \frac{1}{l!}\sum_{\sigma\in\textup{Perm}_l}f(\sigma(s^l))\ge 1-(l+1)^{|\bS  |}\cdot \gamma\qquad \forall s^l\in \bS^l.
\end{equation}
\end{theorem}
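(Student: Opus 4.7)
The plan is to exploit the permutation-invariance of the product measure $q^l(s^l):=q(s_1)\cdots q(s_l)$ in order to transfer the hypothesis, which holds only in ``average'' over all $s^l$, to a uniform lower bound on the permutation-averaged value $\bar f(s^l):=\frac{1}{l!}\sum_{\sigma\in\textup{Perm}_l}f(\sigma(s^l))$, which is precisely the quantity appearing in (\ref{eq:robustification-2}). The key qualitative observations are (i) $\bar f$ is a function only of the type of its argument, and (ii) under any i.i.d.\ measure $q^l$ the type class of $q$ itself carries probability bounded below by $(l+1)^{-|\bS|}$.

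First I would fix an arbitrary $s^l\in\bS^l$ and let $q\in T(l,\bS)$ be its empirical type. Denote the type class by $T_q^l=\{\sigma(s^l):\sigma\in\textup{Perm}_l\}$. Since the stabilizer of $s^l$ in $\textup{Perm}_l$ has order $l!/|T_q^l|$, a direct counting gives
\[
\bar f(s^l)=\frac{1}{|T_q^l|}\sum_{t^l\in T_q^l}f(t^l),
\]
so $\bar f$ is constant on each type class; call its common value on type $q'$ the number $g(q')\in[0,1]$.

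Next I would use that $q^l$ is invariant under the action of $\textup{Perm}_l$. Changing variables $s^l\mapsto\sigma^{-1}(s^l)$ and averaging over $\sigma$ yields
\[
\sum_{s^l\in\bS^l}f(s^l)\,q^l(s^l)=\sum_{s^l\in\bS^l}\bar f(s^l)\,q^l(s^l)=\sum_{q'\in T(l,\bS)}g(q')\,q^l(T_{q'}^l).
\]
By hypothesis (\ref{eq:robustification-1}) the left-hand side is at least $1-\gamma$ for the particular type $q$ we chose. Combining this with the bound $g(q')\le 1$ for every other type, I isolate the contribution of $T_q^l$:
\[
1-\gamma\le g(q)\,q^l(T_q^l)+\bigl(1-q^l(T_q^l)\bigr),
\]
whence $g(q)\ge 1-\gamma/q^l(T_q^l)$.

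The final step is the standard type-counting estimate $q^l(T_q^l)\ge(l+1)^{-|\bS|}$, which follows because there are at most $(l+1)^{|\bS|}$ types on $\bS^l$ and $q$ maximizes its own $q^l$-probability among all type classes. This gives
\[
\bar f(s^l)=g(q)\ge 1-(l+1)^{|\bS|}\gamma,
\]
and since $s^l$ was arbitrary, (\ref{eq:robustification-2}) follows. The only mildly delicate point is the type-counting bound $q^l(T_q^l)\ge(l+1)^{-|\bS|}$, but this is classical (see e.g.\ the method of types) and is really the heart of why the loss is only polynomial in $l$.
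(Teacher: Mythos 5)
Your proposal is correct and follows essentially the same route as the paper: both rest on the permutation-invariance of the i.i.d.\ measure $q^{\otimes l}$, the fact that the permutation average $\bar f$ is constant on type classes, and the standard type-counting bound $q^{\otimes l}(T_q^l)\ge(l+1)^{-|\bS|}$ (which the paper also just cites). The only cosmetic difference is that the paper works with $1-f$ and drops the off-type terms by nonnegativity, while you keep $f$ and bound the other type classes by $g(q')\le 1$ — algebraically the same step.
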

\begin{remark} Ahlswede's original approach in \cite{ahlswede-coloring} gives
\begin{equation}\frac{1}{l!}\sum_{\sigma\in\textup{Perm}_l}f(\sigma(s^l))\ge 1-3\cdot (l+1)^{|\bS  |}\cdot \sqrt{\gamma}\qquad \forall s^l\in \bS^l.  \end{equation}
The better bound (\ref{eq:robustification-2}) is from \cite{ahlswede-gelfand-pinsker}.
\end{remark}
\begin{proof} Because the result of Theorem \ref{robustification-technique} is a central tool in our paper and the proof given in \cite{ahlswede-gelfand-pinsker} is particularly simple we reproduce it here in full for reader's convenience.\\
Notice first that (\ref{eq:robustification-1}) is equivalent to
\begin{equation}\sum_{s^l\in\bS^l}(1-f(s^l))q(s_1)\cdot\ldots\cdot q(s_l)\le \gamma \qquad \forall q\in T(l,\bS),  \end{equation}
which in turn is equivalent to
\begin{equation}\sum_{s^l\in\bS^l}(1-f(\sigma(s^l)))q(s_{\sigma(1)})\cdot\ldots\cdot q(s_{\sigma(l)})\le \gamma \qquad \forall q\in T(l,\bS),   \end{equation}
and $\sigma\in \textrm{Perm}_l$, since $\sigma$ is bijective. Clearly, we have
\begin{equation}q(s_{\sigma(1)})\cdot\ldots\cdot q(s_{\sigma(l)})= q(s_1)\cdot\ldots\cdot q(s_l)\qquad \forall \sigma \in \textrm{Perm}_l,\forall s^l\in\bS^l, \end{equation}
and therefore, we obtain
\begin{equation}\label{eq:rob-1}
  \sum_{s^l\in\bS^l}\left( 1-\frac{1}{l!}\sum_{\sigma\in\textrm{Perm}_l}f(\sigma(s^l))  \right)q(s_1)\cdot\ldots\cdot q(s_l)\le \gamma  \qquad \forall q\in T(l,\bS).
\end{equation}
Now, for $q\in T(l,\bS)  $ let $T_q^l\subset \bS^l$ denote the set of sequences whose empirical distribution is $q$. Since $f$ takes values in $[0,1]$ we have $1-\frac{1}{l!}\sum_{\sigma\in\textrm{Perm}_l}f(\sigma(s^l))\ge 0 $ and thus from (\ref{eq:rob-1})
\begin{equation}\label{eq:rob-2}
 \sum_{s^l\in T_q^l}\left( 1-\frac{1}{l!}\sum_{\sigma\in\textrm{Perm}_l}f(\sigma(s^l))  \right)q(s_1)\cdot\ldots\cdot q(s_l)\le \gamma  \qquad \forall q\in T(l,\bS).
\end{equation}
It is clear from definition that for each $s^l\in T_q^l$ we have $\bigcup_{\sigma\in\textrm{Perm}_l}\{ \sigma(s^l) \}=T_q^l$ and, consequently, $\sum_{\sigma\in\textrm{Perm}_l}f(\sigma(s^l))$ does not depend on $s^l\in T_q^l$. Therefore, from (\ref{eq:rob-2}) we obtain
\begin{equation}\label{eq:rob-3}
 \left( 1-\frac{1}{l!}\sum_{\sigma\in\textrm{Perm}_l}f(\sigma(s^l))  \right)q^{\otimes l}(T_q^l)\le \gamma\qquad \forall  q\in T(l,\bS),\  \forall s^l\in T_q^l.
\end{equation}
On the other hand
\begin{equation}\label{eq:rob-4}
  q^{\otimes l}(T_q^l)\ge \frac{1}{(l+1)^{|\bS|}}\qquad \forall q\in T(l,\bS)
\end{equation}
holds (cf. \cite{csiszar} page 30), which, by (\ref{eq:rob-3}), implies
\begin{equation} \left( 1-\frac{1}{l!}\sum_{\sigma\in\textrm{Perm}_l}f(\sigma(s^l))  \right)\le (l+1)^{|\bS|}\cdot  \gamma\qquad \forall  q\in T(l,\bS),\  \forall s^l\in T_q^l. \end{equation}
This is the inequality we aimed to prove since $\bS^l=\bigcup_{q\in T(l,\bS) }T_q^l$.
\end{proof}
The function $f$ appearing in Theorem \ref{robustification-technique} will be built up from the entanglement fidelities of the channels constituting a finite AVQC that approximates our AVQC $\fri=\{\cn_s\}_{s\in\bS}$.\\
As another ingredient for the arguments to follow we need an achievability result for compound channels. 
\begin{lemma}\label{compound-achiev-input}
Let $k\in \nn$ and $\mathfrak T\subset \mathcal C(\hr,\kr)$. For each $\eta>0$ there is a sequence of $(l,k_l)$-codes $(\mathcal{P}^l, \crr^l)_{l\in\nn}$ and an 
$l_0(\eta)\in\nn$ 
such that for all $l\ge l_0(\eta)$
\begin{equation}\label{eq:comp-input-1}
  F_{e}(\pi_{\fr_l}, \crr^l\circ \cn^{\otimes l}\circ \mathcal{P}^l)\ge 1- 2^{-lc}\qquad \forall \cn\in \mathfrak{T},
\end{equation}
and
\begin{equation}
  \frac{1}{l}\log \dim \fr_l\ge \frac{1}{k}\max_{\rho\in\cs(\hr^{\otimes k})}\inf_{\cn\in\mathfrak T}I_c(\rho,\cn^{\otimes k})  -\eta,
\end{equation}
hold with a real constant $c=c(k,\dim \hr,\dim\kr,\mathfrak T,\eta)>0$.
\end{lemma}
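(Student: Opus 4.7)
The plan is to reduce the statement to the direct part of the compound channel coding theorem from \cite{bbn-2} by a standard blocking trick. Set $\mathfrak{T}^{\otimes k}:=\{\cn^{\otimes k}:\cn\in\mathfrak{T}\}\subset \mathcal{C}(\hr^{\otimes k},\kr^{\otimes k})$ and view this as the compound channel to be coded for, regarding each group of $k$ physical uses of $\cn$ as one ``super-use'' of $\cn^{\otimes k}$. By the hypothesis there exists $\rho_0\in\cs(\hr^{\otimes k})$ with
\[
R_k \;:=\; \inf_{\cn\in\mathfrak{T}} I_c(\rho_0,\cn^{\otimes k}) \;=\; \max_{\rho\in\cs(\hr^{\otimes k})}\inf_{\cn\in\mathfrak{T}} I_c(\rho,\cn^{\otimes k}) \;>\;0 ,
\]
so in particular $R_k > k\eta$ for all sufficiently small $\eta>0$.

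Next I would invoke the compound direct coding theorem from \cite{bbn-2}, applied to $\mathfrak{T}^{\otimes k}$ with the i.i.d.\ input state $\rho_0^{\otimes m}$: for every sufficiently small $\eta'>0$ there exists a sequence of $(m,d_m)$-entanglement transmission codes $(\widetilde{\mathcal{P}}^m,\widetilde{\crr}^m)_{m\in\nn}$ (with inputs to $\hr^{\otimes mk}$ and outputs from $\kr^{\otimes mk}$) such that
\[
\tfrac{1}{m}\log d_m \;\geq\; R_k - \eta' \qquad\text{and}\qquad F_e(\pi_{\widetilde{\fr}_m},\widetilde{\crr}^m\circ\cn^{\otimes mk}\circ \widetilde{\mathcal{P}}^m) \;\geq\; 1-2^{-mc'}\ \ \forall\, \cn\in\mathfrak{T},
\]
for some $c'=c'(k,\dim\hr,\dim\kr,\mathfrak{T},\eta')>0$ and all $m$ large enough. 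The fact that the bound holds uniformly over the (possibly infinite) set $\mathfrak{T}$ with an exponential rate comes from the net argument inside \cite{bbn-2}: one picks a $\tau$-net $\mathcal{M}_\tau$ of $\mathfrak{T}^{\otimes k}$ in the diamond norm, applies the random-coding/BSST-decoder construction and the union bound over $\mathcal{M}_\tau$ (whose cardinality is polynomial in $\tau^{-1}$), then uses Lemma~\ref{lemma:estimate-for-coherent-information} together with the bound $\|\cn^{\otimes m}-\cn'^{\otimes m}\|_\lozenge \leq m\|\cn-\cn'\|_\lozenge$ to extend the fidelity estimate from $\mathcal{M}_\tau$ to all of $\mathfrak{T}^{\otimes k}$ at the cost of a contribution $O(m\tau)$ to the error; choosing $\tau$ exponentially small in $m$ preserves the exponential decay.

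To obtain a code of block length $l$ (of physical uses) I write $l=mk+r$ with $0\leq r<k$ and set $\mathcal{P}^l:=\widetilde{\mathcal{P}}^m\otimes \mathcal{P}_0^r$, $\crr^l:=\widetilde{\crr}^m\otimes \crr_0^r$, where $(\mathcal{P}_0^r,\crr_0^r)$ is any trivial encoder--decoder pair on the $r$ leftover uses (e.g.\ mapping into a one-dimensional subspace). Then $\fr_l:=\widetilde{\fr}_m$ has $\dim\fr_l=d_m$, giving
\[
\tfrac{1}{l}\log\dim\fr_l \;\geq\; \tfrac{m}{mk+r}(R_k-\eta') \;\geq\; \tfrac{1}{k}R_k - \eta
\]
for $l$ large enough (picking $\eta'$ and the $O(1/l)$ loss appropriately), and the entanglement fidelity inherits the bound
\[
F_e(\pi_{\fr_l},\crr^l\circ\cn^{\otimes l}\circ \mathcal{P}^l) \;\geq\; 1-2^{-mc'} \;\geq\; 1-2^{-lc}
\]
with $c:=c'/(2k)$, uniformly over $\cn\in\mathfrak{T}$.

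The main obstacle is the uniform exponential error bound over the infinite set $\mathfrak{T}$; everything else is bookkeeping. This is resolved by the standard diamond-norm net argument sketched above, combined with the continuity estimate from Lemma~\ref{lemma:estimate-for-coherent-information} to guarantee that no rate is lost when passing from $\mathfrak{T}^{\otimes k}$ to its finite net.
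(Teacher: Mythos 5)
Your proposal is correct and follows essentially the same route as the paper: reduce to the compound-channel achievability machinery of \cite{bbn-2} applied to the blocked channel, extract the exponential fidelity bound from its proof (the net/union-bound argument over an infinite $\mathfrak T$), and pad the leftover $r<k$ uses with a trivial encoder--decoder pair, with exactly the same rate and exponent bookkeeping. The only difference is cosmetic: the paper makes the input-state reduction explicit via the compound BSST lemma (replacing the optimizing $\rho_0$ by a maximally mixed state on a subspace of a larger tensor power, since Lemma 9 of \cite{bbn-2} is formulated for such inputs), whereas you absorb that step into your black-box invocation of the direct theorem with the i.i.d.\ input $\rho_0^{\otimes m}$.
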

\begin{remark}
Lemma \ref{compound-achiev-input} is a strengthening of Theorem 7 in \cite{bbn-2} insofar as it explicitly points out the following: 
For any compound channel, at any rate below its capacity for transmission of entanglement, there exist sequences of codes such that 
entanglement fidelity goes to one exponentially fast.\\
The importance of this result for the investigations at hand can be understood by looking at equation (\ref{conversion-2}) in Theorem \ref{conversion-of-compound-codes}.
\end{remark}
\begin{proof} We will give the details for the case $k=1$ only. The proof for arbitrary $k$ follows by an almost identical argument.\\
According to the compound BSST Lemma (cf. \cite{bbn-1}, Lemma 6.1) to any $\eta>0$ we can find $m=m(\mathfrak T,\eta)\in\nn $ and a subspace $\gr\subset \hr^{\otimes m}$ such that
\begin{equation}\label{eq:compound-input-1}
\frac{1}{m}\inf_{\cn\in\mathfrak T}I_c(\pi_{\gr},\cn^{\otimes m})\ge \max_{\rho\in\cs (\hr)}\inf_{\cn\in\mathfrak T}I_c(\rho,\cn)-\frac{\eta}{3}.
\end{equation}
Explicitly stated, $\gr$ is the eigenspace to eigenvalue one of a frequency typical projection of the maximizer on the r.h.s. of (\ref{eq:compound-input-1}) with suitably 
chosen parameters.\\
Let us consider the compound quantum channel built up from $\{\cn^{\otimes m}:\cn\in\mathfrak T\}$.\\
Looking at the last equation in the proof of Lemma 9 in \cite{bbn-2} we see, that for this channel, for large enough $t\in\nn$, there exists
a sequence of $(t,k_t)$-codes $(\bar{\mathcal{P}}^t,\bar{\crr}^t)_{t\in\nn}$, $\bar{\mathcal{P}}^t\in\mathcal{C}(\fr_t,\hr^{\otimes mt})$, 
$\bar{\crr}^t\in \mathcal{C}(\kr^{\otimes mt},\fr'_t)$ with
\begin{equation}\label{eqn-revision-1}
 \inf_{\cn\in\mathfrak T}F_e(\pi_{\fr_t},\bar\crr^t\circ\cn^{\otimes mt}\circ\bar\cP^t)\geq1-\sqrt{3N_{\tau_t}\eps_t}-t\tau_t
\end{equation}
holds. Here, $\eps_t\leq18N_{\tau_t}2^{-tc_1}$ for some $c_1>0$ (compare equation (34) in \cite{bbn-2}), while $N_{\tau_t}\leq(6/\tau_t)^{2(d\cdot d')^2}$, 
where $d=\dim\hr$ and $d'=\dim\kr$. Inserting these estimates into (\ref{eqn-revision-1}), we know that there exists $t_0\in\nn$ such that for all $t\geq t_0$ we have
\begin{equation}\label{eqn-revision-2}
\inf_{\cn\in\mathfrak T}F_e(\pi_{\fr_t},\bar\crr^t\circ\cn^{\otimes mt}\circ\bar\cP^t)\geq1-18N_{\tau_t}2^{-lc_1/2}-t\tau_t.
\end{equation}
Setting $t_1:=\max\{t_0,18\cdot6^{2(dd')^2}\}$ we even get that for all $t\geq t_1$ the estimate 
\begin{equation}\label{eqn-revision-3}
\inf_{\cn\in\mathfrak T}F_e(\pi_{\fr_t},\bar\crr^t\circ\cn^{\otimes mt}\circ\bar\cP^t)\geq1-t(2^{-lc_1/2}(1/\tau_t)^{2(dd')^2}+\tau_t)
\end{equation}
holds.
Although the proof of Lemma 9 uses a subexponential growth of $N_{\tau_t}$, this is not at all necessary. Set
\begin{equation}c_2:=c_1/8(d\cdot d')^2,\qquad\qquad\tau_t:=2^{-tc_2}\ \ (t\in\nn).\end{equation}
Then
\begin{equation}
 \inf_{\cn\in\mathfrak T}F_e(\pi_{\fr_t},\bar\crr^t\circ\cn^{\otimes mt}\circ\bar\cP^t)\geq1-t\cdot(2^{-tc_1/4}+2^{-tc_1/8(dd')^2})
\end{equation}
and thus, defining $c':=c_1/(3\cdot d\cdot d')^2$, we know that for each $\eta>0$ there exists $t(\eta)\in\nn$ such that
 \begin{equation}\label{eq:compound-input-2}
  \inf_{\cn\in\mathfrak T}F_e(\pi_{\fr_t},\bar{\crr}^t\circ \cn^{\otimes mt}\circ \bar{\mathcal{P}}^t)\ge 1-2^{-tc'}\qquad \forall t\geq t(\eta),
\end{equation}
as well as
\begin{equation}\label{eq:compound-input-3}
 \frac{1}{t}\log k_t= \frac{1}{t}\log \dim \fr_t\ge \inf_{\cn\in\mathfrak T}I_c(\pi_{\gr},\cn^{\otimes m})-\frac{\eta}{3}\qquad \forall t\geq t(\eta)
\end{equation}
where, clearly, $c'=c'(\dim\hr,\dim \kr,\mathfrak T,\eta)$.\\
For $t,l\in\nn$ let $r\in\{0,1,\ldots, m-1  \}$ be the unique non-negative integer such that $l=mt+r$. Furthermore, let us choose for each $r\in\{0,1,\ldots, m-1  \} $ a state vector $x_r\in\hr^{\otimes r}$ and set
\begin{equation}
\fr_l:=\fr_t\otimes \cc\cdot \{ x_r \}.\label{eq:compound-input-7}
\end{equation}
Then
\begin{equation}\label{eq:compound-input-5a}
\pi_{\fr_l}=\pi_{\fr_t}\otimes |x_r\rangle\langle x_r|.
\end{equation}
Moreover we set
\begin{equation}\label{eq:compound-input-6}
\mathcal{P}^l:=\bar{\mathcal{P}}^t\otimes id_{\mathcal{B}(\hr^{\otimes r})}\quad \textrm{and}\quad \crr^l:=\bar{\crr}^t\otimes T^r ,
\end{equation}
where $T^r\in\mathcal{C}(\kr^{\otimes r},\hr^{\otimes r})$ is given by $T^r(a):=\tr(a)|x_r\rangle\langle x_r|$. Then it is clear that
\begin{eqnarray}\label{eq:compound-input-4}
  F_e(\pi_{\fr_l},\crr^l\circ \cn^{\otimes l}\circ \mathcal{P}^l)&=& F_e(\pi_{\fr_t}, \bar{\crr}^t\circ\cn^{\otimes mt}\circ \bar{\mathcal{P}}^t)\\
&\ge& 1-2^{-tc'}\\
&=& 1-2^{-\frac{l-r}{m}c'}\\
&\ge & 1-2^{-l c}\qquad \forall \cn\in\mathfrak T
\end{eqnarray}
for all $l\ge l_1(\eta)$ with $c:=\frac{c'}{2m}$, and where in the second line we have used (\ref{eq:compound-input-2}).\\
On the other hand, from equations (\ref{eq:compound-input-1}), (\ref{eq:compound-input-3}) and (\ref{eq:compound-input-7}) we obtain for $t\ge t(\eta)$
\begin{eqnarray}\label{eq:compound-input-5}
  \frac{1}{l}\log\dim\fr_l&=& \frac{1}{tm+r}\log\dim \fr_t\\
&\ge & \frac{1}{1+\frac{r}{tm}}(\max_{\rho\in\cs (\hr)}\inf_{\cn\in\mathfrak T}I_c(\rho,\cn)-\frac{\eta}{3}-\frac{\eta}{3m}) \\
&\ge& \max_{\rho\in\cs (\hr)}\inf_{\cn\in \mathfrak T}I_c(\rho,\cn)-\eta
\end{eqnarray}
if $t$ and consequently $l$ is sufficiently large. Therefore there is an $l_0(\eta)\in\nn$ such that (\ref{eq:compound-input-4}) and (\ref{eq:compound-input-5}) hold simultaneously for all $l\ge l_0(\eta)$ which concludes the proof in the case $k=1$.
\end{proof}
In the next step we will combine the robustification technique and Lemma \ref{compound-achiev-input} to prove the existence of good random codes for the AVQC $\fri=\{\cn_s  \}_{s\in\bS}$.\\
Recall that there is a canonical action of $\textrm{Perm}_l$ on $\mathcal{B}(\hr)^{\otimes l}$ given by $A_{\sigma,\hr}(a_1\otimes\ldots\otimes a_l):=a_{\sigma^{-1}(1)}\otimes\ldots\otimes a_{\sigma^{-1}(l)}$. It is easy to see that $A_{\sigma,\hr}(a)=U_{\sigma}aU_{\sigma}^{\ast},\ (a\in\B(\hr)^{\otimes l})$ with the unitary operator $U_{\sigma}:\hr^{\otimes l}\to\hr^{\otimes l}$ defined by $U_{\sigma}(x_1\otimes \ldots\otimes x_l)=x_{\sigma^{-1}(1)}\otimes \ldots\otimes x_{\sigma^{-1}(l)}$.\\
\begin{theorem}[Conversion of compound codes]\label{conversion-of-compound-codes}
Let $\fri=\{\cn_s  \}_{s\in\bS}$ be an AVQC. For each $k\in\nn$ and any sufficiently small $\eta>0$ there is a sequence of codes $(\mathcal{P}^l,\crr^l)_{l\in\mathbb N}$, $\mathcal{P}^l\in\mathcal{C}(\fr_l,\hr^{\otimes l}),\crr^l\in\mathcal{C}(\kr^{\otimes l},\fr'_l)$, for the compound channel built up from $\conv(\fri)$ (cf. (\ref{eq:conv-hull})) satisfying
\begin{equation}\label{conversion-1}
\frac{1}{l}\log \dim \fr_l\ge \frac{1}{k}\max_{\rho\in\cs(\hr^{\otimes k})}\inf_{\cn\in \conv(\fri)}I_c(\rho,\cn^{\otimes k})  -2\cdot h(8\eta),
\end{equation}
\begin{equation}\label{conversion-2}
  \frac{1}{l!}\sum_{\sigma\in\textup{Perm}_l}F_e(\pi_{\fr_l}, \crr^l\circ A_{\sigma^{-1},\kr}\circ \cn_{s^l}\circ A_{\sigma,\hr}\circ \mathcal{P}^l)\ge 1- (l+1)^{N_\eta}\cdot 2^{-lc}\qquad \forall s^l\in\bS^l
\end{equation}
for all sufficiently large $l$ with a positive number $c=c(k,\dim\hr,\dim\kr,\conv(\fri),\eta)$, $\nu:[0,1]\rightarrow\mathbb R$ defined by $\nu(x):=x+8x\log(d_\kr)+4\cdot h(x)$ ($h(\cdot)$ being the binary entropy) and an integer $N_\eta$ which depends on the set $\fri$ as well.
\end{theorem}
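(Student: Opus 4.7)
The proof combines Lemma \ref{compound-achiev-input}, which furnishes reliable codes for the compound channel built from $\conv(\fri)$, with Ahlswede's robustification (Theorem \ref{robustification-technique}), which converts uniform bounds over i.i.d.\ sources $q^{\otimes l}$ into permutation-averaged bounds holding pointwise on $\bS^l$. Since the robustification operates on a finite alphabet, the preparatory step is to discretize $\fri$ by a finite $\tau$-net $\fri_\eta=\{\cn_{s_1},\ldots,\cn_{s_{N_\eta}}\}\subseteq\fri$ in $\|\cdot\|_{\lozenge}$. Standard covering estimates on $\mathcal{C}(\hr,\kr)$ guarantee $N_\eta<\infty$, and $\tau=\tau(\eta)$ will be fixed so that the resulting approximation error is absorbed into the $2h(8\eta)$ rate loss via Lemma \ref{lemma:estimate-for-coherent-information}.

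I next apply Lemma \ref{compound-achiev-input} to the compound channel $\mathfrak T:=\conv(\fri_\eta)$ with block length $k$. For all $l$ beyond some $l_0(\eta)$ this yields a sequence of $(l,k_l)$-codes $(\mathcal P^l,\crr^l)$ with $F_e(\pi_{\fr_l},\crr^l\circ\cn^{\otimes l}\circ\mathcal P^l)\ge 1-2^{-lc'}$ uniformly on $\conv(\fri_\eta)$, and with rate $\frac{1}{l}\log\dim\fr_l\ge\frac{1}{k}\max_\rho\inf_{\cn\in\conv(\fri_\eta)}I_c(\rho,\cn^{\otimes k})-\eta$, for some $c'=c'(k,\dim\hr,\dim\kr,\conv(\fri),\eta)>0$. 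Because $D_{\lozenge}(\conv(\fri),\conv(\fri_\eta))\le\tau$, Lemma \ref{lemma:estimate-for-coherent-information} relates the coherent-information infima over $\conv(\fri_\eta)$ and $\conv(\fri)$ up to $\nu(2\tau)$; choosing $\tau$ small enough that $\eta+\nu(2\tau)\le 2h(8\eta)$ converts this into (\ref{conversion-1}).

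For the fidelity conclusion (\ref{conversion-2}) I feed these compound codes into the robustification. Let $\bS_\eta$ index $\fri_\eta$ and define $f:\bS_\eta^l\to[0,1]$ by $f(s^l):=F_e(\pi_{\fr_l},\crr^l\circ\cn_{s^l}\circ\mathcal P^l)$. Linearity of $F_e$ in the channel gives, for each $q\in T(l,\bS_\eta)$,
\[
\sum_{s^l\in\bS_\eta^l}f(s^l)\prod_{i=1}^l q(s_i)=F_e\bigl(\pi_{\fr_l},\crr^l\circ\cn_q^{\otimes l}\circ\mathcal P^l\bigr)\ge 1-2^{-lc'},
\]
since $\cn_q:=\sum_s q(s)\cn_s\in\conv(\fri_\eta)$. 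Theorem \ref{robustification-technique} with $\gamma=2^{-lc'}$ therefore delivers $\frac{1}{l!}\sum_{\sigma\in\textup{Perm}_l}f(\sigma(s^l))\ge 1-(l+1)^{N_\eta}\cdot 2^{-lc'}$ for every $s^l\in\bS_\eta^l$, and the covariance identity $A_{\sigma^{-1},\kr}\circ\cn_{s^l}\circ A_{\sigma,\hr}=\cn_{\sigma(s^l)}$, verified directly on simple tensors via the defining action of the $A_\sigma$, recasts this as (\ref{conversion-2}) on $\bS_\eta^l$.

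The main obstacle is the extension of (\ref{conversion-2}) from $\bS_\eta^l$ to all of $\bS^l$. Given $s^l\in\bS^l$, pick componentwise approximants $\tilde s^l\in\bS_\eta^l$ with $\|\cn_{s_i}-\cn_{\tilde s_i}\|_{\lozenge}\le\tau$; then by telescoping $\|\cn_{s^l}-\cn_{\tilde s^l}\|_{\lozenge}\le l\tau$, and since $F_e$ is $1$-Lipschitz in the channel under $\|\cdot\|_{\lozenge}$ (the $A_\sigma$ being diamond isometries) the permutation averages for $s^l$ and $\tilde s^l$ differ by at most $l\tau$. This reduces matters to a careful, $\eta$-dependent choice of $\tau$ and decay rate $c<c'$ balancing the drift $l\tau$ against the polynomial prefactor $(l+1)^{N_\eta}$ in (\ref{conversion-2}). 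The step is immediate for finite $\fri$ (take $\tau=0$, $N_\eta=|\fri|$) and is the genuinely delicate bookkeeping in the general case, where the presence of $N_\eta=N_\eta(\fri,\eta)$ in the prefactor of the stated bound is precisely what leaves room for this balancing.
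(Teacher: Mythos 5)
Your construction works verbatim for finite $\fri$ (and there it coincides with the paper's strategy: compound codes for $\conv(\fri)$ plus robustification plus the covariance identity $A_{\sigma^{-1},\kr}\circ\cn_{s^l}\circ A_{\sigma,\hr}=\cn_{\sigma(s^l)}$). The genuine gap is exactly the step you flag as ``delicate bookkeeping'': the passage from $\bS_\eta^l$ to $\bS^l$ by componentwise diamond-norm approximation cannot be made to work. With a fixed inner net of mesh $\tau=\tau(\eta)>0$, the telescoping bound $\|\cn_{s^l}-\cn_{\tilde s^l}\|_\lozenge\le l\tau$ produces an additive fidelity loss of order $l\tau$, which diverges as $l\to\infty$; your resulting bound $1-(l+1)^{N_\eta}2^{-lc'}-O(l\tau)$ does not even converge to $1$, so no choice of $c<c'$ recovers the claimed $1-(l+1)^{N_\eta}2^{-lc}$ (which is needed later, e.g.\ in Lemma \ref{random-code-reduction}, where exponential decay of the error is used). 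Trying instead to let $\tau=\tau_l\to 0$ at an exponential rate forces the net size $N$ to grow like $(6/\tau_l)^{2(\dim\hr\dim\kr)^2}$, so the robustification prefactor $(l+1)^{N}$ becomes super-exponential and again destroys the bound; also the compound code of Lemma \ref{compound-achiev-input} is built for one fixed set and cannot silently track an $l$-dependent net.

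The paper avoids any such continuity argument by making the convex decomposition \emph{exact} rather than approximate: it first pushes $\conv(\fri)$ into the relative interior of $\mathcal C(\hr,\kr)$ with the depolarizing map $\mathfrak D_\eta$, then (Lemma \ref{lemma-direct-infinite-2}) encloses $\mathfrak D_\eta(\conv(\fri))$ in a polytope $P_\eta\subset\mathcal C(\hr,\kr)$ with finitely many extreme points $\{\cn_e\}_{e\in E_\eta}$, builds the compound code for the (infinite) set $P_\eta$, and robustifies over the alphabet $E_\eta$. Since for \emph{every} $s\in\bS$ one has $\mathfrak D_\eta\circ\cn_s=\sum_{e}q(e|s)\cn_e$ exactly, multilinearity of $F_e$ in the channel transfers the permutation-averaged bound to all $s^l\in\bS^l$ with no $l$-dependent loss, the depolarizer $\mathfrak D_\eta^{\otimes l}$ being absorbed into the recovery operation; the outer approximation costs only a constant $D_\lozenge(P_\eta,\conv(\fri))\le 4\eta$, which is absorbed into the rate via Lemma \ref{lemma:estimate-for-coherent-information}. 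To repair your proof you would need this (or an equivalent) exact-decomposition device; the inner-net-plus-Lipschitz extension as proposed does not yield (\ref{conversion-2}) for infinite $\fri$.
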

\begin{remark}
Notice that (\ref{conversion-2}) guarantees us the existence of a sequence of random codes with recovery operations $\crr^l\circ A_{\sigma^{-1},\kr}$, encoding operations 
$A_{\sigma,\hr}\circ\cP^l$, and measure being the uniform distribution on $\textrm{Perm}_l$.
\end{remark}

The idea of the proof is the following. We want to approximate the set $\conv(\fri)$ from the outside by using a polytope $P_\eta$ with $N_\eta$ extreme points.
Then, our results for compound codes and an application of the robustification technique yield a sequence of codes which have asymptotically optimal performance
for the AVQC $P_\eta$. Since $\conv(\fri)\subset P_\eta$, they will also have asymptotically optimal performance for $\fri$.\\
A problem occurs if $\conv(\fri)$ touches the boundary of the set of quantum channels because parts of that boundary are curved and the approximating polytope $P_{\eta}$ may contain maps that are not channels.
Therefore, an intermediate step consists of slightly moving $\conv(\fri)$ away from the boundary. This may be seen as application of a completely positive map and can therefore be absorbed into the recovery operation.\\
During the proof we are going to make use of the following Lemma, that will be proven first:
\begin{lemma}\label{lemma-direct-infinite-2}
Let $A,B$ be compact convex sets in $\mathbb C^n$ with $A\subset B$ and
\begin{equation}
d(\rebd  B, A):=\inf\{||b-a||:b\in\rebd  B,a\in A\}=t>0,
\end{equation}
where $||\cdot||$ denotes any norm.\\
Let $P\supset A$ be a polytope with $D(A,P)\leq\delta$, where $\delta\in(0,t]$ and $D$ is the Hausdorff distance induced by $||\cdot||$. Then $P':=P\cap \aff  B$ is a polytope and $P' \subset B$.
\end{lemma}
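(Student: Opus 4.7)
The plan is to address the two claims separately. The first —— that $P' = P \cap \aff B$ is a polytope —— is essentially a bookkeeping exercise in convex geometry: write $P$ as a bounded intersection of finitely many closed half-spaces (this is the standard half-space characterization of a polytope), and write $\aff B$ as an intersection of finitely many hyperplanes (each hyperplane being an intersection of two opposite closed half-spaces). Then $P'$ is again a bounded intersection of finitely many closed half-spaces in $\mathbb{C}^n \simeq \mathbb{R}^{2n}$, hence a polytope. Boundedness follows from $P' \subset P$ which is already bounded.

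The second claim, $P' \subset B$, I would prove by contradiction using a line-segment argument. Suppose there exists $x \in P \cap \aff B$ with $x \notin B$. Since $A \subset P$ and $D(A,P) \leq \delta$, the Hausdorff-distance bound (together with compactness of $A$) yields a point $a \in A$ with $\|x - a\| \leq \delta$. Consider the segment $[a, x]$; crucially, it lies entirely inside $\aff B$ because both endpoints do. Since $a \in B$, $x \notin B$, and $B$ is compact, the number $\lambda^{*} := \sup\{\lambda \in [0,1] : (1-\lambda)a + \lambda x \in B\}$ is attained and strictly less than or equal to $1$, and the resulting point $y^{*} := (1-\lambda^{*})a + \lambda^{*} x$ must lie in $\rebd B$ (any point slightly further along the segment leaves $B$, and $B$ is closed, so $y^{*}$ is a limit of non-$B$ points from one side and $B$-points from the other, within $\aff B$). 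Then on the one hand $\|y^{*} - a\| = \lambda^{*}\|x - a\| \leq \lambda^{*} \delta \leq \delta \leq t$, while on the other hand the hypothesis $d(\rebd B, A) = t$ forces $\|y^{*} - a\| \geq t$. If $\delta < t$ or $\lambda^{*} < 1$ this is a strict contradiction; in the sole borderline case $\lambda^{*} = 1$ and $\delta = t$ we get $y^{*} = x \in \rebd B \subset B$, contradicting $x \notin B$.

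The main subtlety I anticipate is keeping $\rebd B$ interpreted as the relative boundary \emph{inside $\aff B$} rather than as a topological boundary in the ambient $\mathbb{C}^n$ (which could be all of $B$ if $B$ is not full-dimensional), and ensuring that the segment $[a,x]$ does not accidentally escape $\aff B$ —— this is precisely why intersecting with $\aff B$ is essential in the definition of $P'$, since a point of $P$ lying off $\aff B$ might well be outside $B$ without the segment argument being available. Once one is careful about this, the proof is elementary.
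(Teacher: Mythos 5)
Your proof is correct and follows essentially the same route as the paper's: a point of $P'\setminus B$ is matched by the Hausdorff bound to a point of $A$ within distance $\delta$, the segment between them (which stays in $\aff B$) exits $B$ at a relative-boundary point, and the resulting distance $\le\lambda^{*}\delta$ contradicts $d(\rebd B,A)=t$, with the strictness/borderline issue handled just as the paper does via $\lambda^{*}<1$. The only cosmetic difference is that you spell out the half-space argument for the polytope claim and handle the (impossible) case $\lambda^{*}=1$ explicitly, whereas the paper invokes $A\subset\ri B$ to get $\lambda^{*}\in(0,1)$ directly.
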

\begin{proof}[Proof of Lemma \ref{lemma-direct-infinite-2}] The assertion that $P'$ is a polytope is clear. Suppose $\exists p\in P'\backslash B$. Then since $D(A,P)\leq\delta$ we have $P\subset (A)_\delta$ (cf. \cite{webster}, Theorem 2.7.3).
But this means, since $P'\subset P$, that to our $p\in P'\backslash B$ we can find $a_\delta\in A$ with
\begin{equation}
||p-a_\delta||\leq\delta.\label{eqn-direct-infinite-4}
\end{equation}
For $\lambda\in[0,1]$ define
\begin{equation}
x_\lambda:=(1-\lambda)a_\delta+\lambda p.\label{eqn-direct-infinite-5}
\end{equation}
Then there is $\lambda^*\in(0,1)$ such that
\begin{equation}
x:=x_\lambda^*\ \ \in\ \ \rebd B.
\end{equation}
This is seen as follows: Since $d(\rebd B,A)=t>0$ we have $A\subset \ri B$. Set
\begin{equation}
L:=\{\lambda\in(0,1]:(1-\lambda)a_\delta+\lambda p\in B\}.
\end{equation}
From $a_\delta\in \ri  B$ it follows that $L\neq\emptyset$ and from the fact that $B$ is compact and convex we then get that $L=(0,\lambda^*]$. Now,
\begin{eqnarray}
||x-a_\delta||&=&||(1-\lambda^*)a_\delta+\lambda^*p-(1-\lambda^*)a_\delta-\lambda^*a_\delta||\\
&=&\lambda^*||p-a_\delta||\\
&\leq&\lambda^*\cdot\delta\\
&<&t,
\end{eqnarray}
where the last line follows from $\lambda\in(0,1)$. This is a contradiction to $d(\rebd  B,A)=t$.
\end{proof}
\begin{proof}[Proof of Theorem \ref{conversion-of-compound-codes}] We can suppose that
\begin{equation}\max_{\rho\in\cs(\hr^{\otimes k})}\inf_{\cn\in \conv(\fri)}I_c(\rho,\cn^{\otimes k})>0,\label{eqn:direct-infinite-20}\end{equation}
because otherwise our claim is obviously true. We will further assume that $\fri$, and therefore $\conv(\fri)$ as well, is compact. Since the Hausdorff-distance of $\conv(\fri)$ to its closure (in $||\cdot||_\lozenge$) is zero, this does not change the left hand side of equation (\ref{eqn:direct-infinite-20}), due to the estimates in Lemma \ref{lemma:estimate-for-coherent-information}. Since $\fri$ is a subset of its norm-closure, good codes for the norm-closure will also work for $\fri$. Thus, our assumption is a pure technicality and, indeed, without loss of generality.\\
Now let us, for $\eps\leq1$, by $\mathfrak D_\eps$ denote the operation $\mathfrak D_\eps(\cdot):=(1-\eps)\textup{id}_{\mathcal{B}(\kr)}(\cdot)+\frac{\eps}{\dim\kr}\mathbf1_\kr\tr(\cdot)$. If $\eps\geq0$, this is nothing but a depolarizing channel.\\
By Lemma 2.3.3 in \cite{webster} and since $\mathfrak D_1\circ\cn\notin \rebd \mathcal C(\hr,\kr)$ for arbitrary $\cn\in\mathcal C(\hr,\kr)$ and $\eta>0$,
we have 
\begin{equation}
\mathfrak D_\eta(\conv(\fri))\subset \ri \mathcal C(\hr,\kr).
\end{equation}
Since $\mathfrak D_\eta(\conv(\fri))$ is compact, we know that
\begin{equation}
c':=\min\{||\cn-\cn'||_\lozenge:\cn\in\mathfrak D_\eta(\conv(\fri)), \cn'\in \rebd \mathcal C(\hr,\kr)\}
\end{equation}
satisfies $c'>0$. Thus, by Lemma \ref{lemma-direct-infinite-2} and Theorem 3.1.6 in \cite{webster} there exists a polytope $P_\eta\subset\mathcal C(\hr,\kr)$ such that $\mathfrak D_\eta(\conv(\fri))\subset P_\eta$ and 
\begin{equation}
 D_\lozenge(\mathfrak D_\eta(\conv(\fri)),P_\eta)\leq2\eta.
\end{equation}
The set of extremal points of $P_\eta$ we denote by $\textup{ext}(P_\eta)=\{\cn_e\}_{e\in E_\eta}$, where $E_\eta$ is a finite set indexing the extremal points, the number of which we label $N_\eta$. 
Consider the compound quantum channel (the papers \cite{bbn-1,bbn-2} give proper definitions of this object)  $P_\eta$. It follows from Lemma \ref{compound-achiev-input} that there exists a sequence of $(l,k_l)$-codes $(\mathcal{P}^l,\crr^l)_{l\in\nn}$ such that for all $l\ge l_0(\eta)$
\begin{equation}\label{eq:conversion-1}
 F_{e}(\pi_{\fr_l}, \crr^l\circ \cn^{\otimes l}\circ \mathcal{P}^l)\ge 1- 2^{-lc}\qquad \forall \cn\in P_\eta,
\end{equation}
and
\begin{equation}\label{eq:conversion-2}
  \frac{1}{l}\log \dim \fr_l\ge \frac{1}{k}\max_{\rho\in\cs(\hr^{\otimes k})}\inf_{\cn\in P_\eta}I_c(\rho,\cn^{\otimes k})  -\eta,
\end{equation}
with a positive number $c=c(k,\dim\hr,\dim\kr,\conv(\fri),\eta)$.\\
Let us define $f:E_\eta^l\to [0,1]$ by
\begin{equation} f(e^l):= F_{e}(\pi_{\fr_l}, \crr^l\circ \cn_{e^l}\circ \mathcal{P}^l). \end{equation}
Then (\ref{eq:conversion-1}) implies that
\begin{equation}\label{eq:conversion-3}
 \sum_{e^l\in E_\eta^l}f(e^l)q(e_1)\cdot\ldots\cdot q(e_l)\ge 1- 2^{-lc} \qquad \forall q\in T(l,E_\eta).
\end{equation}
But (\ref{eq:conversion-3}) and Theorem \ref{robustification-technique} yield
\begin{equation}\label{eq:conversion-4}
\frac{1}{l!}\sum_{\sigma\in\textrm{Perm}_l} F_{e}(\pi_{\fr_l}, \crr^l\circ \cn_{\sigma(e^l)}\circ \mathcal{P}^l)\ge 1-(l+1)^{N_\eta}\cdot 2^{-lc}\qquad \forall e^l\in E_\eta^l.
\end{equation}
By (\ref{eq:conversion-2}) and (\ref{eq:conversion-4}) we are guaranteed the existence of a good random code for $P_\eta$ if we can somehow consider permutations as part of the encoding and recovery procedure.
More precisely, we will now show that
\begin{equation}\label{eq:conversion-5}
  \cn_{\sigma (e^l)}= A_{\sigma^{-1},\kr}\circ \cn_{e^l}\circ A_{\sigma,\hr}\qquad \forall e^l\in E_\eta^l.
\end{equation}
 To this end, let $\psi=\psi_1\otimes\ldots\otimes\psi_l,\ \varphi=\varphi_1\otimes\ldots\otimes \varphi_l\in\hr^{\otimes l}$. Then
\begin{eqnarray}
A_{\sigma^{-1},\kr}\circ\cn_{e^l}\circ A_{\sigma,\hr}(|\psi\rangle\langle\varphi|)&=&(A_{\sigma^{-1},\kr}\circ\cn_{e^l})(|\psi_{\sigma^{-1}(1)}\rangle\langle\varphi_{\sigma^{-1}(1)}|\otimes\ldots\otimes|\psi_{\sigma^{-1}(l)}\rangle\langle\varphi_{\sigma^{-1}(l)}|)\\
&=&A_{\sigma^{-1},\kr}(\otimes_{i=1}^{l} \cn_{s_i}(|\psi_{\sigma^{-1}(i)}\rangle\langle\varphi_{\sigma^{-1}(i)}|))\\
&=&\otimes_{i=1}^l\cn_{s_{\sigma(i)}}(|\psi_{i}\rangle\langle\varphi_{i}|)\\
&=&\cn_{\sigma(e^l)}(\otimes_{i=1}^{l}|\psi_{i}\rangle\langle\varphi_{i}|)\\
&=& \cn_{\sigma(e^l)}(|\psi\rangle\langle\varphi|).
\end{eqnarray}
Therefore,
\begin{equation} A_{\sigma^{-1},\kr}\circ\cn_{e^l}\circ A_{\sigma,\hr}=\cn_{\sigma(e^l)}.\end{equation}
By construction of $P_\eta$ we know that for every $\cn_s\in\fri$ there exists a probability distribution $q(\cdot|s)\in\mathfrak P(E_\eta)$ such that
\begin{equation}
 \mathfrak D_\eta\circ\cn_s=\sum_{e\in E_\eta}q(e|s)\cn_e\label{eq:conversion-6}
\end{equation}
holds. We define 
\begin{equation}
\tilde\crr^l_\sigma:=\crr^l\circ A_{\sigma^{-1},\kr}\circ\mathfrak D_\eta^{\otimes l},\qquad \tilde\cP^l_\sigma:=A_{\sigma,\hr}\circ\cP^l.\label{eq:conversion-7}
\end{equation}
Combining the equations (\ref{eq:conversion-4}),(\ref{eq:conversion-5},),(\ref{eq:conversion-6}),(\ref{eq:conversion-7}) we get for every $s^l\in\bS^l$:
\begin{eqnarray}
\sum_{\sigma\in\textrm{Perm}_l}F_e(\pi_{\fr_l},\tilde\crr^l_\sigma\circ\cn_{s^l}\circ\tilde\cP^l_\sigma)&=&\sum_{\sigma\in\textrm{Perm}_l}F_e(\pi_{\fr_l},\crr^l\circ A_{\sigma^{-1},\kr}\circ\mathfrak D_\eta^{\otimes l}\circ\cn_{s^l}\circ A_{\sigma,\hr}\circ\cP^l)\\
&=&\sum_{\sigma\in\textrm{Perm}_l}F_e(\pi_{\fr_l},\crr^l\circ A_{\sigma^{-1},\kr}\circ\sum_{e^l\in E_\eta^l}\prod_{i=1}^lq(e_i|s_i)\otimes_{j=1}^l\cn_{e_j}\circ A_{\sigma,\hr}\circ\cP^l)\\
&=&\sum_{e^l\in E_\eta^l}\prod_{i=1}^lq(e_i|s_i)\sum_{\sigma\in\textrm{Perm}_l}F_e(\pi_{\fr_l},\crr^l\circ A_{\sigma^{-1},\kr}\circ\cn_{e^l}\circ A_{\sigma,\hr}\circ\cP^l)\\
&=&\sum_{e^l\in E_\eta^l}\prod_{i=1}^lq(e_i|s_i)\sum_{\sigma\in\textrm{Perm}_l}F_e(\pi_{\fr_l},\crr^l\circ\cn_{\sigma(e^l)}\circ\cP^l)\\
&\geq&(l!)(1-(l+1)^{N_\eta}\cdot 2^{-lc})
\end{eqnarray}
Now, defining a discretely supported probability measure $\mu_l$, $l\in\nn$ by
\begin{equation}\mu_l:=\frac{1}{l!}\sum_{\sigma\in\textrm{Perm}_l}\delta_{(\tilde\crr^l_\sigma,\tilde\cP^l_\sigma)}, \end{equation}
where $\delta_{(\tilde\crr^l_\sigma,\tilde\cP^l_\sigma)}$ denotes the probability measure that puts measure $1$ on the point $(\tilde\crr^l_\sigma, \cP^l_\sigma)$,
we obtain for each $k\in\nn$ a sequence of $(l,k_l)$-random codes for $\fri$ achieving
\begin{equation} \frac{1}{k}\max_{\rho\in\cs(\hr^{\otimes k})}\inf_{\cn\in P_\eta}I_c(\rho,\cn^{\otimes k})-\eta. \end{equation}
It remains to show that this last number is close to (\ref{conversion-1}).
This in turn is true mostly because, by construction, $D_\lozenge(P_\eta,\mathfrak D_\eta(\conv(\fri)))\leq2\eta$ holds and, as will be shown, $D_\lozenge(\conv(\fri),\mathfrak D_\eta(\conv(\fri)))\leq2\eta$ holds. 
\\
We start with the upper bound on $D_\lozenge(\conv(\fri),\mathfrak D_{\eta}(\conv(\fri)))$, which will be derived in a slightly more general way. For arbitrary $s\leq1$ and a compact $A\subset\mathcal C(\hr,\kr)$ 
\begin{equation}
D_\lozenge(\mathfrak D_s(A),A)\leq |s|\cdot\max_{x\in A}||x-\mathfrak D_1\circ x||\leq2|s|
\end{equation}
holds, where the second inequality follows from $A\subset\mathcal C(\hr,\kr)$ in an obvious way and we only prove the first one:
\begin{eqnarray}
\max_{x\in\mathfrak D_s(A)}\min_{y\in A}||x-y||_\lozenge&=&\max_{x\in A}\min_{y\in A}||\mathfrak D_s(x)-y||_\lozenge\\
&=&\max_{x\in A}\min_{y\in A}||(1-s)x+s\mathfrak D_1\circ x-(1-s)y-sy||_\lozenge\\
&\leq&\max_{x\in A}\min_{y\in A}(||(1-s)x-(1-s)y||_\lozenge+||sy-s\mathfrak D_1\circ x||_\lozenge)\\
&\leq&\max_{x\in A}|s|\cdot||x-\mathfrak D_1\circ x||_\lozenge.\label{eqn-direct-infinite-10}
\end{eqnarray}
A similar calculation leads to
\begin{equation}
\max_{x\in A}\min_{y\in \mathfrak D_s(A)}||x-y||_\lozenge\leq|s|\cdot\max_{x\in A}\cdot||x-\mathfrak D_1\circ x||_\lozenge.\label{eqn-direct-infinite-11}
\end{equation}
Application of the triangle inequality for $D_\lozenge$ gives us the estimate
\begin{equation}
D_\lozenge(P_\eta,\conv(\fri))\leq4\eta.
\end{equation}
Lemma 16 in \cite{bbn-2} (originating back to \cite{leung-smith}), finally makes the connection between our set-theoretic approximations and the capacity formula:
\begin{equation}
|\frac{1}{k}\max_{\rho\in\cs(\hr^{\otimes k})}\inf_{\cn\in P_\eta}I_c(\rho,\cn^{\otimes k})-\frac{1}{k}\max_{\rho\in\cs(\hr^{\otimes k})}\inf_{\cn\in \conv(\fri)}I_c(\rho,\cn^{\otimes k})|\leq\nu(8\eta)
\end{equation}
with $\nu(x)=x+8x\log(d_\kr)+4h(x)$. It is obvious that $-\eta\geq-\nu(8\eta)$ holds, therefore for $l$ large enough
\begin{equation}
\frac{1}{l}\log\dim\fr_l\geq\frac{1}{k}\max_{\rho\in\cs(\hr^{\otimes k})}\inf_{\cn\in \conv(\fri)}I_c(\rho,\cn^{\otimes k})-2\nu(8\eta).
\end{equation}
\end{proof}
This leads to the following corollary to Theorem \ref{conversion-of-compound-codes}.
\begin{corollary}\label{achievability-finite-avqc}
For any AVQC $\fri=\{\cn_s  \}_{s\in\bS}$ we have
\begin{equation}\mathcal{A}_{\textup{random}}(\fri)\ge \lim_{l\to\infty}\frac{1}{l}\max_{\rho\in\cs(\hr^{\otimes l})}\inf_{\cn\in \conv(\fri)}I_c(\rho,\cn^{\otimes l}).  \end{equation}
\end{corollary}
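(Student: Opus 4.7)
The plan is to extract the corollary essentially as a bookkeeping consequence of Theorem \ref{conversion-of-compound-codes}. Fix $k\in\nn$ and let $\eta>0$ be small. Theorem \ref{conversion-of-compound-codes} furnishes, for each sufficiently large $l$, a pair $(\mathcal P^l,\crr^l)$ together with a family of ``twisted'' codes $(\tilde\crr^l_\sigma,\tilde\cP^l_\sigma)_{\sigma\in\textup{Perm}_l}$ such that
\[
\frac{1}{l!}\sum_{\sigma\in\textup{Perm}_l}F_e(\pi_{\fr_l},\tilde\crr^l_\sigma\circ\cn_{s^l}\circ\tilde\cP^l_\sigma)\ge 1-(l+1)^{N_\eta}\cdot 2^{-lc}
\]
uniformly in $s^l\in\bS^l$, and $\frac{1}{l}\log\dim\fr_l\ge \frac{1}{k}\max_{\rho}\inf_{\cn\in\conv(\fri)}I_c(\rho,\cn^{\otimes k})-2\nu(8\eta)$.

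First I would package these permuted codes as a random code by setting $\mu_l:=\frac{1}{l!}\sum_{\sigma\in\textup{Perm}_l}\delta_{(\tilde\cP^l_\sigma,\tilde\crr^l_\sigma)}$, which is a discretely supported probability measure on the space of encoder-decoder pairs. Since its support is finite, measurability with respect to any sigma-algebra $\sigma_l$ containing the singletons is automatic, so $\mu_l$ defines an $(l,k_l)$-random entanglement transmission code in the sense of our definitions. The above inequality then reads
\[
\inf_{s^l\in\bS^l}\int F_e(\pi_{\fr_l},\crr^l\circ\cn_{s^l}\circ\cP^l)\,d\mu_l(\cP^l,\crr^l)\ge 1-(l+1)^{N_\eta}\cdot 2^{-lc},
\]
and the right-hand side tends to $1$ as $l\to\infty$ because $c>0$ and $N_\eta$ is fixed once $\eta$ is fixed.

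Consequently, for each fixed $k$ and each sufficiently small $\eta>0$, the number $\frac{1}{k}\max_{\rho\in\cs(\hr^{\otimes k})}\inf_{\cn\in\conv(\fri)}I_c(\rho,\cn^{\otimes k})-2\nu(8\eta)$ is an achievable random entanglement transmission rate, hence $\A_{\textup{random}}(\fri)\ge\frac{1}{k}\max_{\rho}\inf_{\cn\in\conv(\fri)}I_c(\rho,\cn^{\otimes k})-2\nu(8\eta)$. Letting $\eta\downarrow 0$ (using that $\nu(x)\to 0$ as $x\to 0$) and then taking the supremum/limit over $k$ yields the desired bound. Existence of the limit $\lim_{l\to\infty}\frac{1}{l}\max_{\rho}\inf_{\cn\in\conv(\fri)}I_c(\rho,\cn^{\otimes l})$ follows from a standard superadditivity argument via Fekete's lemma applied to the sequence $l\mapsto\max_{\rho\in\cs(\hr^{\otimes l})}\inf_{\cn\in\conv(\fri)}I_c(\rho,\cn^{\otimes l})$, so taking a supremum over $k$ is equivalent to taking the limit.

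There is essentially no obstacle here beyond checking that the permutation-averaged construction in Theorem \ref{conversion-of-compound-codes} is indeed a legitimate random code in the sense of Definition \ref{def:random-cap-ent-trans}; the potential issue of passing $\eta\to 0$ uniformly in $k$ is avoided because the two limits are taken in sequence ($\eta$ first, then $k$).
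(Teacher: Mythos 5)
Your proposal is correct and follows essentially the same route as the paper: the paper's proof of Theorem \ref{conversion-of-compound-codes} already packages the permuted codes into the discretely supported measure $\mu_l=\frac{1}{l!}\sum_{\sigma\in\textup{Perm}_l}\delta_{(\tilde\crr^l_\sigma,\tilde\cP^l_\sigma)}$, observes that the uniform fidelity bound $1-(l+1)^{N_\eta}2^{-lc}$ tends to $1$, and the corollary is then obtained exactly by letting $\eta\downarrow 0$ and $k\to\infty$, with the identification of the supremum over $k$ with the limit resting on the standard superadditivity/Fekete argument you invoke. Your explicit checks (finite support giving measurability, and taking the limits in the order $\eta$ first, then $k$) are just the bookkeeping the paper leaves implicit.
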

Together with Theorem \ref{theorem:converse-random-compound-general} this proves the first part of Theorem \ref{quant-ahlswede-dichotomy}.
\section{\label{sec:derandomization}Achievability of entanglement transmission rate II: Derandomization}

In this section we will prove the second claim made in Theorem \ref{quant-ahlswede-dichotomy} by following Ahlswede's elimination technique. The main result of this section is the following Theorem.
\begin{theorem}\label{dichotomy}
Let  $\fri=\{\cn_s  \}_{s\in\bS}$ be an AVQC. Then $C_{\textup{det}}(\fri)>0$ implies $\A_{\textup{det}}(\fri)=\A_{\textup{random}}(\fri) $.
\end{theorem}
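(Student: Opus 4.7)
The plan is to follow Ahlswede's classical elimination technique adapted to the quantum entanglement-transmission setting. Given the first part of Theorem \ref{quant-ahlswede-dichotomy} (already established via Corollary \ref{achievability-finite-avqc}), we have at our disposal a sequence of $(l,k_l)$-random codes $\mu_l$ achieving $\liminf_l \tfrac{1}{l}\log k_l = \A_{\textup{random}}(\fri)$ with worst-case average entanglement fidelity at least $1-\epsilon_l$, $\epsilon_l\to 0$. I will (i) derandomize $\mu_l$ into a uniform average over $K_l = \mathrm{poly}(l)$ deterministic encoder-decoder pairs, (ii) use a positive-rate classical deterministic code for $\fri$, available because $C_{\textup{det}}(\fri)>0$, to transmit the index $j\in[K_l]$ over a short prefix block of length $l' = O(\log l)$, and (iii) concatenate the two to obtain a purely deterministic entanglement transmission code with essentially the same rate and vanishing error.

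For step (i), draw $K_l$ pairs $(\cP^l_j,\crr^l_j)_{j=1}^{K_l}$ i.i.d.\ from $\mu_l$. For each fixed $s^l$, the $[0,1]$-valued random variables $F_e(\pi_{\fr_l},\crr^l_j\circ\cn_{s^l}\circ\cP^l_j)$ have expectation at least $1-\epsilon_l$, so by Hoeffding's inequality the empirical mean is within $\epsilon_l$ of its expectation with probability at least $1-2\exp(-2K_l\epsilon_l^2)$. For finite $\bS$, a union bound over the $|\bS|^l$ adversarial sequences shows that $K_l$ of order $l\log|\bS|/\epsilon_l^2$ suffices to guarantee the existence of a realization $(\cP^l_j,\crr^l_j)_{j=1}^{K_l}$ with $\tfrac{1}{K_l}\sum_j F_e(\pi_{\fr_l},\crr^l_j\circ\cn_{s^l}\circ\cP^l_j)\geq 1-2\epsilon_l$ uniformly in $s^l\in\bS^l$. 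For infinite $\fri$, I would first replace $\fri$ by a finite $\tau$-net in diamond norm (as in Section \ref{subsec:general-converse}) and transfer the bound back via continuity of $F_e$ in the channel, keeping $K_l = \mathrm{poly}(l)$ throughout.

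For step (ii), since $C_{\textup{det}}(\fri)>0$ there exists a classical deterministic code that transmits $\lceil\log K_l\rceil = O(\log l)$ bits over the AVQC using $l' = O(\log l)$ channel uses with average error $\delta_{l'}\to 0$. Writing this as an $(l',K_l)$-code $(\rho_j,D_j)_{j=1}^{K_l}$ satisfying $\tfrac{1}{K_l}\sum_j [1-\tr(\cn_{s^{l'}}(\rho_j)D_j)]\leq \delta_{l'}$ uniformly in $s^{l'}$, define the combined deterministic $(l+l',k_l)$-entanglement code by the encoder $\cP^{l+l'}:\psi\mapsto \tfrac{1}{K_l}\sum_{j=1}^{K_l}\rho_j\otimes \cP^l_j(\psi)$ and the decoder that first performs the POVM $\{D_{\hat\jmath}\}$ on the first $l'$ output systems and then applies $\crr^l_{\hat\jmath}$ to the remaining $l$ systems, conditioned on the classical outcome. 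Both maps are CPTP, so this is a bona fide deterministic code.

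For any adversarial sequence $s^{l+l'}=(s^{l'},s^l)$, linearity of $F_e$ in the channel together with retention of only the correctly-decoded diagonal term $\hat\jmath = j$ yields the bound $F_e(\pi_{\fr_l},\crr^{l+l'}\circ\cn_{s^{l+l'}}\circ\cP^{l+l'})\geq \tfrac{1}{K_l}\sum_j F_e(\pi_{\fr_l},\crr^l_j\circ\cn_{s^l}\circ\cP^l_j) - \tfrac{1}{K_l}\sum_j \mathrm{Pr}[\hat\jmath\neq j\mid j,s^{l'}] \geq (1-2\epsilon_l)-\delta_{l'}$, which tends to $1$ uniformly in $s^{l+l'}$. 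Since $l'/l\to 0$, the rate $\tfrac{1}{l+l'}\log k_l$ still converges to $\A_{\textup{random}}(\fri)$, giving $\A_{\textup{det}}(\fri)\geq \A_{\textup{random}}(\fri)$; the reverse inequality is immediate from the definitions. The principal technical obstacle is the derandomization step for infinite $\fri$, where the naive union bound over $s^l$ must be replaced by the diamond-norm net approximation already used in the converse; a minor but essential point is that the uniform average over $j$ defining $\cP^{l+l'}$ is absorbed into a single CPTP map, so no external randomness is invoked and the deterministic-code requirement is respected.
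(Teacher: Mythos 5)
Your proposal is correct and follows essentially the same route as the paper's proof: a random-code-reduction step (i.i.d.\ sampling of polynomially many deterministic encoder--decoder pairs from $\mu_l$, a concentration bound, and a union bound over a polynomial-size net/polytope approximation of $\fri$ in the infinite case), followed by transmission of the code index over a short classical block guaranteed by $C_{\textup{det}}(\fri)>0$, and the same concatenation estimate (your inequality $ab\ge a-(1-b)$ for $a,b\in[0,1]$ plays the role of the paper's Lemma \ref{innerproduct-lemma}). One bookkeeping caveat: take the Hoeffding deviation to be, say, $\max\{\epsilon_l,1/l\}$ rather than $\epsilon_l$ itself, since the random codes furnished by Theorem \ref{conversion-of-compound-codes} have exponentially small $\epsilon_l$, so your stated sample size $K_l\sim l\log|\bS|/\epsilon_l^2$ would then fail to be polynomial in $l$ (the paper instead exploits exactly this exponential smallness via an exponential-moment bound to get away with $K_l=l^2$).
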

The proof of Theorem \ref{dichotomy} is based mainly on the following lemma, which shows that not much of common randomness is needed to achieve $\A_{\textrm{random}}(\fri)$.
\begin{lemma}[Random Code Reduction]\label{random-code-reduction}
Let $\fri=\{\cn_s\}_{s\in\mathbf S}$ be an AVQC, $l\in\nn$, and $\mu_l$ an $(l,k_l)$-random code for the AVQC $\fri$ with
\begin{equation}\label{eq:random-code-reduction}
e(\mu_l,\fri):=\inf_{s^l\in\bS^l}\int F_e(\pi_{\fr_l},\crr^l\circ\cn_{s^l}\circ\cP^l)d\mu_l(\cP^l,\crr^l)\ge 1-\eps_l
\end{equation}
for a sequence $(\eps_l)_{l\in\nn}$ such that $\eps_l\searrow0$.\\
Let $\eps\in (0,1)$. Then for all sufficiently large $l\in\nn$ there exist $l^2$ codes $\{(\cP^l_i,\crr^l_i):i=1,\ldots ,l^2\}\subset \mathcal C(\fr_l,\hr^{\otimes l})\times\mathcal C(\kr^{\otimes l},\fr'_l)$ such that
\begin{equation}\label{eq:random-code-reduction-a}
\frac{1}{l^2}\sum_{i=1}^{l^2} F_e(\pi_{\fr_l},\crr^l_i\circ\cn_{s^l}\circ\cP^l_i)>1-\eps \qquad  \forall s^l\in\mathbf S^l.
\end{equation}
\end{lemma}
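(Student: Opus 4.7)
The plan is to follow Ahlswede's elimination/derandomization scheme: sample $l^2$ code pairs $(\cP^l_i,\crr^l_i)_{i=1}^{l^2}$ independently according to $\mu_l$ and invoke the probabilistic method, showing that with positive probability the resulting empirical mixture satisfies (\ref{eq:random-code-reduction-a}) uniformly in $s^l\in\bS^l$. The two ingredients are a large-deviation estimate for each fixed $s^l$ and a union bound whose size we must control.

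First, fix $s^l\in\bS^l$ and define $Y_i:=1-F_e(\pi_{\fr_l},\crr^l_i\circ\cn_{s^l}\circ\cP^l_i)$. The $Y_i$ are i.i.d., lie in $[0,1]$, and by hypothesis (\ref{eq:random-code-reduction}) satisfy $\mathbb{E}[Y_i]\le 2^{-la}$. Hoeffding's inequality then gives, for every sufficiently large $l$,
\[
\Pr\!\Bigl[\tfrac{1}{l^2}\sum_{i=1}^{l^2}Y_i\ge \eps\Bigr]\le \exp\!\bigl(-2l^2(\eps-2^{-la})^2\bigr)\le \exp(-l^2\eps^2/2).
\]

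The main issue is the union bound over all $s^l$, which is where the choice of $l^2$ samples (super-linear in $l$) is essential and where the infinite-$\bS$ case requires care. If $\bS$ is finite, a direct union bound yields failure probability at most $|\bS|^l\exp(-l^2\eps^2/2)$, which is less than $1$ for all large $l$. If $\bS$ is infinite, I would first pass to a finite $\tau$-net: using the standard net in $||\cdot||_\lozenge$ (as in the proof of Theorem~\ref{theorem:converse-random-compound-general}, of cardinality polynomial in $1/\tau$) I obtain $\hat\bS_\tau\subset\bS$ with $|\hat\bS_\tau|\le(6/\tau)^{2(\dim\hr\dim\kr)^2}$ such that every $s\in\bS$ has some $\hat s\in\hat\bS_\tau$ with $||\cn_s-\cn_{\hat s}||_\lozenge\le 2\tau$. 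By telescoping in the tensor product, $||\cn_{s^l}-\cn_{\hat s^l}||_\lozenge\le 2l\tau$ for a suitable $\hat s^l\in\hat\bS_\tau^l$, and the joint Lipschitz estimate $|F_e(\pi_{\fr_l},\crr^l\circ\cn_{s^l}\circ\cP^l)-F_e(\pi_{\fr_l},\crr^l\circ\cn_{\hat s^l}\circ\cP^l)|\le 2l\tau$ (which follows from $|F_e(\rho,\Lambda_1)-F_e(\rho,\Lambda_2)|\le ||\Lambda_1-\Lambda_2||_\lozenge$ and the fact that pre- and post-composition with CPTP maps is $||\cdot||_\lozenge$-contractive) lets one replace the supremum over $\bS^l$ by a supremum over $\hat\bS_\tau^l$ at the cost of an additive $2l\tau$. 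Choosing $\tau=l^{-3}$ makes this additive error vanish while keeping $|\hat\bS_\tau|^l=\exp(O(l\log l))$, which is still dominated by $\exp(l^2\eps^2/4)$.

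Combining the Hoeffding bound with this union bound, the failure probability is strictly less than $1$ for all $l$ large enough. Hence there exists a deterministic realization $\{(\cP^l_i,\crr^l_i)\}_{i=1}^{l^2}$ satisfying (\ref{eq:random-code-reduction-a}) for every $s^l\in\bS^l$ simultaneously, which is the assertion of the lemma. The only delicate point is the infinite-$\bS$ reduction; everything else is a routine application of concentration plus the probabilistic method.
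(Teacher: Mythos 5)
Your proposal is correct and follows essentially the same route as the paper: draw $l^2$ codes i.i.d.\ from $\mu_l$, apply a concentration bound per channel sequence, take a union bound over a discretization whose size grows only like $\exp(O(l\log l))$, and transfer to all of $\bS^l$ via the Lipschitz continuity of $F_e$ in the diamond norm. The only cosmetic differences are that you use Hoeffding where the paper uses an exponential-Markov estimate, and you discretize with a $\tau$-net chosen inside $\fri$ itself (with $\tau=l^{-3}$) where the paper uses the extreme points of an inner polytope approximation of $\conv(\fri)$, extending the hypothesis to it by linearity of $F_e$ in the channel.
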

\begin{proof} Before we get into the details, we should note that the whole proof can be read much more easily if one restricts to the case $|\fri|<\infty$ and sets each of the approximating sets occurring in the sequel equal to $\fri$.\\
Let $(\Lambda_i,\Omega_i)$, $i=1,\ldots, K$, be independent random variables with values in $\mathcal C(\fr_l,\hr^{\otimes l})\times\mathcal C(\kr^{\otimes l},\fr'_l) $ which are distributed according to $\mu_l^{\otimes K}$. Let $(P_l)_{l\in\nn}$ be a sequence of polytopes with, for all $l\in\nn$, the properties 
\begin{enumerate}
 \item $P_l\subset \conv(\fri)$
 \item $D_\lozenge(P_l,\conv(\fri))\leq1/l^2$.
\end{enumerate}
Denote by $ext(P_l)$ the extremal points of $P_l$. Consider an indexing such that we can write $ext(P_l)=\{\cn_e\}_{e\in E_l}$ and note that the polytope $P_l$ can be chosen in such a way that $N_l:=|E_l|$ satisfies $N_l\leq(6l)^{4\dim(\hr)^2\dim(\kr)^2}$ (see, for example, Lemma 5.2 in \cite{bbn-1}).\\
For every $e^l\in E_l^l$ and corresponding channel $\cn_{e^l}$, an application of Markov's inequality yields for any $\eps\in (0,1)$ and any $\gamma>0$ the following:
\begin{eqnarray}\label{eq:random-code-reduction-1}
\mathbb P\left( 1-\frac{1}{K}\sum_{i=1}^K F_e(\pi_{\fr_l},\Lambda_i\circ\cn_{e^l}\circ\Omega_i)\geq\eps/2\right)&=&\mathbb P\left(2^{K\gamma-\gamma\sum_{i=1}^K F_e(\pi_{\fr_l},\Lambda_i\circ\cn_{e^l}\circ\Omega_i)}\ge 2^{K\gamma(\eps/2)}\right)\\
&\leq&2^{-K\gamma(\eps/2)}\cdot \mathbb E\left(2^{\gamma(K-\sum_{i=1}^K F_e(\pi_{\fr_l},\Lambda_i\circ\cn_{e^l}\circ\Omega_i))}\right).
\end{eqnarray}
We will derive an upper bound on the expectation in the preceding line:
\begin{eqnarray}\label{eq:random-code-reduction-2}
\mathbb E\left(2^{\gamma(K-\sum_{i=1}^K F_e(\pi_{\fr_l},\Lambda_i\circ\cn_{e^l}\circ\Omega_i))}\right)&=&\mathbb E\left(2^{\gamma(\sum_{i=1}^K(1-F_e(\pi_{\fr_l},\Lambda_i\circ\cn_{e^l}\circ\Omega_i)))}\right)\\
&\overset{(a)}{=}&\left[\mathbb E\left(2^{\gamma(1-F_e(\pi_{\fr_l},\Lambda_1\circ\cn_{e^l}\circ\Omega_1))}\right)\right]^K\\
&\overset{(b)}{\leq}&[\mathbb E(1+2^{\gamma}(1-F_e(\pi_{\fr_l},\Lambda_1\circ\cn_{e^l}\circ\Omega_1)))]^K\\
&\overset{(c)}{\leq}&[1+2^{\gamma}\eps_l]^K.
\end{eqnarray}
We used $(a)$ independence of the $(\Lambda_i,\Omega_i)$, $(b)$ the inequality $2^{\gamma t}\leq (1-t)2^{\gamma\cdot  0}+t2^{\gamma}\le 1+t2^{\gamma},\ t\in[0,1]$, where the first inequality is simply the convexity of $[0,1]\ni t\mapsto 2^{\gamma t}$,  $(c)$ holds by (\ref{eq:random-code-reduction}) and by $P_l\subset \conv(\fri)$.\\
Now, for $K=l^2$, $\gamma=2$ there is an $l_0(\eps)\in\nn$ such that for all $l\ge l_0(\eps)$ we have
\begin{equation}\label{eq:random-code-reduction-3}
  (1+2^{2}\eps_l)^{l^2}\le 2^{l^2 (\eps/2)}.
\end{equation}
Therefore, we obtain from (\ref{eq:random-code-reduction-1}), (\ref{eq:random-code-reduction-2}), and (\ref{eq:random-code-reduction-3}) that for all sufficiently large $l\in\nn$
\begin{equation}\label{eq:random-code-reduction-4}
 \mathbb P\left( 1-\frac{1}{l^2}\sum_{i=1}^{l^2} F_e(\pi_{\fr_l},\Lambda_i\circ\cn_{e^l}\circ\Omega_i)\geq(\eps/2)\right) \le 2^{-l^2(\eps/2)}
\end{equation}
uniformly in $e^l\in E_l^l$.
It follows from (\ref{eq:random-code-reduction-4}) that
\begin{eqnarray}
\mathbb P\left(\frac{1}{l^2}\sum_{i=1}^{l^2} F_e(\pi_{\fr_l},\Lambda_i\circ\cn_{e^l}\circ\Omega_i)>1-\eps/2\ \forall e^l\in E_l^l\right) &\geq&1-N_l^l\cdot 2^{-l^2(\eps/2)}
\end{eqnarray}
implying the existence of a realization $(\cP^l_i,\crr^l_i)_{i=1}^{l^2}$ with
\begin{equation}\frac{1}{l^2}\sum_{i=1}^{l^2} F_e(\pi_{\fr_l},\crr^l_i\circ\cn_{e^l}\circ\cP^l_i)>1-\eps/2 \qquad  \forall e^l\in\mathbf E_l^l  \end{equation}
whenever $N_l^l\cdot 2^{-l^2\eps}<1$, which is clearly fulfilled for all sufficiently large $l\in\nn$.\\
Finally, we note that for every $l\in\nn$ and $\cn_s\in\fri$ there is $\cn_e\in E_l$ such that $||\cn_s-\cn_e||_\lozenge\leq\frac{1}{l^2}$ and, therefore, to every $\cn_{s^l}$ there exists $\cn_{e^l}$ (with each $\cn_{e_i}\in E_l$) such that (see the proof of Lemma 5.2 in \cite{bbn-1} for details)
\begin{equation}
 ||\cn_{s^l}-\cn_{e^l}||_\lozenge\leq\sum_{i=1}^l||\cn_{s_i}-\cn_{e_i}||_\lozenge\leq\frac{1}{l},
\end{equation}
and therefore for every $s^l\in\bS^l$ we have, for a maybe even larger $l$ as before (satisfying $1/l<\eps/2$, additionally),
\begin{equation}\frac{1}{l^2}\sum_{i=1}^{l^2} F_e(\pi_{\fr_l},\crr^l_i\circ\cn_{s^l}\circ\cP^l_i)>1-\eps \qquad  \forall s^l\in\mathbf S^l.  \end{equation}
\end{proof}
We proceed with the proof of Theorem \ref{dichotomy}. Since $C_{\textrm{det}}(\fri)>0$ according to the assumption of the theorem, there is an $(m_l,l^2)$-deterministic code $\mathfrak{C}_{m_l}=(\rho_i, D_i)_{i=1}^{l^2}$ with $\rho_1,\ldots,\rho_{l^2}\in\cs(\hr^{\otimes m_l})$, $D_1,\ldots, D_{l^2}\in\mathcal{B}(\kr^{\otimes m_l})$ with $m_l=o(l)$ and
\begin{equation}\label{eq:dichotomy-1}
  \bar P_{e,m_l}=\sup_{s^{m_l}\in\bS^{m_l}}P_{e}(\mathfrak{C}_{m_l},s^{m_l})\le \eps.
\end{equation}
On the other hand, let us consider an $(l,k_l)$-random code as in Lemma \ref{random-code-reduction}, i.e. with
\begin{equation}\label{eq:dichotomy-2}
 \frac{1}{l^2}\sum_{i=1}^{l^2} F_e(\pi_{\fr_l},\crr^l_i\circ\cn_{s^l}\circ\cP^l_i)>1-\eps \qquad  \forall s^l\in\mathbf S^l.
\end{equation}
Define CPTP maps $\cP^{l+m_l}\in\mathcal{C}(\fr_l, \hr^{\otimes l+m_l})$, $\crr^{l+m_l}\in \mathcal{C}(\kr^{\otimes l+ m_l},\fr'_l)$ by
\begin{equation}\label{eq:dichotomy-3}
 \cP^{l+m_l}(a):=\frac{1}{l^2}\sum_{i=1}^{l^2} \cP^l_i(a)\otimes \rho_i\quad \textrm{and}\quad \crr^{l+m_l}(b\otimes d):=\sum_{i=1}^{l^2}\tr(D_i d)\crr^l_i(b).
\end{equation}
Then for each $s^{l+m_l}=(v^l,u^{m_l},)\in\bS^{l+m_l}$
\begin{eqnarray}\label{eq:dichotomy-4}
  F_e(\pi_{\fr_l},\crr^{l+m_l}\circ (\cn_{v^l}\otimes \cn_{u^{m_l}}  )\circ \cP^{l+m_l})&=&\frac{1}{l^2}\sum_{i,j=1}^{l^2}\tr (D_j \cn_{u^{m_l}}(\rho_i))F_e(\pi_{\fr_l},\crr_j^l\circ \cn_{v^l}\circ \cP_i^l )\\
&\ge & \frac{1}{l^2}\sum_{i=1}^{l^2}\tr (D_i \cn_{u^{m_l}}(\rho_i))F_e(\pi_{\fr_l},\crr_i^l\circ \cn_{v^l}\circ \cP_i^l ),
\end{eqnarray}
where in the last line we have used that all involved terms are non-negative. In order to show that the fidelity on the left-hand side of (\ref{eq:dichotomy-4}) is at least $1-2\eps$ we need the following lemma from \cite{ahlswede-elimination}.
\begin{lemma}\label{innerproduct-lemma}
Let $K\in\nn$ and real numbers $a_1,\ldots,a_K,b_1,\ldots, b_K\in [0,1]$ be given. Assume that
\begin{equation}\frac{1}{K}\sum_{i=1}^Ka_i\ge 1-\eps\qquad \textrm{and} \qquad \frac{1}{K}\sum_{i=1}^K b_i\ge 1-\eps,  \end{equation}
hold. Then
\begin{equation}\frac{1}{K}\sum_{i=1}^Ka_ib_i\ge 1-2\eps.  \end{equation}
\end{lemma}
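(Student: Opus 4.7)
The plan is to reduce the problem to a single pointwise inequality on $[0,1]^2$ and then average. The key observation is that for any two numbers $a,b\in[0,1]$ we have $(1-a)(1-b)\ge 0$, which expands to $ab\ge a+b-1$. This is the only nontrivial input needed.

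Once this inequality is in hand, I would apply it termwise to the $K$ pairs $(a_i,b_i)$, then sum and divide by $K$. This yields
\[
\frac{1}{K}\sum_{i=1}^K a_i b_i \;\ge\; \frac{1}{K}\sum_{i=1}^K a_i \;+\; \frac{1}{K}\sum_{i=1}^K b_i \;-\; 1.
\]
Substituting the two hypotheses $\frac{1}{K}\sum a_i\ge 1-\eps$ and $\frac{1}{K}\sum b_i\ge 1-\eps$ into the right-hand side gives $(1-\eps)+(1-\eps)-1 = 1-2\eps$, which is exactly the claim.

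There is really no obstacle here; the only thing to be careful about is that the pointwise bound $ab\ge a+b-1$ requires both numbers to lie in $[0,1]$, which is precisely the standing hypothesis. One could alternatively phrase the argument via a union-style estimate on the ``failure mass'' $\frac{1}{K}\sum(1-a_i)\le\eps$ and $\frac{1}{K}\sum(1-b_i)\le\eps$, noting that $1-a_ib_i\le(1-a_i)+(1-b_i)$, but this is just the contrapositive packaging of the same inequality and adds nothing. So the proof should occupy at most a couple of lines.
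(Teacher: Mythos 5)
Your proof is correct: the pointwise bound $ab\ge a+b-1$ for $a,b\in[0,1]$, applied termwise and averaged, immediately gives $\frac{1}{K}\sum_i a_ib_i\ge(1-\eps)+(1-\eps)-1=1-2\eps$. The paper itself does not prove this lemma but imports it from \cite{ahlswede-elimination}, and your two-line argument is precisely the standard elementary proof of that statement, so there is nothing to add or correct.
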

Applying this lemma with $K=l^2$,
\begin{equation} a_i= \tr (D_i \cn_{u^{m_l}}(\rho_i)),\quad\textrm{and}\quad b_i=F_e(\pi_{\fr_l},\crr_i^l\circ \cn_{v^l}\circ \cP_i^l ) \end{equation}
along with (\ref{eq:dichotomy-1}), (\ref{eq:dichotomy-2}), and (\ref{eq:dichotomy-4}) shows that
\begin{equation}\label{eq:dichotomy-5}
  F_e(\pi_{\fr_l},\crr^{l+m_l}\circ (\cn_{v^l}\otimes \cn_{u^{m_l}}  )\circ \cP^{l+m_l})\ge 1-2\eps.
\end{equation}
On the other hand we know from Theorem \ref{conversion-of-compound-codes} that for each sufficiently small $\eta>0$ there is a random code $\mu_l$ for the AVQC $\fri$ with
\begin{equation}\label{eq:dichotomy-6}
 \frac{1}{l}\log \dim \fr_l\ge \frac{1}{k}\max_{\rho\in\cs(\hr^{\otimes k})}\inf_{\cn\in\conv(\fri)}I_c(\rho,\cn^{\otimes k})  -\eta,
\end{equation}
and
\begin{equation}e(\mu_l,\fri)=\inf_{s^l\in\bS^l}\int F_e(\pi_{\fr_l},\crr^l\circ\cn_{s^l}\circ\cP^l)d\mu_l(\cP^l,\crr^l)\ge 1-(l+1)^{N_{\eta}}2^{-lc}  \end{equation}
for all sufficiently large $l$ with $c=c(k,\dim\hr,\dim\kr,\conv(\fri),\eta)$ and $N_{\eta}\in\nn$. Thus the arguments that led us to (\ref{eq:dichotomy-5}) show that for all sufficiently large $l$ there is a deterministic $(l+m_l,k_{l})$-code for the AVQC $\fri$ with
\begin{equation}F_e(\pi_{\fr_l},\crr^{l+m_l}\circ (\cn_{v^l}\otimes \cn_{u^{m_l}}  )\circ \cP^{l+m_l})\ge 1-2\eps,\end{equation}
and
\begin{equation}\frac{1}{l+m_l}\log\dim\fr_l\ge\frac{1}{k}\max_{\rho\in\cs(\hr^{\otimes k})}\inf_{\cn\in\conv(\fri)}I_c(\rho,\cn^{\otimes k})  -2\eta \end{equation}
by (\ref{eq:dichotomy-6}) and since $m_l=o(l)$. This shows that $\A_{\textup{det}}(\fri)\ge\A_{\textup{random}}(\fri)$. Since the reverse inequality is trivially true we are done.
\section{\label{sec:symmetrizability}Zero-capacity-conditions: Symmetrizability}
The most basic quality feature of an information processing system is whether it can be used for communication at a positive rate or not. This applies especially to such rather complex systems as AVCs or AVQCs. The notion of symmetrizability stems from the theory of classical AVCs and it addresses exactly that question. A classical AVC has deterministic capacity for message transmission equal to zero if and only if it is symmetrizable (with the definition of symmetrizability adjusted to the two different scenarios 'average error criterion' and 'maximal error criterion') \cite{ericson} and \cite{csiszar-narayan}, \cite{kiefer-wolfowitz}.\\
Of course, a similar statement for $\A_{\textup{det}}$ would be of great interest.
\\\\
In this section we give three different conditions for three different capacities of an AVQC to be equal to zero. We restrict ourselves to the case $|\fri|<\infty$. Starting with the statement that has the weakest information theoretic consequences, we proceed to stronger statements. The case $|\fri|=\infty$ requires some involved continuity issues which shall be carried out elsewhere.\\
All three conditions have in common that they enable the adversary to simulate, on average over some probability distribution, a different output at the receiver side than the one that was originally put into the channel by the sender. The first two conditions, dealing with message transmission, exhibit a possibly nonlinear dependence between message set and probability distribution. They are direct (but not single-letter) analogs of their classical counterparts.\\
The third one is a sufficient condition for $\A_{\textup{random}}$ to be equal to zero. It employs a linear dependence between input state and probability distribution. If this condition is valid, the adversary is not only able to simulate a wrong output, he can also simulate an entanglement breaking channel between sender and receiver. In contrast to the first two criteria, this third one is a single-letter criterion.\\
There is a fourth and, at first sight, trivial condition, given by the following: An AVQC $\fri=\{\cn_s\}_{s\in\bS}$ has (deterministic \emph{and} random) capacity for transmission of entanglement equal to zero if there is an $s\in\bS$ such that $\cn_s$ has zero capacity for transmission of entanglement.\\
We note that this fourth condition is nontrivial only because of the following reason: there is, until now, no way of telling exactly when a given (memoryless) quantum channel has a capacity greater than zero (except for calculating (\ref{eq:ahlswede-dichotomy-1}) for a single channel, an awkward task in general). This is in sharp contrast to the classical case, where the question can be trivially answered: A classical memoryless channel has a nonzero capacity if and only if there are at least two input states that lead to different output states.
\\
Since our results do not answer the question whether it can happen that $C_{\textup{det}}(\fri)=0$, $\A_{\textup{det}}(\fri)=0$ and $\A_{\textup{random}}(\fri)>0$ hold simultaneously for a given AVQC $\fri$, we are left with two interesting and intimately related questions:\\
First, there is the zero-capacity question for single memoryless channels. Second, we need to find a criterion telling us exactly when $\A_{\textup{det}}$ is equal to zero.
\subsection{\label{subsec:classical-deterministic-average-error}Classical capacity with deterministic codes and average error}
We now introduce a notion of symmetrizability which is a sufficient and necessary condition for $C_{\textrm{det}}(\fri)=0$. 
Our approach is motivated by the corresponding concept for arbitrarily varying channels with classical input and quantum output (cq-AVC) given in \cite{ahlswede-blinovsky}.\\
A nontrivial example of a non-symmetrizable AVQC can be found in subsection \ref{subsec:Erasure-AVQC}, see step $\mathbf D$ in the proof of Lemma \ref{lemma-erasure-example}.
\begin{definition}\label{def:c-symmetrizability}
Let $\bS$ be a finite set and $\fri=\{\cn_s  \}_{s\in\bS}$ an AVQC.
\begin{enumerate}
\item $\fri$ is called $l$-symmetrizable, $l\in\nn$, if for each finite set $\{\rho_1,\ldots,\rho_K  \}\subset \cs(\hr^{\otimes l})$, $K\in \nn$, there is a map $p:\{\rho_1,\ldots, \rho_K  \}\to \mathfrak{P}(\bS^l)$ such that for all $i,j\in\{1,\ldots, K  \}$
\begin{equation}\label{eq:c-symmetrizable}
\sum_{s^l\in\bS^l}p(\rho_i)(s^l)\cn_{s^l}(\rho_j)= \sum_{s^l\in\bS^l}p(\rho_j)(s^l)\cn_{s^l}(\rho_i)
\end{equation}
holds.
\item We call $\fri$ symmetrizable if it is $l$-symmetrizable for all $l\in\nn$.
\end{enumerate}
\end{definition}
We now state the main statement of this section.
\begin{theorem}\label{symm-equiv-C-0}
Let $\fri=\{\cn_s  \}_{s\in \bS}$, $|\bS|<\infty$, be an AVQC. Then $\fri$ is symmetrizable if and only if $C_{\textup{det}}(\fri)=0$.
\end{theorem}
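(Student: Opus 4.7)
The plan is to prove the two implications of Theorem \ref{symm-equiv-C-0} separately. The forward direction (symmetrizable $\Rightarrow C_{\textup{det}}(\fri)=0$) is a short combinatorial argument modeled on the classical pairing trick of Ericson and Csisz\'ar--Narayan. The reverse direction (non-symmetrizable $\Rightarrow C_{\textup{det}}(\fri)>0$) proceeds by reduction to a suitable cq-AVC and invokes the coding theorem of \cite{ahlswede-blinovsky}.

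For the forward direction, fix any $(l,M_l)$ deterministic code $\mathfrak{C}_l=(\rho_i,D_i)_{i=1}^{M_l}$ with $M_l\geq 2$. Applying $l$-symmetrizability to $\{\rho_1,\ldots,\rho_{M_l}\}$ yields a map $p$ as in (\ref{eq:c-symmetrizable}). Consider the randomized adversary that draws $J$ uniformly from $\{1,\ldots,M_l\}$ and then $s^l$ from $p(\rho_J)$. Writing $P_j(i,k):=\sum_{s^l}p(\rho_j)(s^l)\tr(\cn_{s^l}(\rho_i)D_k)$, the expected correct-decoding probability under this adversary equals $X:=M_l^{-2}\sum_{i,j}P_j(i,i)$. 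The identity (\ref{eq:c-symmetrizable}) translates into $P_j(i,k)=P_i(j,k)$ for all $i,j,k$, and relabelling the summation variables yields the second representation $X=M_l^{-2}\sum_{i,j}P_j(i,j)$. Averaging, one gets $2X=M_l^{-2}\sum_{i,j}[P_j(i,i)+P_j(i,j)]$. For $i\neq j$ the quantity $P_j(i,i)+P_j(i,j)$ equals $\tr(\sigma_{ij}(D_i+D_j))\leq 1$ with $\sigma_{ij}:=\sum_{s^l}p(\rho_j)(s^l)\cn_{s^l}(\rho_i)$ a state, because $D_i+D_j\leq\idn_{\kr^{\otimes l}}$; for $i=j$ the two summands coincide and are each at most $1$, giving the trivial bound $2$. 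Hence $2X\leq M_l^{-2}[M_l\cdot 2+M_l(M_l-1)]=1+M_l^{-1}$, i.e.\ $X\leq\tfrac12+(2M_l)^{-1}$. Since the worst-case deterministic $s^l$ cannot perform better than the above randomised adversary, $\bar P_{e,l}(\fri)\geq 1-X\geq\tfrac12-(2M_l)^{-1}\geq\tfrac14$ for every $M_l\geq 2$, ruling out any positive achievable rate and establishing $C_{\textup{det}}(\fri)=0$.

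For the reverse direction, suppose $\fri$ is not symmetrizable. Then there exist $l_0\in\nn$ and a finite family $\{\rho_1,\ldots,\rho_K\}\subset\cs(\hr^{\otimes l_0})$ for which (\ref{eq:c-symmetrizable}) fails for every candidate $p$. Consider the derived cq-AVC with classical input alphabet $\{1,\ldots,K\}$, adversary alphabet $\bS^{l_0}$, and channels $W_{s^{l_0}}(i):=\cn_{s^{l_0}}(\rho_i)\in\cs(\kr^{\otimes l_0})$. Comparing definitions, the failure of (\ref{eq:c-symmetrizable}) for $\{\rho_1,\ldots,\rho_K\}$ is exactly the statement that $\{W_{s^{l_0}}\}$ is not symmetrizable in the single-letter sense of \cite{ahlswede-blinovsky}. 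Their coding theorem then provides a sequence of deterministic codes for $\{W_{s^{l_0}}\}$ of positive rate $R_{\textup{cq}}>0$ with vanishing worst-case average error. Each cq-AVC codeword $(i_1,\ldots,i_n)\in\{1,\ldots,K\}^n$ is lifted to the AVQC codeword $\rho_{i_1}\otimes\cdots\otimes\rho_{i_n}\in\cs(\hr^{\otimes nl_0})$, keeping the decoding POVMs unchanged. Since the blocked AVQC adversary set $\bS^{nl_0}$ agrees with the $n$-fold cq-AVC adversary set $(\bS^{l_0})^n$, the error probabilities are preserved verbatim, producing deterministic codes for $\fri$ of length $nl_0$ and rate $R_{\textup{cq}}/l_0>0$ with vanishing error. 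Thus $C_{\textup{det}}(\fri)>0$.

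The essentially non-trivial step is the second direction, whose analytic content is entirely packaged in the cq-AVC capacity theorem of \cite{ahlswede-blinovsky}. The delicate bookkeeping item is the dictionary between the multi-letter Definition \ref{def:c-symmetrizability} and Ahlswede--Blinovsky's single-letter cq-AVC symmetrizability: non-symmetrizability of $\fri$ a priori only guarantees failure of (\ref{eq:c-symmetrizable}) at \emph{some} block length $l_0$ and \emph{some} finite family of states, and one must check that after passing to the $l_0$-blocked channel and treating those states as a classical alphabet one lands in exactly the cq-AVC framework for which the reference supplies a positive capacity. Everything else—the translation of codes back, the preservation of error, the rate loss of a factor $l_0$—is bookkeeping.
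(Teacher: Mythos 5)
Your proposal is correct and follows essentially the same route as the paper: the forward direction is the standard pairing argument driven by (\ref{eq:c-symmetrizable}) and the POVM constraint (you phrase it via correct-decoding probabilities and $D_i+D_j\le\idn_{\kr^{\otimes l}}$, the paper via error sums, yielding the identical bound $\tfrac{M_l-1}{2M_l}\ge\tfrac14$), and the reverse direction is exactly the paper's reduction to the cq-AVC $W_{s^{\hat l}}(x)=\cn_{s^{\hat l}}(\rho_x)$ at the block length where symmetrizability fails, invoking \cite{ahlswede-blinovsky} (via \cite{csiszar-narayan}) and lifting the codes back with rate loss $1/\hat l$.
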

\begin{proof} 1. ``Symmetrizability implies $C_{\textrm{det}}(\fri)=0$''.\\
The proof follows closely the corresponding arguments given in \cite{ericson}, \cite{csiszar-narayan}, and \cite{ahlswede-blinovsky}. We give the full proof for reader's convenience. Suppose that $\fri=\{\cn_s  \}_{s\in \bS} $ is symmetrizable and let $(\rho_i, D_i)_{i=1}^M$, $M\ge 2$, be a code for transmission of messages over $\fri$ with $\{\rho_1,\ldots, \rho_M  \}\subset\cs(\hr^{\otimes l})$ and POVM $\{D_i\}_{i=1}^M$ on $\hr^{\otimes l}$. Since $\fri$ is symmetrizable there is a map $p:\{\rho_1,\ldots, \rho_M \}\to \mathfrak{P}(\bS^l)$ such that for all $i,j\in\{1,\ldots, M  \}$
\begin{equation}\label{eq:symm-1}
 \sum_{s^l\in\bS^l}p(\rho_i)(s^l)\cn_{s^l}(\rho_j)= \sum_{s^l\in\bS^l}p(\rho_j)(s^l)\cn_{s^l}(\rho_i)
\end{equation}
For $s^l\in\bS^l$ and $i\in\{1,\ldots, M\}$ we set
\begin{equation}\label{eq:symm-2}
  e(i,s^l):=1-\tr (\cn_{s^l}(\rho_i)D_i)=\sum_{\substack{j=1 \\ j\neq i}}^M \tr(\cn_{s^l}(\rho_i)D_j).
\end{equation}
For $k\in\{1,\ldots,M  \}$ let $S_k^l$ be a random variable taking values in $\bS^l$ and which is distributed according to $(p(\rho_k )(s^l))_{s^l\in\bS^l}$. Then using relation (\ref{eq:symm-2}) we can write
\begin{eqnarray}\label{eq:symm-3}
  \mathbb{E}(e(i, S_k^l) )&=& \sum_{s^l\in\bS^l}\sum_{\substack{j=1 \\ j\neq i}}^M p( \rho_k )(s^l)  \tr(\cn_{s^l}(\rho_i)D_j)\\
&=& \sum_{\substack{j=1 \\ j\neq i}}^M\tr \{\sum_{s^l\in\bS^l}p(\rho_k )(s^l)\cn_{s^l}(\rho_i)D_j    \}\\
&=& \sum_{\substack{j=1 \\ j\neq i}}^M\tr ( \sum_{s^l\in\bS^l}p(\rho_i )(s^l)\cn_{s^l}(\rho_k)   D_j )\\
&=& \sum_{\substack{j=1 \\ j\neq i}}^M\sum_{s^l\in\bS^l}p( \rho_i)(s^l)  \tr(\cn_{s^l}(\rho_k)D_j),
\end{eqnarray}
where the third line is by (\ref{eq:symm-1}). On the other hand we have
\begin{equation}\label{eq:symm-4}
  \mathbb{E}(e(k,S_i^l) )=\sum_{s^l\in\bS^l}\sum_{\substack{j=1 \\ j\neq k}}^M p(\rho_i)(s^l)  \tr(\cn_{s^l}(\rho_k)D_j).
\end{equation}
Since $\{D_i\}_{i=1}^M $ is a POVM (\ref{eq:symm-3}) and (\ref{eq:symm-4}) imply that for $i\neq k$
\begin{equation}\label{eq:symm-5}
  \mathbb{E}(e(i, S_k^l) )+ \mathbb{E}(e(k,S_i^l) )\ge 1
\end{equation}
holds. Let us abbreviate $\mathfrak{C}:=(\rho_i,D_i)_{i=1}^M$, then with
\begin{equation}\bar{P}_e(\mathfrak{C},s^l)=\frac{1}{M}\sum_{k=1}^M(1-\textrm{tr}(\cn_{s^l}(\rho_k)D_k ))  \end{equation}
for $s^l\in\bS^l$ we obtain
\begin{eqnarray}\label{eq:symm-6}
  \mathbb{E}(\bar{P}_e(\mathfrak{C},S_j^l) )&=& \sum_{s^l\in\bS^l}p(\rho_j)(s^l)\frac{1}{M}\sum_{k=1}^M(1-\tr(\cn_{s^l}(\rho_k)D_k ))\\
&=& \frac{1}{M}\sum_{k=1}^M \mathbb{E}(e(k, S_j^l)).
\end{eqnarray}
(\ref{eq:symm-5}) and (\ref{eq:symm-6}) yield
\begin{eqnarray}
  \frac{1}{M}\sum_{j=1}^M \mathbb{E}(\bar{P}_e(\mathfrak{C},S_j^l) )&=& \frac{1}{M^2}\sum_{i,j=1}^{M}\mathbb{E}(e(k,S_j^l) )\\
&\ge& \frac{1}{M^2} \binom{M}{2}\\
&=& \frac{M-1}{2M}\ge \frac{1}{4}
\end{eqnarray}
for $M\ge 2$. Thus it follows that there is at least one $j\in\{1,\ldots,M  \}$ with
\begin{equation} \mathbb{E}(\bar{P}_e(\mathfrak{C},S_j^l) )\ge \frac{1}{4} \end{equation}
and consequently there is at least one $s^l\in\bS^l$ with
\begin{equation} \bar{P}_e(\mathfrak{C},s^l)\ge \frac{1}{4}  \end{equation}
implying that $C_{\textup{det}}(\fri)=0 $.\\
2. ``$C_{\textrm{det}}(\fri)=0$ implies symmetrizability''.\\
Suppose that $\fri$ is non-symmetrizable. Then there is an $\hat l\in\nn$ and a finite set $\{\rho_x\}_{x\in\X}\subset \cs(\hr^{\otimes \hat l})$ such that for no map $p:\{\rho_x\}_{x\in\X}\to\mathfrak{P}(\bS^{\hat l})$ the relation (\ref{eq:c-symmetrizable}) holds. Let us define for each $s^{\hat l}\in\bS^{\hat l}$ a cq-channel $\X\ni x\mapsto W_{s^{\hat l}}(x):=\cn_{s^{\hat l}} (\rho_x)\in \cs(\kr^{\otimes \hat l})$, and consider the cq-AVC generated by the set $\fri_{cq}:=\{ W_{s^{\hat l}} \}_{s^{\hat l}\in \bS^{\hat l}}$. Then, due to the assumed non-symmetrizability of $\fri$, our new cq-AVC $\fri_{cq}$ is non-symmetrizable in the sense of \cite{ahlswede-blinovsky}.\\
Since $\fri_{cq}$ is non-symmetrizable the reduction argument from \cite{ahlswede-blinovsky} to the results  of \cite{csiszar-narayan} show that the cq-AVC $ \fri_{cq}$ has positive capacity.
This implies the existence of a sequence $(K_m,f_m,D_m,\eps_m)_{m\in\nn}$, where $K_m\in\nn$, $f_m:\{1,\ldots,K_m\}\rightarrow\X^m$, $D_m\in\B_+(\hr^{\otimes l\cdot m})$, $\lim_{m\rightarrow\infty}\eps_m\searrow0$, $\liminf_{m\rightarrow\infty}\frac{1}{m}\log K_m=c>0$ and $\frac{1}{K_m}\sum_{i=1}^{K_m}(1-\tr(D_iW^m_{s^{\hat l}}(f(i))))=\eps_m$.\\
We may use this sequence to construct another sequence $(\rho_i, D_i)_{i=1}^{M_l}$ of deterministic codes for message transmission over $\fri$, thereby achieving a capacity of $\frac{1}{\hat l}c>0$. A similar construction is carried out explicitly at the end of the proof of the following Theorem \ref{theorem:c-det=0-for-maximal-error}.
\end{proof}
\begin{corollary}\label{q-det-=0}
If the AVQC $\fri=\{\cn  \}_{s\in \bS}$ is symmetrizable then $\mathcal{A}_{\textup{det}}(\fri)=0$.
\end{corollary}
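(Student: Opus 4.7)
The claim follows by combining Theorem \ref{symm-equiv-C-0} with the elementary inequality $\A_{\textup{det}}(\fri) \le C_{\textup{det}}(\fri)$, which was already noted in the remarks after Theorem \ref{quant-ahlswede-dichotomy}. The plan is to make the chain of implications explicit: symmetrizability of $\fri$ gives $C_{\textup{det}}(\fri)=0$ by Theorem \ref{symm-equiv-C-0}, and once classical message transmission with average error collapses, entanglement transmission must collapse a fortiori.

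To justify the easy inequality $\A_{\textup{det}}(\fri)\le C_{\textup{det}}(\fri)$, my plan is to convert any deterministic $(l,k_l)$-entanglement transmission code $(\cP^l,\crr^l)$ for $\fri$ with
\[
\inf_{s^l\in\bS^l}F_e(\pi_{\fr_l},\crr^l\circ\cn_{s^l}\circ\cP^l)\ge 1-\eps_l,\qquad \eps_l\to 0,
\]
into a classical $(l,k_l)$-code with average error probability tending to zero. The natural recipe is to fix an orthonormal basis $\{e_i\}_{i=1}^{k_l}$ of $\fr_l$, set $\rho_i:=\cP^l(|e_i\rangle\langle e_i|)$, and take $D_i:=(\crr^l)^{\ast}(|e_i\rangle\langle e_i|)$; these are positive with $\sum_i D_i=\idn$ since $\crr^l$ is CPTP. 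One then obtains
\[
\bar P_e(\mathfrak C_l,s^l)=1-\frac{1}{k_l}\sum_{i=1}^{k_l}\langle e_i,\crr^l\circ\cn_{s^l}\circ\cP^l(|e_i\rangle\langle e_i|)e_i\rangle,
\]
and one bounds the right-hand side using Lemma \ref{lemma-connection-between-strong-subspace-and-entanglement-fidelity} together with Theorem \ref{theorem:equivalence-of-capacities}: the lemma converts entanglement fidelity into an \emph{average} pure-state fidelity on the encoded subspace, while Theorem \ref{theorem:equivalence-of-capacities} (applied with deterministic point measures) promotes this to a worst-case estimate, which in particular controls the average over any fixed orthonormal basis. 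Hence any rate $R$ achievable for deterministic entanglement transmission is also achievable classically with vanishing average error, proving $\A_{\textup{det}}(\fri)\le C_{\textup{det}}(\fri)$.

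Combining the two ingredients finishes the argument: if $\fri$ is symmetrizable, then Theorem \ref{symm-equiv-C-0} yields $C_{\textup{det}}(\fri)=0$, and the inequality just sketched forces $\A_{\textup{det}}(\fri)=0$ as well. There is no genuine obstacle here; the only step that requires any care is the conversion from entanglement transmission to classical message transmission, and this is standard given the tools already assembled in the paper.
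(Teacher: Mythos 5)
Your proposal is correct and follows exactly the paper's route: the paper's entire proof of the corollary is the observation $\A_{\textup{det}}(\fri)\le C_{\textup{det}}(\fri)$ (already recorded in the remarks after Theorem \ref{quant-ahlswede-dichotomy}) combined with Theorem \ref{symm-equiv-C-0}.

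The only divergence is your justification of the inequality $\A_{\textup{det}}(\fri)\le C_{\textup{det}}(\fri)$, which the paper treats as immediate. Your construction works, but two points deserve care. First, $\sum_{i=1}^{k_l}D_i=(\crr^l)^{\ast}\bigl(\sum_i|e_i\rangle\langle e_i|\bigr)=(\crr^l)^{\ast}(p_{\fr_l})\le\idn_{\kr^{\otimes l}}$ with equality only if $\fr_l=\fr_l'$; since $\fr_l\subset\fr_l'$ may be proper, you must complete the POVM by $D_0:=\idn-\sum_i D_i$ (or absorb it into one of the $D_i$), which is harmless. Second, the detour through Lemma \ref{lemma-connection-between-strong-subspace-and-entanglement-fidelity} and Theorem \ref{theorem:equivalence-of-capacities} is heavier than necessary and, as phrased, slightly loose: Theorem \ref{theorem:equivalence-of-capacities} is a statement about capacities, so the clean way to use it is to pass from an achievable entanglement transmission rate to an achievable strong subspace transmission rate and then read off a classical code with vanishing \emph{maximal} error from the worst-case pure-state fidelity; it does not act on your fixed code directly. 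In fact no concentration-of-measure input is needed at all: writing Kraus operators $\{A_m\}$ for $\Lambda:=\crr^l\circ\cn_{s^l}\circ\cP^l$ and applying Cauchy--Schwarz to $F_e(\pi_{\fr_l},\Lambda)=\frac{1}{k_l^2}\sum_m|\tr(p_{\fr_l}A_m)|^2$ gives $F_e(\pi_{\fr_l},\Lambda)\le\frac{1}{k_l}\sum_{i=1}^{k_l}\langle e_i,\Lambda(|e_i\rangle\langle e_i|)e_i\rangle$ for any orthonormal basis of $\fr_l$, so the average error of your classical code is at most $\eps_l$ at the same rate. This elementary bound is presumably why the paper calls the inequality clear; with it, your argument reduces to the paper's one-line proof.
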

\begin{proof} Note that $\mathcal{A}_{\textrm{det}}(\fri)\le C_{\textrm{det}}(\fri)  $ and apply Theorem \ref{symm-equiv-C-0}.
\end{proof}
\subsection{\label{subsec:Classical-capacity-with-deterministic-codes-and-maximal error}Classical capacity with deterministic codes and maximal error}
We will now investigate, when exactly it is possible to send classical messages at positive rate over a finite AVQC, with the error criterion being that of maximal rather than average error.\\
\begin{theorem}\label{theorem:c-det=0-for-maximal-error}
Let $\fri=\{\cn_s\}_{s\in\bS}\subset\mathcal C(\hr,\kr)$ be a finite AVQC. The classical deterministic maximal error capacity $C_{\det,\max}(\fri)$ of $\fri$ is equal to zero if and only if for every $l\in\nn$ and every set $\{\rho_1,\rho_2\}\subset\cs(\hr^{\otimes l})$ we have
\begin{equation}\label{eq:equiv-max-1}
\conv(\{\cn_{s^l}(\rho_1)\}_{s^l\in\bS^l})\cap \conv(\{\cn_{s^l}(\rho_2)\}_{s^l\in\bS^l})\neq\emptyset.
\end{equation}
\end{theorem}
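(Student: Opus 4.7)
The plan is to prove the two implications separately: the forward direction is a short pigeonhole-type argument reminiscent of the one for Theorem~\ref{symm-equiv-C-0}, and the reverse direction is a reduction to a classical binary-input binary-output AVC together with the Kiefer--Wolfowitz characterization of positive maximal-error capacity.

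For the forward direction, suppose the stated intersection condition holds and fix an arbitrary $(l,M_l)$-code $\mathfrak{C}_l=(\rho_i,D_i)_{i=1}^{M_l}$ with $M_l\ge 2$. Picking two codewords $\rho_1,\rho_2$, the assumption yields $p_1,p_2\in\mathfrak P(\bS^l)$ such that
$$\tau \;:=\; \sum_{s^l\in\bS^l} p_1(s^l)\cn_{s^l}(\rho_1) \;=\; \sum_{s^l\in\bS^l} p_2(s^l)\cn_{s^l}(\rho_2).$$
If the code has maximal error $P_{e,l}(\fri)\le\eps$, then $\tr(D_i\cn_{s^l}(\rho_i))\ge 1-\eps$ for every $s^l$; averaging with $p_i$ gives $\tr(D_i\tau)\ge 1-\eps$ for $i=1,2$. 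But $D_1+D_2\le \idn_{\kr^{\otimes l}}$, so $2(1-\eps)\le\tr((D_1+D_2)\tau)\le 1$, forcing $\eps\ge 1/2$. Hence no code with at least two messages can attain maximal error below $1/2$, and $C_{\det,\max}(\fri)=0$.

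For the reverse direction I argue the contrapositive. If the hypothesis fails, then at some block length $l_0$ there exist $\rho_1,\rho_2\in\cs(\hr^{\otimes l_0})$ for which the compact convex hulls $K_i:=\conv\{\cn_{s^{l_0}}(\rho_i):s^{l_0}\in\bS^{l_0}\}$ are disjoint (compactness uses $|\bS|<\infty$). Hahn--Banach separation on the real vector space of Hermitian operators on $\kr^{\otimes l_0}$ yields a Hermitian $A$ and scalars $a<b$ with $\tr(A\sigma)\le a$ on $K_1$ and $\tr(A\sigma)\ge b$ on $K_2$. An affine shift-and-rescale of $A$ produces a POVM element $B$ with $0\le B\le \idn$ and a gap $\alpha<\beta$ in $[0,1]$ obeying $\tr(B\cn_{s^{l_0}}(\rho_1))\le\alpha<\beta\le\tr(B\cn_{s^{l_0}}(\rho_2))$ for every $s^{l_0}\in\bS^{l_0}$. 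This data induces a classical binary-input binary-output AVC $\V=\{V_{s^{l_0}}\}_{s^{l_0}\in\bS^{l_0}}$ via $V_{s^{l_0}}(1|x):=\tr(B\cn_{s^{l_0}}(\rho_x))$, $x\in\{1,2\}$, whose row-hulls for the two inputs are separated by the coordinate $V(1|\cdot)$ and are therefore disjoint.

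The Kiefer--Wolfowitz theorem \cite{kiefer-wolfowitz} then guarantees a sequence of $(N,M_N)$ classical codes $c^{(1)},\ldots,c^{(M_N)}\in\{1,2\}^N$ for $\V$ with rate $\tfrac{1}{N}\log M_N\to R>0$ and max-error tending to zero. I lift each such code to the AVQC by encoding $m\mapsto\rho_{c^{(m)}_1}\otimes\cdots\otimes\rho_{c^{(m)}_N}\in\cs(\hr^{\otimes Nl_0})$ and by decoding through the product measurement $\{B,\idn-B\}^{\otimes N}$ followed by the classical $\V$-decoder. Since $\bS^{Nl_0}=(\bS^{l_0})^N$ parameterizes exactly the independent $l_0$-block actions of the adversary on $\V$, the maximal error of the lifted code equals that of the classical code and the rate becomes $R/l_0>0$. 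Thus $C_{\det,\max}(\fri)>0$. The main obstacle is the construction of the separating POVM element $B$ with a genuine gap, i.e.\ making sure that the affine rescaling of the separating Hermitian operator can be carried out inside $[0,\idn]$ without destroying the strict separation; once $B$ is in hand, everything else is either bookkeeping or a direct appeal to the classical Kiefer--Wolfowitz result.
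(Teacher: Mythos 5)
Your proposal is correct. The forward direction is essentially the paper's argument (average the success probabilities of two codewords over the distributions witnessing the intersection and use $D_1+D_2\le\idn$ to force maximal error at least $1/2$), just phrased slightly more compactly. The reverse direction, however, takes a genuinely different route. The paper separates the two disjoint hulls by a Hermitian operator $A$, spectrally decomposes it, and then runs a Chebyshev (law-of-large-numbers) amplification over $m$ copies to manufacture a two-valued projective measurement $P_1^{m'},P_2^{m'}$ that distinguishes $\rho_1^{\otimes m'}$ from $\rho_2^{\otimes m'}$ with error below $1/4$ uniformly in the adversary's choice; the induced binary channels then all have mutual information bounded below by $1-h(3/4)>0$, and the paper deliberately invokes Theorem 1 of Ahlswede--Wolfowitz for binary-output AVCs (rather than Kiefer--Wolfowitz's original Gilbert-code construction) to get positive-rate codes, which are then lifted. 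You instead rescale the separating Hermitian operator into a single POVM element $B$ (this ``obstacle'' is harmless: disjoint compact convex sets in a finite-dimensional space admit strong separation, and an affine shift-and-scale of $A$ into $[0,\idn]$ preserves the gap), obtain a finite-state binary-input binary-output AVC whose two row hulls are disjoint, and cite the Kiefer--Wolfowitz sufficiency theorem as a black box before lifting exactly as the paper does. Your route is shorter and more modular, proving the quantum statement by direct reduction to its classical counterpart; the paper's amplification step buys a more self-contained, quantitative construction that only needs the binary-output AVC capacity formula with a uniform positive rate bound and does not lean on the classical zero-capacity characterization it is generalizing. Both arguments handle block lengths that are not multiples of $l_0$ by padding, a minor bookkeeping point you leave implicit.
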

\begin{proof} We closely follow the line of proof given in \cite{kiefer-wolfowitz}. Let us begin with the 'if' part. Let $K,l\in\nn$, $\{\rho_1,\ldots,\rho_K\}\subset\cs(\hr^{\otimes l})$ and $D_1,\ldots,D_K\in\mathcal B_+(\kr^{\otimes l})$ with $\sum_{i=1}^KD_i=\mathbf1_{\kr^{\otimes l}}$ be a code for transmission of classical messages over $\fri$.\\
We show that the maximal error probability of this code is bounded away from zero for large enough $l$.\\
Let, without loss of generality, $l$ be such that
\begin{align}
&\tr(D_1\cn_{s^l}(\rho_1))>1/2\ \ \ \ \ \forall\ s^l\in\bS^l&\label{eqn-mep-1}\\
&\tr(D_2\cn_{s^l}(\rho_2))>1/2\ \ \ \ \ \forall\ s^l\in\bS^l.&\label{eqn-mep-2}
\end{align}
We show that there is a contradiction between (\ref{eqn-mep-1}) and (\ref{eqn-mep-2}).
By assumption, there exist probability distributions $p_1,p_2\in\mathfrak P(\bS^l)$ such that
\begin{equation}\label{eqn-mep-3}
\sum_{s^l\in\bS^l}p_1(s^l)\cn_{s^l}(\rho_1)=\sum_{s^l\in\bS^l}p_2(s^l)\cn_{s^l}(\rho_2).
\end{equation}
Of course, (\ref{eqn-mep-1}) implies
\begin{align}
\sum_{s^l\in\bS^l}p_1(s^l)\tr(D_1\cn_{s^l}(\rho_1))>1/2.
\end{align}
Together with (\ref{eqn-mep-3}) this leads to
\begin{align}
1/2&<\sum_{s^l\in\bS^l}p_1(s^l)\tr(D_1\cn_{s^l}(\rho_1))&\\
&=\sum_{s^l\in\bS^l}p_2(s^l)\tr(D_1\cn_{s^l}(\rho_2))&\\
&\leq\sum_{s^l\in\bS^l}p_2(s^l)\tr((D_1+\sum_{i=3}^KD_i)\cn_{s^l}(\rho_2))&\\
&=\sum_{s^l\in\bS^l}p_2(s^l)\tr((\mathbf1-D_2)\cn_{s^l}(\rho_2))&\\
&=1-\sum_{s^l\in\bS^l}p_2(s^l)\tr(D_2\cn_{s^l}(\rho_2))&\\
&<1-1/2,&\label{eqn-mep-4}
\end{align}
a clear contradiction. Thus, for every code the maximal error probability is bounded from below by $1/2$.\\
Let us turn to the 'only if' part.\\
Assume there is an ${\hat{l}}\in\nn$ and a set $\{\rho_{1},\rho_{2}\}\subset\cs(\hr^{\otimes {\hat{l}}})$ such that
\begin{align}
\conv(\{\cn_{s^{\hat{l}}}(\rho_{1})\}_{s^{\hat{l}}\in\bS^{\hat{l}}})\cap \conv(\{\cn_{s^{\hat{l}}}(\rho_{2})\}_{s^{\hat{l}}\in\bS^{\hat{l}}})=\emptyset.\label{eqn-mep-5}
\end{align}
Thus, there exists a self adjoint operator $A\in\mathcal B(\kr^{\otimes {\hat{l}}})$ such that
\begin{align}
\tr(A\rho)<0\ \ \forall\ \rho\in \conv(\{\cn_{s^{\hat{l}}}(\rho_{1})\}_{s^{\hat{l}}\in\bS^{\hat{l}}}),\ \ \ \ \tr(A\rho)>0\ \ \forall\ \rho\in \conv(\{\cn_{s^{\hat{l}}}(\rho_{2})\}_{s^{\hat{l}}\in\bS^{\hat{l}}})\label{eqn-mep-6}.
\end{align}
Let $A$ have a decomposition $A=\sum_{x=1
}^da_xA_x$, where $a_x$ are real numbers (including the possibility of $a_x=0$ for some $x$) and $A_x$ are one dimensional projections fulfilling $\sum_{x=1}^dA_x=\mathbf1_{\kr^{\otimes {\hat{l}}}}$.
For every $m\in\nn$, define
\begin{align}
P_1^m:=\sum_{x^m : \frac{1}{m}\sum_{i=1}^ma_{x_i}<0}A_{x_1}\otimes\ldots\otimes A_{x_m},\ \ \ \ P_2^m:=\sum_{x^m : \frac{1}{m}\sum_{i=1}^ma_{x_i}\geq0}A_{x_1}\otimes\ldots\otimes A_{x_m}\label{eqn-mep-7}.
\end{align}
Then $P_1^m+P_2^m=\mathbf1_{\kr^{\otimes {\hat{l}}\cdot m}}$. Let us denote elements of $\bS^{{\hat{l}}m}$ by $s^{{\hat{l}}m}=(s_1^{\hat{l}},\ldots,s_m^{\hat{l}})$, where each $s_i^{\hat{l}}\in\bS^{\hat{l}}$.\\
To every $s^{\hat{l}}\in\bS^{\hat{l}}$, define probability distributions $p_{s^{\hat{l}}},q_{s^{\hat{l}}}\in\cs(\{1,\ldots,d\})$ according to
\begin{align}
p_{s^{\hat{l}}}(x):=\tr(A_x\cn_{s^{\hat{l}}}(\rho_{1})),\ \ \ \ \ q_{s^{\hat{l}}}(x):=\tr(A_x\cn_{s^{\hat{l}}}(\rho_{2})),\ \ \forall\ x\in\{1,\ldots,d\}
\end{align}
and to every $s^{{\hat{l}}m}\in\bS^{{\hat{l}}m}$ we associate two real numbers $\bar A_{s^{{\hat{l}}m}}(\rho_{1}),\bar A_{s^{{\hat{l}}m}}(\rho_{2})$ by
\begin{align}\bar A_{s^{{\hat{l}}m}}(\rho_{1}):=\sum_{s^{\hat{l}}\in\bS^{\hat{l}}}\frac{1}{m}N(s^{\hat{l}}|s^{{\hat{l}}m})\tr(A\cn_{s^{\hat{l}}}(\rho_{1})),\ \ \ \ \ \bar A_{s^{{\hat{l}}m}}(\rho_{2}):=\sum_{s^{\hat{l}}\in\bS^{\hat{l}}}\frac{1}{m}N(s^{\hat{l}}|s^{{\hat{l}}m})\tr(A\cn_{s^{\hat{l}}}(\rho_{2})),\end{align}
with natural numbers $N(s^{\hat{l}}|s^{{\hat{l}}m}):=|\{i:s^{\hat{l}}_i=s^{\hat{l}},\ i\in\{1,\ldots,m\}\}|$ for every $s^{\hat{l}m}\in\bS^{\hat{l}m}$ and $s^{\hat l}\in\bS^{\hat l}$.\\
Obviously, $\bar A_{s^{{\hat{l}}m}}(\rho_{1})<0$ and $\bar A_{s^{{\hat{l}}m}}(\rho_{2})>0$. Setting
\begin{align}C:=\max_{(s^{{\hat{l}}},X)\in\bS^{\hat{l}}\times\{\rho_{1},\rho_{2}\}}(\tr(A\cn_{s^{\hat{l}}}(X))/2)^{-2}(\tr(A^2\cn_{s^{\hat{l}}}(X))-\tr(A\cn_{s^{\hat{l}}}(X))^2)\end{align}
we arrive, by application of Chebyshev's inequality and for every $s^{{\hat{l}}m}=(s^{\hat{l}}_1,\ldots,s^{\hat{l}}_m)\in\bS^{{\hat{l}}m}$ at
\begin{eqnarray}
\tr(P_1^m\cn_{s^{{\hat{l}}m}}(\rho_{1}^{\otimes m}))&=&\sum_{x^m:\frac{1}{m}\sum_{i=1}^ma_{x_i}<0}\tr(A_{x_1}\otimes\ldots\otimes A_{x_m}\cn_{s^{\hat{l}}_1}(\rho_{1})\otimes\ldots\otimes\cn_{s^{\hat{l}}_m}(\rho_{1}))\\
&=&\sum_{x^m:\frac{1}{m}\sum_{i=1}^ma_{x_i}<0}p_{s^{\hat{l}}_1}(x_1)\cdot\ldots\cdot p_{s^{\hat{l}}_m}(x_m)\\
&\geq&\sum_{x^m:|\frac{1}{m}\sum_{i=1}^ma_{x_i}-\bar A_{s^{{\hat{l}}m}}(\rho_{1})|\leq|\bar A_{s^{{\hat{l}}m}}(\rho_{1})/2|}p_{s^{\hat{l}}_1}(x_1)\cdot\ldots\cdot p_{s^{\hat{l}}_m}(x_m)\\
&\geq&1-\frac{1}{m}(\bar A_{s^{{\hat{l}}m}}(\rho_{1})/2)^{-2}\sum_{s^{\hat{l}}\in\bS^{\hat{l}}}\frac{1}{m}N(s^{\hat{l}}|s^{{\hat{l}}m})(\tr(A^2\cn_{s^{\hat{l}}}(\rho_{1}))-\tr(A\cn_{s^{\hat{l}}}(\rho_{1}))^2)\\
&\geq&1-\frac{1}{m}\max_{s^{{\hat{l}}}}(\tr(A\cn_{s^{\hat{l}}}(\rho_{1}))/2)^{-2}(\tr(A^2\cn_{s^{\hat{l}}}(\rho_{1}))-\tr(A\cn_{s^{\hat{l}}}(\rho_{1}))^2)\\
&\geq&1-\frac{1}{m}\cdot C.\label{eqn-mep-10}
\end{eqnarray}
In the very same way, we can prove that
\begin{eqnarray}
\tr(P_2^m\cn_{s^{{\hat{l}}m}}(\rho_{2}^{\otimes m}))&=&\sum_{x^m:\frac{1}{m}\sum_{i=1}^ma_{x_i}\geq0}\tr(A_{x_1}\otimes\ldots\otimes A_{x_m}\cn_{s^{\hat{l}}_1}(\rho_{2})\otimes\ldots\otimes\cn_{s^{\hat{l}}_m}(\rho_{2}))\\
&=&\sum_{x^m:\frac{1}{m}\sum_{i=1}^ma_{x_i}\geq0}q_{s^{\hat{l}}_1}(x_1)\cdot\ldots\cdot q_{s^{\hat{l}}_m}(x_m)\\
&\geq&\sum_{x^m:|\frac{1}{m}\sum_{i=1}^ma_{x_i}-\bar A_{s^{{\hat{l}}m}}(\rho_{2})|\leq |\bar A_{s^{{\hat{l}}m}}(\rho_{2})|/2}q_{s^l
_1}(x_1)\cdot\ldots\cdot q_{s^{\hat{l}}_m}(x_m)\\
&\geq&1-\frac{1}{m}\max_{s^{{\hat{l}}}}(\tr(A\cn_{s^{\hat{l}}}(\rho_{2}))/2)^{-2}(\tr(A^2\cn_{s^{\hat{l}}}(\rho_{2}))-\tr(A\cn_{s^{\hat{l}}}(\rho_{2}))^2)\\
&\geq&1-\frac{1}{m}\cdot C.\label{eqn-mep-11}
\end{eqnarray}
Take any $0<\eps<1/4$. Let $m'=\min\{m\in\nn:\frac{1}{m}\cdot C<\eps\}$. Then
\begin{align}
& \tr(P_1^{m'}\cn_{s^{{\hat{l}}m'}}(\rho_{1}^{\otimes m'}))\geq1-\eps & \tr(P_2^{m'}\cn_{s^{{\hat{l}}m'}}(\rho_{2}^{\otimes m'}))\geq1-\eps \label{eqn-mep-12a}
\end{align}
hold. Consider the classical AVC given by the family $J:=\{c_{\nu,\delta}\}_{\delta,\nu\in[3/4,1]}$ of classical channels $c_{\nu,\delta}:\{0,1\}\rightarrow\{0,1\}$ with stochastic matrices defined via $c_{\nu,\delta}(1|1):=1-\nu,\ c_{\nu,\delta}(2|2):=1-\delta$. Clearly, $J$ is a convex set and, for every $c_{\nu,\delta}\in J$ we have that
\begin{align}
\max_{p\in\mathfrak P(\{0,1\})}I(p,c_{\nu,\delta})&\geq1-\frac{1}{2}(h(\nu)+h(\delta))&\\
&\geq1-h(3/4)&\\
&>0,&\label{eqn-mep-12}
\end{align}
where $I(p,c_{\nu,\delta})$ is the mutual information of the probability distribution $q$ on $\{1,2\}\times\{1,2\}$ which is generated by $p$ and $c_{\nu,\delta}$ through $q(i,j):=p(i)c_{\nu,\delta}(j|i)$ ($(i,j)\in\{1,2\}\times\{1,2\}$). The lower bound given here is calculated using an equidistributed input. Note further that for this special AVC, with notation taken from \cite{ahlswede-wolfowitz-2}, $\bar{\bar{J}}=\conv(J)=J$.\\
At this point in their proof of the classical zero-capacity-condition for AVCs \cite{kiefer-wolfowitz}, Kiefer and Wolfowitz made reference to a result by Gilbert \cite{gilbert}, who proved existence of codes that achieve a positive rate. Kiefer and Wolfowitz used these codes for message transmission over an AVC with binary input and output alphabet. Our strategy of proof is to use the existence of codes for AVCs with binary input and output that is guaranteed by Theorem 1 of \cite{ahlswede-wolfowitz-2} instead. Together with (\ref{eqn-mep-12}) this theorem gives us the existence of a number $C'>0$, a function $\kappa:\nn\rightarrow\mathbb R$ with $\lim_{r\rightarrow\infty}\kappa(r)=0$ and a sequence $(M^r,f^r,\eps_r,(D_1^r,\ldots,D_{|M^r|}^r))_{r\in\nn}$ where for each $r\in\nn$:
\begin{enumerate}
\item $M^r=\{1,\ldots,N\}$ is a finite set of cardinality $N=|M^r|=2^{r(C'-\kappa(r))}$,
\item $f^r:M^r\rightarrow\{1,2\}^r$,
\item $\eps_r\geq0$ and $\lim_{r\rightarrow\infty}\eps_r=0$,
\item $D^r_1,\ldots,D^r_{|M^r|}\subset\{1,2\}^n$ are pairwise disjoint and
\item for every sequence $x^r\in([3/4,1]\times[3/4,1])^r$ and every $i\in M^r$ we have that
\begin{align}
\sum_{y^n\in D_i^r}\prod_{j=1}^rc_{x_j}(y_j|f^r(i)_j)\geq1-\eps_r.\label{eqn-mep-13}
\end{align}
\end{enumerate}
For ${n}\in\nn$, take the unique numbers $r\in\nn$, $t\in\{0,\ldots,m'-1\}$ such that ${n}=m'r+t$ holds. The code for $\fri$ is then defined as follows:
\begin{align}
&M_{n}:=M^r,&\\
&f_{n}(i):=(\rho_{f^r(i)_1})^{\otimes m'}\otimes\ldots\otimes(\rho_{f^r(i)_r})^{\otimes m'}\otimes\sigma^{\otimes t},&\\
&P_i^{n}:=\sum_{y^r\in D_i^r}P^{m'}_{y_1}\otimes\ldots\otimes P^{m'}_{y_r}\otimes\mathbf1_\kr^{\otimes t}.&
\end{align}
Let, for every $s^{m'}\in\bS^{m'}$, $x=(\nu,\delta)\in[3/4,1]^2$ be such that
\begin{align}
& c_{\nu,\delta}(0|0):=\tr(P_1^{m'}\cn_{s^{{n}m'}}(\rho_{1}^{\otimes m'}))=1-\nu & c_{\nu,\delta}(1|1):=\tr(P_1^{m'}\cn_{s^{{n}m'}}(\rho_{2}^{\otimes m'}))=1-\delta. \label{eqn-mep-14}
\end{align}
Then for every $s^{n}\in\bS^{n}$ we use the decomposition $s^{n}=(s^{m'}_1,\ldots,s^{m'}_r,s^t)$ and get, using equation (\ref{eqn-mep-13}) and the definition (\ref{eqn-mep-14}), for every $i\in M_{n}$,
\begin{align}
\tr\{P_i^{n}f_{n}(i)\}&=\tr\{[\sum_{y^r\in D_i^r}P^{m'}_{y_1}\otimes\ldots\otimes P^{m'}_{y_r}\otimes\mathbf1_\kr^{\otimes t}](\rho_{f^r(i)_1})^{\otimes m'}\otimes\ldots\otimes(\rho_{f^r(i)_r})^{\otimes m'}\otimes\sigma^{\otimes t}\}&\\
&=\sum_{y^r\in D_i^r}\prod_{j=1}^r\tr\{P^{m'}_{y_j}(\rho_{f^r(i)_j})^{\otimes m'}\}&\\
&=\sum_{y^r\in D_i^r}\prod_{j=1}^rc_{\nu,\delta}(y_j|f^r(i)_j)&\\
&\geq1-\eps_r.\label{eqn-mep-15}
\end{align}
Obviously, this implies
\begin{equation}\lim_{{n}\rightarrow\infty}\min_{s^{n}\in\bS^{n}}\max_{i\in M_{n}}\tr\{P_i^{n}f_{n}(i)\}=0.\end{equation}
Together with
\begin{equation}\lim_{{n}\rightarrow\infty}\frac{1}{{n}}\log |M_{n}|=\frac{1}{m'}C'>0\end{equation}
we have shown that $C_{det,max}(\fri)>0$ holds.
\end{proof}
Notice that the statements made in (\ref{eqn-mep-3}) and (\ref{eq:equiv-max-1}) equivalent and a glance at Definition \ref{def:c-symmetrizability} reveals that the assertion of (\ref{eqn-mep-3}) is nothing else than the symmetrizability restricted to sets of states consisting of two elements.

\subsection{\label{subsec:Entanglement-transmission-capacity-with-random-codes}Entanglement transmission capacity with random codes}
The final issue in this section is a sufficient condition for $\mathcal{A}_{\textup{random}}(\fri)=0$ which is based on the notion of qc-symmetrizability.\\
Let $\mathfrak{F}_{\cc}(\bS)$ stand for the set of $\cc$-valued functions defined on $\bS$ in what follows and we consider the set of channels with quantum input and classical output (qc-channels)\footnote{Mere positivity is sufficient here because $\mathfrak{F}_{\cc}(\bS)$ is commutative, cf. \cite{paulsen}.}
\begin{equation}\textrm{QC}(\hr,\bS):=\{T:\mathcal{B}(\hr)\to \mathfrak{F}_{\cc}(\bS): T \textrm{ is linear, positive, and trace preserving}\}.    \end{equation}
The condition that $T\in\textrm{QC}(\hr,\bS) $ is trace preserving means that
\begin{equation}\sum_{s\in \bS}[T(b)](s)=\textrm{tr}(b)  \end{equation}
holds for all $b\in\mathcal{B}(\hr)$. By Riesz' representation theorem there is a one-to-one correspondence between elements $T\in \textrm{QC}(\hr,\bS) $ and (discrete) positive operator-valued measures (POVM) $\{E_s  \}_{s\in \bS}$.\\
For a given finite set of quantum channels $\fri=\{\cn_s  \}_{s\in \bS}$ and $T\in \textrm{QC}(\hr,\bS) $ we define a CPTP map $\mathcal{M}_{T,\bS}:\mathcal{B}(\hr)\otimes \mathcal{B}(\hr)\to\mathcal{B}(\kr)$ by
\begin{eqnarray}\label{def-M}
\mathcal{M}_{T,\bS}(a\otimes b)&:=&\sum_{s\in \bS}[T(a)](s)\cn_s (b)\\
&=& \sum_{s\in \bS}\textrm{tr}(E_s a)\cn_s(b),
\end{eqnarray}
where $\{ E_s \}_{s\in \bS}$ is the unique POVM associated with $T$.
\begin{definition}\label{def:symmetrizability}
An arbitrarily varying quantum channel, generated by a finite set $\fri=\{\cn_s  \}_{s\in \bS}$, is called qc-symmetrizable if there is $T\in \textrm{QC}(\hr,\bS) $ such that for all $a,b\in\mathcal{B}(\hr) $
\begin{equation}\label{eq:symmetrizable}
  \mathcal{M}_{T,\bS}(a\otimes b)=\mathcal{M}_{T,\bS}(b\otimes a)
\end{equation}
holds, where $\mathcal{M}_{T,\bS}:\mathcal{B}(\hr)\otimes \mathcal{B}(\hr)\to\mathcal{B}(\kr) $ is the CPTP map defined in (\ref{def-M}).
\end{definition}
The best illustration of the definition of qc-symmetrizability is given in the proof of our next theorem.
\begin{theorem}\label{symm-implies-0-capacity}
If an arbitrarily varying quantum channel generated by a finite set $\fri=\{\cn_s  \}_{s\in \bS}$ is qc-symmetrizable, then for any sequence of $(l,k_l)$-random codes $(\mu_l)_{l\in\nn}$ with $k_l=\dim \fr_l\ge 2$ for all $l\in\nn$ we have
\begin{equation}\inf_{s^l\in \bS^l}\int F_e(\pi_{\fr_l}, \crr^l\circ \cn_{s^l}\circ \mathcal{P}^l )d\mu_l (\crr^l,\mathcal{P}^l)\le \frac{1}{2}, \end{equation}
for all $l\in\nn$. Thus
\begin{equation}\mathcal{A}_{\textup{random}}(\fri)=0,  \end{equation}
and consequently
\begin{equation}\mathcal{A}_{\textup{det}}(\fri)=0.  \end{equation}
\end{theorem}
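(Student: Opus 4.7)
The plan is to exploit the qc-symmetry $\mathcal{M}_{T,\bS}(a\otimes b)=\mathcal{M}_{T,\bS}(b\otimes a)$ so that, on average over a cleverly chosen distribution over $\bS^l$, the adversary simulates an entanglement-breaking channel between sender and receiver; any such channel has entanglement fidelity at most $1/k_l\le 1/2$ against a maximally entangled input of Schmidt rank $k_l\ge 2$, which suffices to bound the averaged $F_e$.

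First I would observe that qc-symmetrizability lifts tensorially to block length $l$. The $l$-fold map $T^{\otimes l}$ is itself a qc-channel with POVM $\{E_{s^l}\}_{s^l\in\bS^l}$, where $E_{s^l}:=E_{s_1}\otimes\cdots\otimes E_{s_l}$, and since $\cn_{s^l}=\cn_{s_1}\otimes\cdots\otimes\cn_{s_l}$, bilinearity together with the single-letter identity applied on product elements gives
\[
\sum_{s^l\in\bS^l}\tr(E_{s^l}a)\,\cn_{s^l}(b)\;=\;\sum_{s^l\in\bS^l}\tr(E_{s^l}b)\,\cn_{s^l}(a)\qquad\forall\,a,b\in\mathcal{B}(\hr^{\otimes l}).
\]
Next, pick any state $\sigma_l\in\cs(\hr^{\otimes l})$ and set $q_l(s^l):=\tr(E_{s^l}\sigma_l)$; this is a probability distribution on $\bS^l$. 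Since an infimum is bounded above by any weighted average, linearity of $F_e$ in the channel slot combined with Fubini (applicable because $\bS^l$ is finite) yields
\[
\inf_{s^l\in\bS^l}\int F_e(\pi_{\fr_l},\crr^l\!\circ\!\cn_{s^l}\!\circ\!\cP^l)\,d\mu_l\;\le\;\int F_e(\pi_{\fr_l},\crr^l\!\circ\!\Lambda_{\sigma_l}\!\circ\!\cP^l)\,d\mu_l,
\]
where $\Lambda_{\sigma_l}(\rho):=\sum_{s^l}q_l(s^l)\cn_{s^l}(\rho)=\mathcal{M}_{T^{\otimes l},\bS^l}(\sigma_l\otimes\rho)$.

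Now the decisive step: the $l$-fold symmetry rewrites $\Lambda_{\sigma_l}(\rho)=\sum_{s^l}\tr(E_{s^l}\rho)\,\cn_{s^l}(\sigma_l)$, which is manifestly measure-and-prepare and hence entanglement-breaking. Composition on either side with arbitrary CPTP maps preserves this property, so $M:=\crr^l\circ\Lambda_{\sigma_l}\circ\cP^l:\mathcal{B}(\fr_l)\to\mathcal{B}(\fr_l')$ is entanglement-breaking. The bound $F_e(\pi_{\fr_l},M)\le 1/k_l$ is then standard: $(\textup{id}_{\fr_l}\otimes M)(|\psi\rangle\langle\psi|)$ is separable on $\fr_l\otimes\fr_l'$ for the maximally entangled purification $|\psi\rangle\in\fr_l\otimes\fr_l$ of $\pi_{\fr_l}$, and Cauchy--Schwarz gives $|\langle\psi|\alpha\otimes\beta\rangle|^2\le 1/k_l$ on every product vector, so $\langle\psi|\omega|\psi\rangle\le 1/k_l$ for every separable $\omega$. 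Since $k_l\ge 2$ this forces $F_e\le 1/2$, hence $\mathcal{A}_{\textup{random}}(\fri)=0$, and a fortiori $\mathcal{A}_{\textup{det}}(\fri)=0$.

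I do not foresee a real obstacle: the argument is a clean quantum analog of the classical symmetrizability zero-capacity proof, with the entanglement-breaking reduction taking the place of the indistinguishability-of-messages step and the separable-vs.-maximally-entangled fidelity bound playing the role of the corresponding classical binary hypothesis-testing estimate. The only subtle point worth stating explicitly is that introducing a probabilistic adversarial strategy via $q_l$ is legitimate precisely because we only need an upper bound on $\inf_{s^l}$, which any average achieves.
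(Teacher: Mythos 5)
Your proposal is correct and follows essentially the same route as the paper: bound the infimum by the average over the distribution $\tr(E_{s^l}\,\cdot\,)$ induced by a reference state, use qc-symmetry to rewrite that averaged channel as a measure-and-prepare (entanglement-breaking) map, and apply the separable-state overlap bound $\le 1/k_l\le 1/2$ against the maximally entangled input. The only cosmetic difference is that you lift the symmetry to block length $l$ explicitly (the trivial direction noted after Definition \ref{def:l-symmetrizability}) and use an arbitrary block state, whereas the paper fixes $\sigma^{\otimes l}$ and compares the two channels $E_1^{\otimes l}$ and $E_2^{\otimes l}$ via a basis expansion; the substance is identical.
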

\emph{Proof.} We have to show that for the codes $(\mathcal{P}^l,\crr^l)$ with the properties as stated in the lemma the inequality
\begin{equation}\label{eq:0-cap-1}
  \inf_{s^l\in \bS^l}\int F_e(\pi_{\fr_l}, \crr^l\circ \cn_{s^l}\circ \mathcal{P}^l )d\mu_l (\crr^l,\mathcal{P}^l)\le \frac{1}{2}
\end{equation}
holds for all $l\in\nn$.\\
Let $\psi_l\in\cs(\fr_l\otimes \fr_l) $ be a purification of $\pi_{\fr_l}$ which is, clearly, maximally entangled. Inequality (\ref{eq:0-cap-1}) can then be equivalently reformulated as
\begin{equation}\label{eq:0-cap-2}
  \inf_{s^l\in \bS^l}\int \langle \psi_l, (id_{\fr_l}\otimes (\crr^l\circ \cn_{s^l}\circ \mathcal{P}^l))(|\psi_l\rangle\langle \psi_l| )  \psi_l    \rangle d\mu_l (\crr^l,\mathcal{P}^l )\le \frac{1}{2}.
\end{equation}
We fix $\sigma\in \cs(\hr)$ and define CPTP maps $E_1,E_2: \mathcal{B}(\hr)\to\mathcal{B}(\kr)$ by
\begin{equation}\label{eq:0-cap-3}
  E_1(a):=\mathcal{M}_{T,\bS}(\sigma\otimes a)=\sum_{s\in \bS}\textrm{tr}(E_s \sigma)\cn_s(a)
\end{equation}
and
\begin{equation}\label{eq:0-cap-4}
  E_2(a):=\mathcal{M}_{T,\bS}(a\otimes \sigma)=\sum_{s\in \bS}\textrm{tr}(E_s a)\cn_s(\sigma).
\end{equation}
Then
\begin{eqnarray}\label{eq:0-cap-5}
  \int F_e(\pi_{\fr_l}, \crr^l\circ E_1^{\otimes l}\circ \mathcal{P}^l)d\mu_l (\crr^l,\mathcal{P}^l ) &=& \sum_{s^l\in \bS^l}\textrm{tr}(E_{s^l}\sigma^{\otimes l})\int F_e(\pi_{\fr_l}, \crr^l\circ\cn_{s^l}\circ \mathcal{P}^l )d\mu_l (\crr^l,\mathcal{P}^l )\\
&\ge& \inf_{s^l\in \bS^l}\int F_e(\pi_{\fr_l}, \crr^l\circ \cn_{s^l}\circ \mathcal{P}^l )d\mu_l (\crr^l,\mathcal{P}^l ),
\end{eqnarray}
where $E_{s^l}:= E_{s_1}\otimes \ldots \otimes E_{s_l}$. Therefore, we are done if we can show that
\begin{equation}\label{eq:0-cap-6}
 \int F_e(\pi_{\fr_l}, \crr^l\circ E_1^{\otimes l}\circ \mathcal{P}^l)d\mu_l (\crr^l,\mathcal{P}^l )\le \frac{1}{2}
\end{equation}
for all $l\in\nn$. \\
On the other hand, choosing bases $\{ e_{i,j} \}_{i,j=1}^{k_l}$ and $\{f_{k,m}  \}_{k,m=1}^{d^l}$ of $\mathcal{B}(\fr_l)$ and $\mathcal{B}(\hr)^{\otimes l}$ respectively, we can write
\begin{equation}id_{\fr_l}\otimes \mathcal{P}^l(|\psi_l\rangle\langle \psi_l| )=:\rho_l=\sum_{i,j,k,m}\rho_{i,j,k,m}e_{i,j}\otimes f_{k,m},  \end{equation}
and obtain
\begin{eqnarray}
 id_{\fr_l}\otimes (\crr^l\circ E_1^{\otimes l} )(\rho_l)&=&\sum_{i,j,k,m}\rho_{i,j,k,m}e_{i,j}\otimes \crr^l (\mathcal{M}_{T,\bS}^{\otimes l}(\sigma^{\otimes l}\otimes f_{k,m}  )  ) \\
&=& \sum_{i,j,k,m}\rho_{i,j,k,m}e_{i,j}\otimes \crr^l (\mathcal{M}_{T,\bS}^{\otimes l}( f_{k,m}\otimes \sigma^{\otimes l}  )  )\\
&=& \sum_{i,j,k,m}\rho_{i,j,k,m}e_{i,j}\otimes \crr^l(E_2^{\otimes l}(f_{k,m}))\\
&=& id_{\fr_l}\otimes (\crr^l\circ E_2^{\otimes l} )(\rho_l),
\end{eqnarray}
where the second equality follows from the assumed qc-symmetrizability. Thus, we end up with
\begin{equation}\label{eq:0-cap-7}
F_e(\pi_{\fr_l}, \crr^l\circ E_1^{\otimes l}\circ \mathcal{P}^l)=F_e(\pi_{\fr_l}, \crr^l\circ E_2^{\otimes l}\circ \mathcal{P}^l),
\end{equation}
for any encoding operation $\mathcal{P}^l$ and any recovery operation $\crr^l$.
Consequently, by (\ref{eq:0-cap-7}) and (\ref{eq:0-cap-5}) we have to show that for all $l\in\nn$
\begin{equation}\label{eq:0-cap-8}
 \int F_e(\pi_{\fr_l}, \crr^l\circ E_2^{\otimes l}\circ \mathcal{P}^l)d\mu_l (\crr^l,\mathcal{P}^l)\le \frac{1}{2}
\end{equation}
holds. But the channel
\begin{equation} E_2(a)=\sum_{s\in S}\textrm{tr}(E_s a)\cn_s(\sigma)\qquad (a\in\mathcal{B}(\hr)) \end{equation}
is entanglement breaking implying that the state
\begin{equation} (id_{\fr_l}\otimes \crr^l\circ E_2^{\otimes l}\circ \mathcal{P}^l)(|\psi_l\rangle\langle \psi_l|) \end{equation}
is separable. A standard result from entanglement theory implies that
\begin{equation}\label{eq:0-cap-9}
\langle \psi_l, (id_{\fr_l}\otimes \crr^l\circ E_2^{\otimes l}\circ \mathcal{P}^l)(|\psi_l\rangle\langle \psi_l|)        \psi_l\rangle \le \frac{1}{k_l}
\end{equation}
holds for any $\crr^l$ and $\mathcal{P}^l$ since $\psi_l$ is maximally entangled with Schmidt rank $k_l$. Now, our assumption that for each $l\in\nn$ the relation $k_l\ge 2$ holds implies along with (\ref{eq:0-cap-9}) that for all $l\in\nn$
\begin{equation}\int F_e(\pi_{\fr_l}, \crr^l\circ E_2^{\otimes l}\circ \mathcal{P}^l)d\mu_l(\crr^l,\mathcal{P}^l)=\int \langle \psi_l, (id_{\fr_l}\otimes \crr^l\circ E_2^{\otimes l}\circ \mathcal{P}^l)(|\psi_l\rangle\langle \psi_l|)        \psi_l\rangle d\mu_l(\crr^l,\mathcal{P}^l) \le \frac{1}{2},   \end{equation}
and by (\ref{eq:0-cap-7}) and (\ref{eq:0-cap-5}) we obtain
\begin{equation} \inf_{s^l\in \bS^l}\int F_e(\pi_{\fr_l}, \crr^l\circ \cn_{s^l}\circ \mathcal{P}^l )d\mu_l(\crr^l,\mathcal{P}^l)\le \frac{1}{2}  \end{equation}
which concludes the proof.
\begin{flushright}$\Box$\end{flushright}
Our Definition \ref{def:symmetrizability} addresses the notion of qc-symmetrizability for block length $l=1$. Thus the question arises whether a less restrictive requirement, as stated in the following definition, gives a better sufficient condition for an arbitrarily varying quantum channel to have capacity $0$.
\begin{definition}\label{def:l-symmetrizability}
An arbitrarily varying quantum channel, generated by a finite set $\fri=\{\cn_s  \}_{s\in \bS}$, is called $l$-qc-symmetrizable, $l\in\nn$, if there is $T\in \textrm{QC}(\hr^{\otimes l},\bS^l) $ such that for all $a,b\in\mathcal{B}(\hr)^{\otimes l} $
\begin{equation}\label{eq:l-symmetrizable}
  \mathcal{M}_{T,\bS}^l(a\otimes b)=\mathcal{M}_{T,\bS}^{l}(b\otimes a)
\end{equation}
holds, where $\mathcal{M}_{T,\bS}^l:\mathcal{B}(\hr)^{\otimes l}\otimes \mathcal{B}(\hr)^{\otimes l}\to\mathcal{B}(\kr)^{\otimes l} $ is the CPTP map defined by
\begin{equation}\label{def-M-l}
\mathcal{M}_{T,\bS}^{l}(a\otimes b):= \sum_{s^l\in \bS^{l}}\textrm{tr}(E_{s^l} a)\cn_{s^l}(b),
\end{equation}
and $\{ E_{s^l} \}_{s^l\in \bS^l}$ is the unique POVM corresponding to  $T\in \textrm{QC}(\hr^{\otimes l},\bS^l) $.
\end{definition}
Obviously, qc-symmetrizability implies $l$-qc-symmetrizability for all $l\in\nn$. The next lemma states that the reverse implication is true too.
\begin{lemma}\label{l-symm-implies-symm}
For any finitely generated AVQC given by $\fri=\{\cn_s  \}_{s\in \bS}$ $l$-qc-symmetrizability implies qc-symmetrizability for any $l\in \nn$.
\end{lemma}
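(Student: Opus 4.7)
The plan is to construct a single-letter qc-symmetrizing POVM $\{F_s\}_{s\in\bS}$ on $\hr$ out of the given $l$-qc-symmetrizing POVM $\{E_{s^l}\}_{s^l\in\bS^l}$ on $\hr^{\otimes l}$ by ``freezing'' the last $l-1$ tensor factors with an arbitrary reference state $\sigma\in\cs(\hr)$. Concretely, I would set
\[
  F_s \; := \; \sum_{s_2,\ldots,s_l\in\bS} \tr_{2,\ldots,l}\!\left( E_{(s,s_2,\ldots,s_l)}\,(\idn_{\hr}\otimes \sigma^{\otimes(l-1)}) \right) \qquad (s\in\bS),
\]
which is an operator on $\mathcal{B}(\hr)$.

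Step one is to verify that $\{F_s\}_{s\in\bS}$ is indeed a POVM. Positivity of each $F_s$ follows because the map $A\mapsto \tr_{2,\ldots,l}(A(\idn\otimes \sigma^{\otimes(l-1)}))$ is (up to identifying $\mathcal{B}(\hr)\otimes\mathcal{B}(\hr)^{\otimes(l-1)}$ with $\mathcal{B}(\hr^{\otimes l})$) the tensor product of $\textup{id}_{\mathcal{B}(\hr)}$ with the positive linear functional $x\mapsto \tr(x\sigma^{\otimes(l-1)})$, hence completely positive. Summing over $s$ and using $\sum_{s^l}E_{s^l}=\idn_{\hr^{\otimes l}}$ together with $\tr(\sigma^{\otimes(l-1)})=1$ gives $\sum_{s\in\bS}F_s=\idn_{\hr}$.

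Step two is to exploit the $l$-qc-symmetrization condition (\ref{eq:l-symmetrizable}) with the particular choice $A=a\otimes \sigma^{\otimes(l-1)}$ and $B=b\otimes \sigma^{\otimes(l-1)}$ for arbitrary $a,b\in\mathcal{B}(\hr)$. By definition of $\mathcal{M}^l_{T,\bS}$ this yields the identity
\[
  \sum_{s^l} \tr\!\left(E_{s^l}(a\otimes \sigma^{\otimes(l-1)})\right)\, \cn_{s_1}(b)\otimes \bigotimes_{i=2}^{l}\cn_{s_i}(\sigma) \;=\; \sum_{s^l} \tr\!\left(E_{s^l}(b\otimes \sigma^{\otimes(l-1)})\right)\, \cn_{s_1}(a)\otimes \bigotimes_{i=2}^{l}\cn_{s_i}(\sigma)
\]
in $\mathcal{B}(\kr)^{\otimes l}$. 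Applying the partial trace $\tr_{2,\ldots,l}$ to both sides and using trace preservation $\tr(\cn_{s_i}(\sigma))=1$ collapses the tensor tails to scalars, reducing the equality to
\[
  \sum_{s_1\in\bS}\underbrace{\left[\sum_{s_2,\ldots,s_l}\!\tr\bigl(E_{s^l}(a\otimes\sigma^{\otimes(l-1)})\bigr)\right]}_{=\,\tr(F_{s_1}\, a)}\,\cn_{s_1}(b) \;=\;\sum_{s_1\in\bS}\tr(F_{s_1}\, b)\,\cn_{s_1}(a),
\]
where the identification of the bracket with $\tr(F_{s_1}a)$ is a direct consequence of the elementary identity $\tr(X(a\otimes \sigma^{\otimes(l-1)}))=\tr(a\cdot\tr_{2,\ldots,l}(X(\idn\otimes\sigma^{\otimes(l-1)})))$ applied to $X=E_{s^l}$.

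The resulting identity is exactly the qc-symmetrization condition (\ref{eq:symmetrizable}) with POVM $\{F_s\}_{s\in\bS}$, completing the proof. I do not anticipate any serious obstacle here: the only subtle point is the correct bookkeeping of the partial trace and the verification that the auxiliary map $A\mapsto\tr_{2,\ldots,l}(A(\idn\otimes\sigma^{\otimes(l-1)}))$ preserves positivity, both of which are routine once the construction is written down.
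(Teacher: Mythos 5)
Your proposal is correct and follows essentially the same route as the paper: both reduce the $l$-letter POVM to a single-letter one by freezing the last $l-1$ input factors with a fixed reference state (the paper uses $\frac{1}{\dim\hr}\idn_{\hr}$ where you allow an arbitrary $\sigma$, and defines the operators via Riesz representation rather than your explicit partial-trace formula), and then plugs $a\otimes\sigma^{\otimes(l-1)}$, $b\otimes\sigma^{\otimes(l-1)}$ into the $l$-qc-symmetrization identity and traces out the tail.
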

\emph{Proof.} For a given finite set of quantum channels $\fri=\{\cn_s \}_{s\in \bS}$ and $l\in\nn$ let  $T\in \textrm{QC}(\hr^{\otimes l},\bS^l) $ be such that for all $a,b\in\mathcal{B}(\hr)^{\otimes l} $
\begin{equation}\label{l-imp-1-1}
 \mathcal{M}_{T,\bS}^l(a\otimes b)=\mathcal{M}_{T,\bS}^{l}(b\otimes a),
\end{equation}
where  $\mathcal{M}_{T,\bS}^l$ is defined in (\ref{def-M-l}).\\
Let $b\in\mathcal{B}(\hr)$ and for each $s\in \bS$ define a linear functional
\begin{equation}\phi_s(b):=\textrm{tr}\left(\left(b\otimes \left(\frac{1}{\dim \hr}\idn_{\hr}\right)^{\otimes l-1} \right)\sum_{s_2^{l}\in \bS^{l-1}}E_{ss_2^{l}} \right),  \end{equation}
where $ss_2^{l}:=(s,s_2,\ldots ,s_l)\in \bS^l$. Clearly, $\phi_s$ is positive. Consequently, Riesz' representation theorem shows that there is a unique positive $\tilde{E}_s\in\mathcal{B}(\hr)$ with
\begin{equation}\label{l-imp-1-2}
  \phi_s(b)=\textrm{tr}(\tilde{E}_s b)\qquad \qquad (b\in\mathcal{B}(\hr)).
\end{equation}
Obviously, $\{\tilde{E}_s  \}_{s\in \bS}$ is a POVM and let $\tilde{T}\in\textrm{QC}(\hr,\bS) $ denote the associated qc-channel.\\
Some simple algebra shows that for each $a,b\in \mathcal{B}(\hr)$
\begin{equation}\label{l-imp-1-3}
  \mathcal{M}_{\tilde{T},\bS}(a\otimes b)=\textrm{tr}_{\kr^{\otimes l-1}}\mathcal{M}_{T,\bS}^l\left(\left(a\otimes \left(\frac{1}{\dim \hr}\idn_{\hr}\right)^{\otimes l-1}\right)\otimes \left(b\otimes \left(\frac{1}{\dim \hr}\idn_{\hr}\right)^{\otimes l-1}     \right)   \right)
\end{equation}
where $\textrm{tr}_{\kr^{\otimes l-1}} $ denotes the partial trace over the last $l-1$ tensor factors. The relation (\ref{l-imp-1-3}) immediately implies that for all $a,b\in\mathcal{B}(\hr)$
\begin{equation}\label{l-imp-1-4}
\mathcal{M}_{\tilde{T},\bS}(a\otimes b)= \mathcal{M}_{\tilde{T},\bS}(b\otimes a),
\end{equation}
and
\begin{eqnarray}\label{l-imp-1-5}
\mathcal{M}_{\tilde{T},\bS}(a\otimes b)&=& \sum_{s^l\in \bS^l}\textrm{tr}\left(\left(b\otimes \left(\frac{1}{\dim \hr}\idn_{\hr}\right)^{\otimes l-1}      \right)  E_{s^l}  \right)\cn_{s_1}(a)\\
&=& \sum_{s_1\in \bS} \textrm{tr}\left(\left(b\otimes \left(\frac{1}{\dim \hr}\idn_{\hr}\right)^{\otimes l-1}      \right)\sum_{s_2^l\in \bS^{l-1}}  E_{s^l}  \right)\cn_{s_1}(a)\\
&=& \sum_{s_1\in \bS}\phi_{s_1}(b)\cn_{s_1}(a)\\
&=& \sum_{s_1\in \bS}\textrm{tr}(b \tilde{E}_{s_1})\cn_{s_1}(a).
\end{eqnarray}
Equations (\ref{l-imp-1-4}) and (\ref{l-imp-1-5}) show that $\fri$ is qc-symmetrizable.
\begin{flushright}$\Box$\end{flushright}
\section{\label{sec:Conditions for single-letter-capacities}Conditions for single-letter capacity formulas}
In this section we give two conditions on the structure of a finite AVQC which guarantee that their quantum capacity is given by a single-letter formula. 
The first one is empty in the case of a single channel, while the second one generalizes the degradability condition from \cite{devetak-shor} that we repeat here for readers convenience:\\
A channel $\cn\in\mathcal C(\hr,\kr)$ is called degradable if for any Hilbert space $\kr_E$ and any partial isometry $V:\hr\rightarrow\kr\otimes\kr_E$ such that $\cn(\cdot)=\tr_{\kr_E}(V\cdot V^*)$ there is 
$\cn_V\in\mathcal C(\kr,\kr_E)$ such that $\cn_V\circ\cn=\tr_\kr(V\cdot V^*)$. Degradability is a property independent from the choice of $\kr_E$ and $V$ - if it holds for only one such choice, then it holds for all possible choices.
\begin{lemma}\label{lemma-single-letter-conditions} Let $\fri=\{\cn_s\}_{s\in\bS}$ satisfy $\A_{\mathrm{det}}(\fri)=\A_{\mathrm{random}}(\fri)$ (for example, $\fri$ might be non-symmetrizable). We have a single-letter formula for $\A_{\textup{det}}(\fri)$ in any of the following two cases:
\begin{enumerate}
\item There is $\cn_*\in \conv(\fri)$ such that for any $\cn\in \conv(\fri)$ there is $\D_\cn\in\mathcal C(\kr,\kr)$ with the property $\cn_*=\D_\cn\circ\cn$ and, 
additionally, $Q(\cn_*)=\max_{\rho\in\cs(\hr)}I_c(\rho,\cn_*)$ holds for the entanglement transmission capacity $Q(\cn_*)$ of the memoryless channel $\cn_*$.
\item Each $\cn\in \conv(\fri)$ is degradable.
\end{enumerate}
\end{lemma}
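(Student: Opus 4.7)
Since $\fri$ is non-symmetrizable, Theorem \ref{symm-equiv-C-0} guarantees $C_{\textup{det}}(\fri)>0$, and so Theorem \ref{quant-ahlswede-dichotomy} together with the remark following it yields the regularized formula
\[ \A_{\textup{det}}(\fri) = \lim_{l\to\infty}\frac{1}{l}\max_{\rho\in\cs(\hr^{\otimes l})}\inf_{\cn\in \conv(\fri)}I_c(\rho,\cn^{\otimes l}). \]
For Case 1 the plan is to show that $\cn_*$ attains the infimum uniformly in $l$ and in $\rho$. Writing $\cn_*^{\otimes l} = \D_\cn^{\otimes l}\circ \cn^{\otimes l}$ for any $\cn\in\conv(\fri)$ and applying the quantum data processing inequality for coherent information tensor-wise gives $I_c(\rho,\cn_*^{\otimes l}) \leq I_c(\rho,\cn^{\otimes l})$; since $\cn_*\in\conv(\fri)$ itself, the reverse inequality for the infimum is trivial, so equality holds. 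The regularized formula therefore collapses to $\lim_l \tfrac{1}{l}\max_\rho I_c(\rho,\cn_*^{\otimes l}) = Q(\cn_*)$, which by the additivity hypothesis equals $\max_{\rho\in\cs(\hr)} I_c(\rho,\cn_*)$.

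For Case 2 the target is the single-letter expression
\[ \A_{\textup{det}}(\fri) = \max_{\rho\in\cs(\hr)}\inf_{\cn\in\conv(\fri)} I_c(\rho,\cn). \]
The lower bound is obtained by restricting to product inputs $\rho_0^{\otimes l}$: the coherent information factorizes as $I_c(\rho_0^{\otimes l},\cn^{\otimes l}) = l\, I_c(\rho_0,\cn)$ for every $\cn$, whence $\tfrac{1}{l}\max_\rho\inf_\cn I_c(\rho,\cn^{\otimes l}) \geq \max_{\rho_0}\inf_\cn I_c(\rho_0,\cn)$. For the matching upper bound I invoke the two standard consequences of degradability of a single channel $\cn$: (i) $I_c(\cdot,\cn)$ is concave on input states, and (ii) it is subadditive in the sense $I_c(\rho,\cn^{\otimes l}) \leq \sum_{i=1}^{l} I_c(\rho_i,\cn)$, where $\rho_i$ denotes the marginal of $\rho$ on the $i$-th tensor factor. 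Averaging via concavity yields $\tfrac{1}{l}\sum_i I_c(\rho_i,\cn) \leq I_c(\bar\rho,\cn)$ with $\bar\rho := \tfrac{1}{l}\sum_i \rho_i$, crucially independent of $\cn$, so that $\tfrac{1}{l} I_c(\rho,\cn^{\otimes l}) \leq I_c(\bar\rho,\cn)$ for every degradable $\cn\in\conv(\fri)$. Taking the infimum over $\cn$ and then the maximum over $\rho$ on both sides gives the desired upper bound.

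The main subtlety will be in Case 2: the chain of inequalities must be arranged so that the infimum over $\conv(\fri)$ is taken after the marginalization--concavity step rather than before, since subadditivity (ii) is an inequality for a \emph{single} fixed $\cn$ acting on each tensor factor. The introduction of the $\cn$-independent average state $\bar\rho$ is precisely what allows one to commute the infimum past the single-letter bound, and this is the technical heart of the argument. For Case 1, the only substantive hypothesis beyond data processing is the assumed additivity $Q(\cn_*) = \max_\rho I_c(\rho,\cn_*)$, which is genuinely restrictive already at the single-channel level.
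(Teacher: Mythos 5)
Your proposal is correct. Case 1 is essentially the paper's argument: non-symmetrizability plus the dichotomy theorem gives the regularized formula, data processing through $\D_\cn^{\otimes l}$ shows $\cn_*$ realizes the infimum, and the assumed additivity $Q(\cn_*)=\max_\rho I_c(\rho,\cn_*)$ finishes it (the paper phrases this as two separate inequalities, but the content is identical). In Case 2 you take a genuinely different route for the upper bound: the paper first invokes the von Neumann/Kakutani minimax theorem — using concavity of $I_c(\cdot,\cn)$ for degradable $\cn$, convexity of $I_c(\rho,\cdot)$ in the channel, and continuity — to write $\max_\rho\min_{\cn\in\conv(\fri)}I_c(\rho,\cn)=\min_\cn\max_\rho I_c(\rho,\cn)$, then single-letterizes $\min_\cn\max_\rho$ via the Devetak--Shor subadditivity $I_c(\rho,\cn^{\otimes l})\le\sum_i I_c(\rho_i,\cn)$, and swaps back. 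You instead combine that same subadditivity with concavity directly, passing to the $\cn$-independent averaged marginal $\bar\rho=\frac{1}{l}\sum_i\rho_i$ so that $\frac{1}{l}I_c(\rho,\cn^{\otimes l})\le I_c(\bar\rho,\cn)$ uniformly in $\cn\in\conv(\fri)$, which lets you take $\inf_\cn$ and $\max_\rho$ without ever exchanging them. This is more elementary (no minimax machinery, no use of convexity in the channel), while the paper's detour buys the additional ``worst-channel'' characterization $\min_{\cn\in\conv(\fri)}\max_\rho I_c(\rho,\cn)$ of the capacity. Your lower bound via product inputs and additivity of coherent information on product states is also fine and just makes explicit the superadditivity that the paper absorbs into the ``obvious relation'' $\A_{\textup{random}}(\fri)\ge\max_\rho\min_\cn I_c(\rho,\cn)$ coming from the achievability part at block length one.
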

\begin{proof}
\emph{1.} It is clear that
\begin{equation}
\A_{\textup{det}}(\fri)\leq Q(\cn_*)=\max_{\rho\in\cs(\hr)}I_c(\rho,\cn_*).\label{eqn:single-letter-conditions-1}
\end{equation}
By assumption, we have
\begin{equation}
\A_{\textup{det}}(\fri)=\A_{\textup{random}}(\fri)=\lim_{l\rightarrow\infty}\frac{1}{l}\max_{\rho\in\cs(\hr^{\otimes l})}\min_{\cn\in \conv(\fri)} I_c(\rho,\cn^{\otimes l}).\label{eqn:single-letter-conditions-2}
\end{equation}
On the other hand by application of the data-processing inequality \cite{schumacher-nielsen} we have, for all $\rho\in\cs(\hr^{\otimes l})$, $\cn\in \conv(\fri)$ and $l\in\nn$,
\begin{align}
I_c(\rho,\cn^{\otimes l})&\geq I_c(\rho,\D^{\otimes l}_\cn\circ\cn^{\otimes l})&\\
&=I_c(\rho,\cn_*^{\otimes l}).&
\end{align}
It follows that
\begin{align}
\frac{1}{l}\max_{\rho\in\cs(\hr^{\otimes l})}I_c(\rho,\cn^{\otimes l})&\geq \frac{1}{l}\max_{\rho\in\cs(\hr^{\otimes l})}I_c(\rho,\cn_*^{\otimes l})&
\end{align}
and by (\ref{eqn:single-letter-conditions-2}):
\begin{align}
\A_{\textup{det}}(\fri)&\geq\lim_{l\rightarrow\infty}\frac{1}{l}\max_{\rho\in\cs(\hr^{\otimes l})}I_c(\rho,\cn_*^{\otimes l})&\\
&=Q(\cn_*)&\\
&=\max_{\rho\in\cs(\hr)}I_c(\rho,\cn_*).\label{eqn:single-letter-conditions-3}
\end{align}
Equations (\ref{eqn:single-letter-conditions-1}) and (\ref{eqn:single-letter-conditions-3}) give us the desired result:
\begin{equation}
\A_{\textup{det}}(\fri)=\max_{\rho\in\cs(\hr)}I_c(\rho,\cn_*).
\end{equation}
\emph{2.} It is well known that the following three properties are valid:
\begin{itemize}
\item[P1] If a $\cn\in \mathcal C(\hr,\kr)$ is degradable, then the map $\rho\mapsto I_c(\rho,\cn)$ is concave (\cite{yard-devetak-hayden}, Lemma 5).
\item[P2] For every fixed $\rho\in\cs(\hr)$, $\cn\mapsto I_c(\rho,\cn)$ is convex (see \cite{lieb-ruskai}, Theorem 1).
\item[P3] Let $\cn\in\mathcal C(\hr,\kr)$ be degradable. For an arbitrary $l\in\nn$, write $\hr^{\otimes l}=\hr_1\otimes\ldots\otimes\hr_l$ with $\hr_i:=\hr$ for every $i\in\nn$. Let $\rho\in\cs(\hr^{\otimes l})$ with marginal states $\rho_i:=\tr_{\hr_1\otimes\ldots\otimes\hr_{i-1}\otimes\hr_{i+1}\otimes\ldots\otimes\hr_l}(\rho)$. Then the inequality $I_c(\rho,\cn^{\otimes l})\leq\sum_{i=1}^nI_c(\rho_i,\cn)$ holds \cite{devetak-shor}.
\item[P4] The coherent information is continuous in both of its entries.
\end{itemize}
By the minimax-theorem \cite{von-neumann,kakutani}, properties P1, P2 and P4 imply that
\begin{align}\max_{\rho\in\cs(\hr)}\min_{\cn\in \conv(\fri)}I_c(\rho,\cn)=\min_{\cn\in \conv(\fri)}\max_{\rho\in\cs(\hr)}I_c(\rho,\cn).
\end{align}
Suppose now, that each $\cn\in \conv(\fri)$ is degradable. It then holds, for every $l\in\nn$,
\begin{align}
\frac{1}{l}\max_{\rho\in\cs(\hr^{\otimes l})}\min_{\cn\in \conv(\fri)}I_c(\rho,\cn^{\otimes l})&\leq\frac{1}{l}\min_{\cn\in \conv(\fri)}\max_{\rho\in\cs(\hr^{\otimes l})}I_c(\rho,\cn^{\otimes l})&\\
&\leq \min_{\cn\in \conv(\fri)}\max_{\rho\in\cs(\hr)}I_c(\rho,\cn)&\\
&=\max_{\rho\in\cs(\hr)}\min_{\cn\in \conv(\fri)}I_c(\rho,\cn),&
\end{align}
where the second inequality follows from P3 and the equality from P1, P2 via the minimax-theorem. It follows that
\begin{align}
\A_{\textup{det}}(\fri)\leq\A_{\textup{random}}(\fri)\leq\max_{\rho\in\cs(\hr)}\min_{\cn\in \conv(\fri)}I_c(\rho,\cn).
\end{align}
By assumption, we also have $\A_{\textup{det}}(\fri)=\A_{\textup{random}}(\fri)$. The obvious relation $\A_{\textup{random}}(\fri)\geq\max_{\rho\in\cs(\hr)}\min_{\cn\in \conv(\fri)}I_c(\rho,\cn)$ then implies the reverse inequality.
\end{proof}
\section{\label{sec:Applications and examples}An example and an application to zero-error capacities}
\subsection{\label{subsec:Erasure-AVQC}Erasure-AVQC}
As an application and illustration of most of the results obtained so far we calculate the quantum capacity of finite AVQC $\fri$ consisting of erasure quantum channels. As expected, we obtain that $\A_{\textup{det}}(\fri)$ equals the capacity of the worst erasure channel in the set $\fri$.
\begin{lemma}\label{lemma-erasure-example}
Let $d\in\nn$, $d\geq2$ and denote by $\{e_1,\ldots e_d\},\ \{e_1,\ldots,e_{d+1}\}$, the standard basis of $\mathbb C^d,\mathbb C^{d+1}$. Set $\hr=\mathbb C^d$, $\kr=\mathbb C^{d+1}$. Define, for $p\in[0,1]$, the erasure channel $\E_p\in\mathcal C(\hr,\kr)$ by
\begin{equation}\E_p(x):=(1-p)x+p\cdot\tr(x)|e_{d+1}\rangle\langle e_{d+1}|\qquad\forall x\in\mathcal B(\hr).\end{equation}
Let, for a finite collection $\{p_s\}_{s\in\bS}\subset[0,1]$, an AVQC be given by $\fri=\{\E_{p_s}\}_{s\in\bS}$.\\
The following are true.
\begin{enumerate}
\item If $p_s\geq1/2$ for some $s\in\bS$, then $\A_{\textup{det}}(\fri)=\A_{\textup{random}}(\fri)=0$.
\item If $p_s<1/2$ for every $s\in\bS$, then $\A_{\textup{det}}(\fri)=\A_{\textup{random}}(\fri)=\min_{s\in\bS}(1-2p_s)\log(d)$.
\end{enumerate}
\end{lemma}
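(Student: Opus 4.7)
The plan for Part 1 is essentially immediate. If $p_{s_0}\ge 1/2$ for some $s_0\in\bS$, then $\E_{p_{s_0}}\in\fri\subset\conv(\fri)$, so the converse bound of Theorem \ref{converse:general-avc} together with Lemma \ref{basic-relations-converse} gives
\begin{equation*}
\A_{\textup{random}}(\fri)\leq\lim_{l\to\infty}\frac{1}{l}\max_{\rho\in\cs(\hr^{\otimes l})}I_c(\rho,\E_{p_{s_0}}^{\otimes l})=Q(\E_{p_{s_0}})=0,
\end{equation*}
where the last equality is the standard fact that erasure channels with parameter at least $1/2$ are anti-degradable (the complementary channel of $\E_p$ is unitarily equivalent to $\E_{1-p}$, and for $p\ge 1/2$ this dominates $\E_p$) and therefore have zero quantum capacity. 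The trivial inequality $\A_{\textup{det}}(\fri)\le\A_{\textup{random}}(\fri)$ settles both.

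For Part 2 the first step is to compute $\conv(\fri)$ explicitly. Linearity of $p\mapsto\E_p$ gives $\sum_{s\in\bS}q(s)\E_{p_s}=\E_{\bar q}$ with $\bar q:=\sum_{s}q(s)p_s$, so
\begin{equation*}
\conv(\fri)=\{\E_p:p\in[\min_s p_s,\max_s p_s]\},
\end{equation*}
already norm-closed because $\bS$ is finite, and every element of $\conv(\fri)$ has erasure parameter strictly below $1/2$ and is therefore degradable (the degrading map being itself an erasure channel with parameter $(1-2p)/(1-p)$). The second step is to verify that $\fri$ fails to be symmetrizable in the sense of Definition \ref{def:c-symmetrizability}; it suffices to rule out $1$-symmetrizability. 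For distinct $\rho_1,\rho_2$ supported on the non-erasure subspace $\mathbb C^d\subset\kr$ and any candidate distributions $q_1,q_2\in\mathfrak P(\bS)$, equation (\ref{eq:c-symmetrizable}) reduces to
\begin{equation*}
(1-\bar q_1)\rho_2+\bar q_1|e_{d+1}\rangle\langle e_{d+1}|=(1-\bar q_2)\rho_1+\bar q_2|e_{d+1}\rangle\langle e_{d+1}|,\qquad \bar q_i:=\sum_s q_i(s)p_s\le\max_s p_s<1/2.
\end{equation*}
Projecting onto $\mathbb C^d$ and using $\tr(\rho_i)=1$ forces $\bar q_1=\bar q_2$ and then $\rho_1=\rho_2$, a contradiction.

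Having established non-symmetrizability and degradability throughout $\conv(\fri)$, the second case of Lemma \ref{lemma-single-letter-conditions} applies and yields the single-letter formula
\begin{equation*}
\A_{\textup{det}}(\fri)=\A_{\textup{random}}(\fri)=\max_{\rho\in\cs(\hr)}\min_{\cn\in\conv(\fri)}I_c(\rho,\cn).
\end{equation*}
A direct computation, using that $\E_p(\rho)$ is block-diagonal with blocks $(1-p)\rho$ and $p$ and that $\E_p^c$ is unitarily equivalent to $\E_{1-p}$, gives $S(\E_p(\rho))=h(p)+(1-p)S(\rho)$ and $S_e(\rho,\E_p)=h(p)+pS(\rho)$, whence the identity $I_c(\rho,\E_p)=(1-2p)S(\rho)$. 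Since this is nonnegative and decreasing in $p$ on our parameter range, the inner minimum is attained at $p=\max_s p_s$ and the outer maximum at the maximally mixed state $\idn_\hr/d$, producing the advertised value $(1-2\max_s p_s)\log d=\min_s(1-2p_s)\log d$. The only piece of genuine work in this plan is the non-symmetrizability check in Part 2; everything else is bookkeeping on top of the machinery already developed in the paper.
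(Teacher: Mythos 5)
Your proposal is correct and follows essentially the same route as the paper: identify $\conv(\fri)$ as a family of erasure channels with parameter below $1/2$, verify degradability of each member and non-symmetrizability of $\fri$, invoke case 2 of Lemma \ref{lemma-single-letter-conditions} to get the single-letter formula, and compute $I_c(\rho,\E_p)=(1-2p)S(\rho)$. The only cosmetic differences are that you evaluate the max-min directly (avoiding the paper's minimax swap) and, in Part 1, you cite antidegradability of $\E_p$ for $p\ge 1/2$ where the paper simply lets the adversary play $\E_{p_{\max}}^{\otimes l}$ and reuses its coherent-information computation.
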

\begin{proof}
We start with \emph{2.} by showing the validity of the following properties.
\begin{description}
\item[A.] For $q\in\mathfrak P(\bS)$, we have $\sum_{s\in\bS}q(s)\E_{p_s}=\E_{q(p)}$, where $q(p):=\sum_{s\in\bS}q(s)p_s$.
\item[B.] There is a set $\{\widehat{\E_{p_s}}\}_{s\in\bS}$ of complementary maps given by $\widehat{\E_{p_s}}=\E_{1-p_s}$, $s\in\bS$.
\item[C.] $\E_p$ is degradable for $p\in[0,1/2)$.
\item[D.] $\{\E_{p_s}\}_{s\in\bS}$ is non-symmetrizable if $p_s\in[0,1)$ for all $s\in\bS$. Additionally then, $C_{\textup{det,max}}(\{\E_{p_s}\}_{s\in\bS})>0$ holds.
\end{description}
\textbf{A.:} For every $x\in\mathcal B(\mathbb C^d)$,
\begin{align}
\sum_{s\in\bS}q(s)\E_{p_s}(x)&=\sum_{s\in\bS}q(s)[(1-p_s)x+p_s\cdot\tr(x)|e_{d+1}\rangle\langle e_{d+1}|]&\\
&=(1-\sum_{s\in\bS}q(s)p_s)x+\sum_{s\in\bS}q(s)p_s\cdot\tr(x)|e_{d+1}\rangle\langle e_{d+1}|.&
\end{align}
\textbf{B.:} Consider an environment defined by $\kr_{env}:=\kr$. For every $p\in[0,1]$ we can give a Stinespring isometry $V_p:\hr\rightarrow\kr\otimes\kr_{env}$ of $\E_p$ by
\begin{align}
V_pu:=\sqrt{1-p}\cdot u\otimes e_{d+1}+\sqrt{p}\cdot e_{d+1}\otimes u,\qquad u\in\mathcal B(\hr).
\end{align}
hbecomes clear by tracing out the first or second subsystem, depending on whether one wants to calculate $\hat\E_p$ or $\E_p$.\\
\textbf{C.:} Set $\mu:=\frac{1-2p}{1-p}$ and define $E_\mu\in\mathcal C(\kr,\kr)$ by
\begin{equation}E_\mu(x):=(1-\mu)\cdot x +\mu\cdot\tr(x)\cdot|e_{d+1}\rangle\langle e_{d+1}|,\qquad x\in\mathcal B(\kr).\end{equation}
Then by $p\in[0,1/2)$ we have $\mu\in(0,1]$. We show that $\E_{1-p}=E_\mu\circ\E_{p}$ holds. Let $x\in\mathcal B(\hr)$, then
\begin{align}
E_{\mu}\circ\E_p(x)&=(1-p)\cdot E_\mu(x)+p\cdot\tr(x)\cdot E_\mu(|e_{d+1}\rangle\langle e_{d+1}|)&\\
&=(1-p)\cdot(1-\mu)\cdot x+\mu\cdot(1-p)\cdot\tr(x)\cdot|e_{d+1}\rangle\langle e_{d+1}|+p\cdot\tr(x)\cdot|e_{d+1}\rangle\langle e_{d+1}|&\\
&=(1-p-1+2p)\cdot x+(1-2p)\cdot\tr(x)\cdot|e_{d+1}\rangle\langle e_{d+1}|+p\cdot\tr(x)\cdot|e_{d+1}\rangle\langle e_{d+1}|&\\
&=p\cdot x+(1-p)\cdot\tr(x)\cdot|e_{d+1}\rangle\langle e_{d+1}|&\\
&=\E_{1-p}(x).
\end{align}
\textbf{D.:} Let $\rho_1:=|e_1\rangle\langle e_1|,\ \rho_2:=|e_2\rangle\langle e_2|\in\cs(\hr)$. We show by contradiction that there are no two probability distributions $r_1,r_2\in\mathfrak P(\bS)$ such that
\begin{align}
\sum_{s\in\bS}r_1(s)\E_{p_s}(\rho_1)=\sum_{s\in\bS}r_2(s)\E_{p_s}(\rho_2).\label{eqn-erasure-avqc-1}
\end{align}
Assume there are $r_1,r_2\in\mathfrak P(\bS)$ such that (\ref{eqn-erasure-avqc-1}) is true. This is equivalent to
\begin{align}
\sum_{s\in\bS}(1-p_s)[r_1(s)|e_1\rangle\langle e_1|-r_2(s)|e_2\rangle\langle e_2|]=0,\qquad \sum_{s\in\bS}p_s[r_1(s)-r_2(s)]\cdot|e_{d+1}\rangle\langle e_{d+1}|=0.
\end{align}
By linear independence of $|e_1\rangle\langle e_1|,|e_2\rangle\langle e_2|$ and since $p_s\in[0,1)$ for every $s\in\bS$ the first equality implies $r_1(s)=r_2(s)=0\ \forall s\in\bS$, 
in clear contradiction to the assumption $r_1,r_2\in\mathfrak P(\bS)$.\\
Thus, $\{\E_{p_s}\}_{s\in\bS}$ with all $p_s\in[0,1)$ is non-symmetrizable. Moreover, checking the requirements in Theorem \ref{theorem:c-det=0-for-maximal-error} we see that also
$C_{\textup{det,max}}(\{\E_{p_s}\}_{s\in\bS})>0$ has to hold.
\\\\
Using \textbf{A} and the fact that $\{p_s\}_{s\in\bS}\subset[0,1/2)$ we see that for an arbitrary $q\in\mathfrak P(\bS)$ we have $\sum_{s\in\bS}q(s)\E_{p_s}=\E_{q(p)}$ with $q(p)\in[0,1/2)$.\\
Now \textbf{B} implies that for every $q\in\mathfrak P(\bS)$ the channel $\sum_{s\in\bS}q(s)\E_{p_s}$ is degradable.\\
Thus by Lemma \ref{lemma-single-letter-conditions}, \emph{2.}, the regularization in the identity
\begin{equation}\A_{\textup{random}}(\fri)=\lim_{l\to\infty}\frac{1}{l}\max_{\rho\in\cs(\hr^{\otimes l})}\inf_{\cn\in \conv(\fri)}I_c(\rho, \cn^{\otimes l})\end{equation}
is not necessary, so
\begin{align}\A_{\textup{random}}(\fri)=\max_{\rho\in\cs(\hr)}\inf_{\cn\in \conv(\fri)}I_c(\rho, \cn).\label{eqn-erasure-avqc-2}\end{align}
Further, for a fixed degradable channel, the coherent information is concave in the input state \cite{yard-devetak-hayden} and thus by the minimax theorem for concave-convex functions \cite{kakutani,von-neumann} we can interchange min and max in (\ref{eqn-erasure-avqc-2}). Now, to any given $\rho\in\cs(\hr)$, we may write $\rho=\sum_{i=1}^d\lambda_i|v_i\rangle\langle v_i|$ for some set $\{v_1,\ldots,v_d\}$ of orthonormal vectors that satisfy, by standard identification of $\mathbb C^d$ and $\mathbb C^{d+1}$, $v_i\perp e_{d+1}$ ($1\leq i\leq d$) and write a purification of $\rho$ as $|\psi_\rho\rangle\langle\psi_\rho|=\sum_{i,j=1}^d\lambda_i\lambda_j|v_i\rangle\langle v_j|\otimes|v_i\rangle\langle v_j|$. Then for every $\E_p\in \conv(\fri)$ we have
\begin{align}
\max_{\rho\in\cs(\hr)}I_c(\rho, \E_p)&=\max_{\rho\in\cs(\hr)}(S(\E_p(\rho))-S(Id_{\hr}\otimes\E_p(|\psi_\rho\rangle\langle\psi_\rho|))&\\
&=\max_{\rho\in\cs(\hr)}(S((1-p)\rho+p|e_{d+1}\rangle\langle e_{d+1}|)-S((1-p)|\psi_\rho\rangle\langle\psi_\rho|+p\rho\otimes|e_{d+1}\rangle\langle e_{d+1}|))&\\
&=\max_{\rho\in\cs(\hr)}((1-p)S(\rho)+pS(|e_{d+1}\rangle\langle e_{d+1}|)+H(p)&\\
&\ \ -(1-p)S(|\psi_\rho\rangle\langle\psi_\rho|)-pS(\rho\otimes|e_{d+1}\rangle\langle e_{d+1}|)-H(p))&\\
&=\max_{\rho\in\cs(\hr)}((1-p)S(\rho)-pS(\rho\otimes|e_{d+1}\rangle\langle e_{d+1}|))&\\
&=\max_{\rho\in\cs(\hr)}(1-2p)S(\rho)&\\
&=(1-2p)\log(d).\label{eqn-erasure-avqc-3}&
\end{align}
This leads to
\begin{align}\A_{\textup{random}}(\fri)=\min_{s\in\bS}(1-2p_s)\log(d),\end{align}
a formula that was first discovered for the case of a single memoryless channel and $d=2$ by \cite{bennet-divincenzo-smolin}.\\
From \textbf{D} it follows that $\A_{\textup{det}}(\fri)=\A_{\textup{random}}(\fri)$.\\
We can now prove \emph{1.}: Set $p_{\max}:=\max_{s\in\bS}p_s$. It holds $p_{\max}\geq1/2$, therefore the channel $\E_{\max}:=\E_{p_{\max}}$ satisfies 
$\A_{\textup{det}}(\{\E_{\max}\})=\A_{\textup{random}}(\{\E_{\max}\})=0$, since by $\mathbf B$ and $\mathbf C$ $\widehat{\E_{\max}}$ is degradable, 
hence $\E_{\max}$ is anti-degradable.\
Thus, for every $l\in\nn$, the adversary can always choose $\E_{\max}^{\otimes l}$ to ensure that transmission of entanglement will fail.
\end{proof}
\subsection{\label{subsec:Qualitative behavior of zero-error capacities}Qualitative behavior of zero-error capacities}
Let us, first, embark on the connection between AVQCs and zero-error capacities. Classical information theory exhibits an interesting connection between the zero-error capacity of certain channels and the deterministic capacity with asymptotically vanishing maximal error probability criterion. This connection is described in \cite{ahlswede-note}.\\
We give (following closely the lines of \cite{ahlswede-note}) the remaining part of this connection in the quantum case:\\
Let $\fri=\{\cn_s\}_{s\in\bS}$ be a finite AVQC. Consider
\begin{equation}
 \cn_\fri:=\frac{1}{|\bS|}\sum_{s\in\bS}\cn_s.
\end{equation}
By definition of zero-error capacity, to any $\delta>0$ there exists an $l\in\nn$, a maximally mixed state $\pi_{\fr_l}$ with 
$\frac{1}{l}\log\dim\fr_l\geq Q_0(\cn_\fri)-\delta$ and a pair $(\crr^l,\cP^l)$ of recovery and encoding map such that
\begin{equation}
 \min_{x\in\fr_l,||x||=1}\langle x,\crr^l\circ\cn_\fri^{\otimes l}\circ\cP(|x\rangle\langle x|)x\rangle=1
\end{equation}
holds. But this directly implies
\begin{equation}
 \min_{x\in\fr_l,||x||=1}\langle x,\crr^l\circ\cn_{s^l}\circ\cP(|x\rangle\langle x|)x\rangle=1\qquad\forall s^l\in\bS^l,
\end{equation}
so $(\pi_{\fr_l},\crr,\cP^l)$ is a zero-error code for the AVQC $\fri$ as well and therefore
\begin{equation}
 \A_{\det}(\fri)\geq Q_0(\cn_\fri)-\delta\qquad\forall\delta>0.\label{eqn:avqc-connection-to-zero-error}
\end{equation}
This in turn is equivalent to
\begin{equation}
\A_{\det}(\fri)\geq Q_0(\cn_\fri).
\end{equation}
One may now ask when exactly this is a meaningful (nonzero) lower bound. The answer is given by the proof of Lemma \ref{face-reduction}: 
On any face of $\mathcal C(\hr,\kr)$ the zero-error capacities are constant and the encoding and recovery maps are \emph{universal}. 
Thus, if $\fri$ is a subset of a face and $\fri\subset \ri(\mathcal C(\hr,\kr))^\complement$ then there is good hope to get a nonzero lower bound by means of inequality (\ref{eqn:avqc-connection-to-zero-error}). So far for the connection between AVQCs and zero-error capacities.\\
Motivated by the above observation, a closer study of zero-error capacities reveals some additional facts that are interesting in their own right.\\
To be more precise, we investigate continuity of zero-error capacities. This property is a highly desirable property both from the practical and the theoretical point of view. It is of particular importance in situations where full knowledge of the communication system cannot be achieved but only a narrow confidence set containing the unknown channel is given. In \cite{leung-smith} it has been shown that the ordinary capacities of stationary memoryless quantum channels are continuous in the finite-dimensional setting and it was demonstrated by examples that these functions become discontinuous in infinite dimensional situations.\\
In this subsection we show that quantum, entanglement-assisted, and classical zero-error capacities of quantum channels are discontinuous at every positivity point. Our approach is based on two simple observations. The first one is that the zero-error capacities mentioned above of each quantum channel belonging to the relative interior of the set of quantum channels are equal to $0$. The second one is the well known fact that the relative interior of any convex set is open and dense in that set, i.e. generic. Hence any channel can be approximated by a sequence belonging to the relative interior implying the discontinuity result.\\
Similar arguments can be applied to the recently introduced Lov\'asz $\tilde\theta$ function and zero-error distillable entanglement as well, leading to analogous conclusions as shall be shown in the last part of this subsection.     
We now show that all the zero-error capacities defined in subsection \ref{subsec:Zero-error capacities} are generically equal to $0$ and are discontinuous at any positivity point. Then we demonstrate that the zero-error capacities of quantum channels can be thought of as step functions subordinate to the partition built from the relative interiors of the faces of $\mathcal{C}(\hr,\kr)$.
\paragraph{Discontinuity of zero-error capacities}\label{discontinuity}
\begin{theorem}\label{ea-zero-in-interior} 
Let $\cn \in \ri \mathcal{C}(\hr,\kr)$. Then $k(l,\cn)=M(l,\cn)= M_{\textup{EA}}(l,\cn)=1$ for every $l\in\nn$. Consequently, $Q_0(\cn)=C_0(\cn)=C_{0\textup{EA}}(\cn)=0$.
\end{theorem}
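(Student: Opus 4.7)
The strategy is convex-geometric. I exploit the standard characterization \cite{webster} that $\cn \in \ri \mathcal C(\hr,\kr)$ if and only if for every $\cn'\in \mathcal C(\hr,\kr)$ there exist $\lambda\in(0,1)$ and $\cn''\in\mathcal C(\hr,\kr)$ with $\cn = (1-\lambda)\cn' + \lambda\cn''$. I apply this with $\cn'$ chosen as the \emph{replacer channel} $\cn'_{\omega_0}:\rho\mapsto \tr(\rho)\,\omega_0$ for some fixed $\omega_0\in\cs(\kr)$ (which is manifestly CPTP). Expanding the $l$-fold tensor product yields
\[
\cn^{\otimes l} \;=\; (1-\lambda)^l\,(\cn'_{\omega_0})^{\otimes l} \;+\; \bigl(1-(1-\lambda)^l\bigr)\,\Xi_l
\]
for some CPTP map $\Xi_l$, with the coefficient $(1-\lambda)^l\in(0,1)$.

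In each of the three zero-error definitions from Subsection \ref{subsec:Zero-error capacities}, the success condition is the requirement that an \emph{affine} functional of the channel---of the form $\Lambda\mapsto \tr(\Lambda^{\otimes l}(\rho_m)D_m)$, $\Lambda\mapsto \tr\bigl((\Lambda^{\otimes l}\circ\cP_m\otimes\textup{id}_{\fr'})(\sigma_{\fr\fr'})D_m\bigr)$, or (for each fixed unit vector $x\in\fr$) $\Lambda\mapsto \langle x,\crr\circ\Lambda^{\otimes l}\circ\cP(|x\rangle\langle x|)x\rangle$---attain its maximum value $1$. Since each such functional lies in $[0,1]$ on CPTP maps, attaining the value $1$ at a nontrivial convex combination $\alpha\Lambda_1+(1-\alpha)\Lambda_2$ of CPTP maps forces it to equal $1$ on both $\Lambda_1$ and $\Lambda_2$. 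Applied to the decomposition above, every zero-error code for $\cn^{\otimes l}$ is automatically a zero-error code for $(\cn'_{\omega_0})^{\otimes l}$, so it suffices to verify the claim for the replacer channel.

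But $(\cn'_{\omega_0})^{\otimes l}$ outputs $\omega_0^{\otimes l}$ independently of its input, and this triviality collapses all three cases. For the classical capacity, $\tr(\omega_0^{\otimes l}D_m)=1$ forces $D_m\ge p_{\supp(\omega_0^{\otimes l})}$, hence $\idn\ge\sum_{m=1}^M D_m\ge M\cdot p_{\supp(\omega_0^{\otimes l})}$ and $M\le 1$. The entanglement-assisted case reduces to the classical one because $((\cn'_{\omega_0})^{\otimes l}\circ\cP_m\otimes\textup{id}_{\fr'})(\sigma_{\fr\fr'}) = \omega_0^{\otimes l}\otimes \tr_\fr(\sigma_{\fr\fr'})$ is independent of $m$, so the same operator-inequality argument gives $M\le 1$. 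For the quantum case, $\crr\circ(\cn'_{\omega_0})^{\otimes l}\circ\cP(|x\rangle\langle x|) = \crr(\omega_0^{\otimes l})$ is independent of $x$, and the perfect-recovery condition $\langle x,\crr(\omega_0^{\otimes l})x\rangle=1$ for every unit $x\in\fr$ forces $\crr(\omega_0^{\otimes l}) = \idn_\fr$, which, being a state, requires $\dim\fr = 1$. The whole argument is essentially bookkeeping once the relative-interior decomposition is in hand; the only step with any conceptual content is recognising that the zero-error conditions are pointwise-in-message (and in-$x$) affine functionals of the channel, bounded by $1$.
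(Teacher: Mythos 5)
Your proof is correct and follows essentially the same route as the paper: write $\cn\in\ri\mathcal{C}(\hr,\kr)$ as a proper convex combination having a trivial, input-independent channel as one component, use affinity and boundedness of the zero-error success criteria to conclude that any zero-error code for $\cn$ is also one for that trivial channel, and observe that the trivial channel admits no nontrivial zero-error codes. The only cosmetic differences are that the paper uses the fully depolarizing channel (via its Lemma \ref{elementary}) instead of a general replacer channel, and disposes of the quantum and unassisted classical cases through the chain $1\le k(l,\cn)\le M(l,\cn)\le M_{\textup{EA}}(l,\cn)$ rather than by separate explicit arguments.
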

In the proof of Theorem \ref{ea-zero-in-interior} we shall make use of the following elementary fact: 
\begin{lemma}\label{elementary}
Let $F$ be a non-empty convex set and $\cn_0,\cn\in\ri F$ with $\cn_0\neq \cn$. Then there exists $\cn_1\in F$ and $\lambda_0,\lambda_1\in (0,1)$, $\lambda_0+\lambda_1=1$ with $\cn=\lambda_0\cn_0+\lambda_1\cn_1$.
\end{lemma}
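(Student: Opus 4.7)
The plan is to exploit the defining property of the relative interior: since $\cn \in \ri F$, there is an open ball around $\cn$ (in the norm of the ambient space) whose intersection with $\aff F$ lies in $F$. This lets us extend the segment from $\cn_0$ through $\cn$ slightly beyond $\cn$ while staying in $F$, so that $\cn$ becomes an interior point of a segment with both endpoints in $F$.

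Concretely, I would argue as follows. By the definition of $\ri F$ there exists $r>0$ with $B(\cn,r)\cap \aff F\subset F$. For $t>0$ set
\[
\cn_1(t):=\cn+t(\cn-\cn_0).
\]
Since $\cn_1(t)=(1+t)\cn-t\cn_0$ is an affine combination of $\cn,\cn_0\in F$, we have $\cn_1(t)\in\aff F$. Moreover $\|\cn_1(t)-\cn\|=t\|\cn-\cn_0\|$, so choosing $t\in(0,r/\|\cn-\cn_0\|)$ (using $\cn_0\neq\cn$) gives $\cn_1(t)\in B(\cn,r)\cap\aff F\subset F$. Fix such a $t$ and put $\cn_1:=\cn_1(t)$.

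A direct calculation then yields
\[
\frac{t}{1+t}\,\cn_0+\frac{1}{1+t}\,\cn_1
=\frac{t\cn_0+\cn+t(\cn-\cn_0)}{1+t}
=\cn,
\]
so setting $\lambda_0:=t/(1+t)$ and $\lambda_1:=1/(1+t)$ finishes the proof, since both lie in $(0,1)$ and sum to $1$.

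There is no real obstacle here; the only subtlety is making sure the constructed $\cn_1$ stays inside $F$ and not merely in $\aff F$, which is exactly what the relative-interior ball condition gives us. (The hypothesis in the statement evidently intends $\cn\in\ri F$; the argument uses only this on $\cn$, together with $\cn_0\in F$.)
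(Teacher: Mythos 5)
Your proof is correct and follows essentially the same route as the paper: both extend the segment from $\cn_0$ through $\cn$ slightly beyond $\cn$ (your $\cn_1=(1+t)\cn-t\cn_0$ is exactly the paper's $(1-\mu')\cn_0+\mu'\cn$ with $\mu'=1+t$) and then read off the convex weights. You merely make explicit, via the ball definition $B(\cn,r)\cap\aff F\subset F$, the step the paper states more tersely (the existence of $\mu'>1$ keeping the extended point in $F$), which is a welcome bit of added detail rather than a deviation.
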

\begin{proof}[Proof of Lemma \ref{elementary}]
Since $\cn\in \ri F$ there is $\mu'>1$ such that
\begin{equation}\label{app-2}
  \cn_1:= (1-\mu')\cn_0+\mu'\cn\in F.
\end{equation}
We define now 
\begin{equation}\label{app-3}
  \lambda_1:=\frac{1}{\mu'}\in (0,1), \quad \lambda_0:=1-\lambda_1,
\end{equation}
and obtain using $\cn_1$ given in (\ref{app-2}) the desired convex decomposition
\begin{equation}\label{app-4}
  \cn=\lambda_0\cn_0 +\lambda_1\cn_1.
\end{equation}
\end{proof}

\begin{proof}[Proof of Theorem \ref{ea-zero-in-interior}] Let $\cn \in \ri \mathcal{C}(\hr,\kr) $. Observing that the fully depolarizing channel $\cn_{0}(a)=\frac{\textrm{tr}(a)}{d_{\kr}}\idn_{\kr}$, $a\in \B(\hr)$, belongs to $\ri \mathcal{C}(\hr,\kr)$ we obtain from Lemma \ref{elementary} a convex decomposition of $\cn$ as
\begin{equation}\label{ea-N-convex}
  \cn=\lambda_0 \cn_0 +\lambda_1 \cn_1,
\end{equation}
where $\lambda_0,\lambda_1\in(0,1)$, $\lambda_0+\lambda_1=1$.\\
Clearly, this decomposition implies that
\begin{equation}\label{ea-N-tensored}
\cn^{\otimes l}=\sum_{s^l\in \{0,1  \}^l}\lambda_{s^l}\cn_{s^l}, 
\end{equation}
with $\lambda_{s^l}:=\lambda_{s_1}\cdot\ldots\cdot\lambda_{s_l}>0$ and $\cn_{s^l}:=\cn_{s_1}\otimes\ldots\otimes \cn_{s_l}$ for all $s^l\in\{0,1  \}^{ l}$. Then for any zero-error $(l,M)$ ea-code $(\sigma_{\fr\fr'},\{\cP_m, D_m  \}_{m=1}^M )$ for $\cn$ we get for each $m\in [M]$
\begin{eqnarray}
1&=&\tr ((\cn^{\otimes l}\circ \cP_m \otimes     \textrm{id}_{\fr'})(\sigma_{\fr\fr'})D_m )\\
  &=& \sum_{s^l\in\{0,1  \}^l}\lambda_{s^l} \tr ((\cn_{s^l}\circ \cP_m \otimes     \textrm{id}_{\fr'})(\sigma_{\fr\fr'})D_m )
\end{eqnarray}
and, consequently, since $\lambda_{s^l}>0$ for all $s^l\in\{0,1  \}^{ l}$
\begin{equation}\label{ea-avc}
  \tr ((\cn_{s^l}\circ \cP_m \otimes     \textrm{id}_{\fr'})(\sigma_{\fr\fr'})D_m )=1\qquad \forall s^l\in\{0,1  \}^l,
\end{equation}
for all $m\in [M]$. Choosing $\bar s^l=(0,\ldots,0)$ we obtain from Eqn. (\ref{ea-avc}) that $(\sigma_{\fr\fr'},\{\cP_m, D_m  \}_{m=1}^M )$ is a zero-error ea-code for $\cn_0$. Since $M_{\textrm{EA}}(l,\cn_0)=1$ for all $l\in\nn$ we can conclude that $M_{\textrm{EA}}(l,\cn)\le1$ and thus $M_{\textrm{EA}}(l,\cn)=1$ holds. Consequently $C_{0\textrm{EA}}(\cn)=0$. The other assertions follow from the observation that $1\le k(l,\cn)\le M(l,\cn)\le M_{\textrm{EA}}(l,\cn)$. 
\end{proof}
\begin{corollary}\label{q-corollary}
 The function $Q_0:\mathcal{C}(\hr,\kr)\to \rr_{+}$ that assigns the zero-error quantum capacity to each quantum channel is discontinuous at any $\cn\in\mathcal{C}(\hr,\kr)$ with $Q_0(\cn)>0$. The same conclusion holds true for $C_0$ and $C_{0\textup{EA}}$.
\end{corollary}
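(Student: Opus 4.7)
The plan is to combine Theorem \ref{ea-zero-in-interior} with the elementary fact that $\ri \mathcal{C}(\hr,\kr)$ is dense in $\mathcal{C}(\hr,\kr)$. Concretely, suppose $\cn \in \mathcal{C}(\hr,\kr)$ satisfies $Q_0(\cn) > 0$. I want to exhibit an explicit sequence $(\cn_n)_{n \in \nn} \subset \mathcal{C}(\hr,\kr)$ with $\cn_n \to \cn$ in $||\cdot||_{\lozenge}$ such that $Q_0(\cn_n) = 0$ for all $n$; this immediately gives discontinuity at $\cn$.

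The natural construction is to mix $\cn$ with the fully depolarizing channel $\cn_0(a) := \tr(a)\cdot\frac{1}{d_{\kr}}\idn_{\kr}$, which was shown in the proof of Theorem \ref{ea-zero-in-interior} (via the depolarizing-perturbation $\mathfrak{D}_\eta$ used earlier in Section \ref{sec:random-achievability}) to lie in $\ri \mathcal{C}(\hr,\kr)$. Setting $\cn_n := (1-\frac{1}{n})\cn + \frac{1}{n}\cn_0$ for $n \ge 1$, a standard convex-geometric argument (a convex combination with strictly positive weight on a relative-interior point stays in the relative interior; see e.g. \cite{webster}) shows $\cn_n \in \ri \mathcal{C}(\hr,\kr)$ for every $n \ge 1$. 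Applying Theorem \ref{ea-zero-in-interior} then yields $Q_0(\cn_n)=0$ for all $n$. Since $||\cn_n - \cn||_{\lozenge} = \frac{1}{n}||\cn_0 - \cn||_{\lozenge} \to 0$, but $\lim_{n \to \infty} Q_0(\cn_n)=0 \ne Q_0(\cn)$, the discontinuity of $Q_0$ at $\cn$ follows.

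The arguments for $C_0$ and $C_{0\textup{EA}}$ are verbatim the same: Theorem \ref{ea-zero-in-interior} provides $C_0(\cn_n) = C_{0\textup{EA}}(\cn_n) = 0$ for each $n$, while $C_0(\cn) > 0$ (resp.\ $C_{0\textup{EA}}(\cn) > 0$) is the standing hypothesis. No further work is required.

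There is essentially no obstacle here; all the substance was already absorbed into Theorem \ref{ea-zero-in-interior}, whose proof did the work of showing that any channel in the relative interior admits the fully depolarizing channel as a convex summand and therefore inherits its trivial zero-error codes. The corollary is then just the general principle that a function which vanishes on a dense subset of its domain cannot be continuous at any point where it is strictly positive.
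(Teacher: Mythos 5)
Your proposal is correct and is essentially the paper's argument: both reduce everything to Theorem \ref{ea-zero-in-interior} and then approximate $\cn$ by channels in $\ri\mathcal{C}(\hr,\kr)$, where all three zero-error capacities vanish. The only difference is cosmetic: the paper invokes the density of $\ri\mathcal{C}(\hr,\kr)$ in $\mathcal{C}(\hr,\kr)$ (Theorem 2.3.8 in \cite{webster}) to obtain an approximating sequence abstractly, whereas you exhibit it explicitly as $(1-\tfrac{1}{n})\cn+\tfrac{1}{n}\cn_0$ with $\cn_0$ the fully depolarizing channel, using the standard fact that such a mixture lies in the relative interior.
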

\begin{proof} If $Q_0(\cn)>0$ holds then necessarily $\cn \in \rebd \mathcal{C}(\hr,\kr)$ by Theorem \ref{ea-zero-in-interior}. On the other hand $\ri \mathcal{C}(\hr,\kr)$ is dense in $\mathcal{C}(\hr,\kr)$ (cf. Theorem 2.3.8 in \cite{webster}). So there is a sequence of channels $(\cn_i)_{i\in\nn}\subset \ri \mathcal{C}(\hr,\kr) $ with $\lim_{i\to\infty}||\cn_i-\cn||_{\lozenge}=0$ and by Theorem \ref{ea-zero-in-interior} we have $Q_0(\cn_i)=0$ for all $i\in \nn$. The arguments for $C_0$ and $C_{0\textrm{EA}}$ follow the same line of reasoning.
\end{proof}
\paragraph{Relation to the facial structure of the set of quantum channels}\label{faces}
Here we shall show that the considered zero-error capacities are basically step functions, 
the underlying partition consisting of the relative interiors of the faces of $\mathcal{C}(\hr,\kr)$.

\begin{lemma}\label{face-reduction}
Let $F\subset \mathcal{C}(\hr,\kr)$ be convex and let $\tilde\cn\in\ri F $. Then for any $\cn\in \ri F$, $Q_0(\cn)=Q_0(\tilde\cn)$, $C_{0\textup{EA}}(\cn)=C_{0\textup{EA}}(\tilde\cn)$, and $C_0(\cn)=C_0(\tilde\cn)$ hold.
 \end{lemma}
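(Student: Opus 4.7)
The plan is to mirror the argument used in Theorem \ref{ea-zero-in-interior}, replacing the fully depolarizing channel there with $\tilde\cn$ here, and then exploit Lemma \ref{elementary} in both directions to get matching inequalities. We may assume $\cn\neq\tilde\cn$, otherwise there is nothing to prove.

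First, applying Lemma \ref{elementary} with $\cn_0:=\cn$ and $\cn$ (in the role of that lemma) equal to $\tilde\cn$, both of which lie in $\ri F$, we obtain a decomposition
\[
\tilde\cn=\lambda_0\cn+\lambda_1\cn_1,\qquad \cn_1\in F,\ \lambda_0,\lambda_1\in(0,1),\ \lambda_0+\lambda_1=1.
\]
Tensoring yields, in strict analogy with (\ref{ea-N-tensored}),
\[
\tilde\cn^{\otimes l}=\sum_{s^l\in\{0,1\}^l}\lambda_{s^l}\cn_{s^l},
\]
with $\lambda_{s^l}:=\lambda_{s_1}\cdots\lambda_{s_l}>0$ and $\cn_{s^l}$ the tensor product having $\cn$ in positions with $s_i=0$ and $\cn_1$ in positions with $s_i=1$; the distinguished index $\bar s^l=(0,\ldots,0)$ produces $\cn_{\bar s^l}=\cn^{\otimes l}$.

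Now suppose $(\fr,\cP,\crr)$ is an $(l,k)$ zero-error QC for $\tilde\cn$. For every unit vector $x\in\fr$ the expansion above gives
\[
1=\langle x,\crr\circ\tilde\cn^{\otimes l}\circ\cP(|x\rangle\langle x|)x\rangle=\sum_{s^l\in\{0,1\}^l}\lambda_{s^l}\,\langle x,\crr\circ\cn_{s^l}\circ\cP(|x\rangle\langle x|)x\rangle,
\]
and since each summand lies in $[0,1]$ with $\lambda_{s^l}>0$ and $\sum_{s^l}\lambda_{s^l}=1$, every summand equals $1$. Specializing to $\bar s^l$ shows that $(\fr,\cP,\crr)$ is a zero-error QC for $\cn$ as well, whence $k(l,\cn)\ge k(l,\tilde\cn)$. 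The identical computation for an $(l,M)$ (ea-)zero-error code $(\sigma_{\fr\fr'},\{\cP_m,D_m\}_{m=1}^M)$, replacing the inner product by $\tr((\cn_{s^l}\circ\cP_m\otimes\mathrm{id}_{\fr'})(\sigma_{\fr\fr'})D_m)$ as in (\ref{ea-avc}), yields $M(l,\cn)\ge M(l,\tilde\cn)$ and $M_{\textup{EA}}(l,\cn)\ge M_{\textup{EA}}(l,\tilde\cn)$.

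The reverse inequalities follow by symmetry: since $\cn\in\ri F$ as well, Lemma \ref{elementary} applies with the roles of $\cn$ and $\tilde\cn$ exchanged, producing a decomposition $\cn=\mu_0\tilde\cn+\mu_1\cn_2$ with $\cn_2\in F$ and $\mu_0,\mu_1\in(0,1)$, and the same argument gives $k(l,\tilde\cn)\ge k(l,\cn)$ and likewise for $M$ and $M_{\textup{EA}}$. Taking $\tfrac{1}{l}\log$ and sending $l\to\infty$ establishes $Q_0(\cn)=Q_0(\tilde\cn)$, $C_0(\cn)=C_0(\tilde\cn)$, and $C_{0\textup{EA}}(\cn)=C_{0\textup{EA}}(\tilde\cn)$. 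There is no real obstacle here; the only point requiring care is making sure the convex decomposition produced by Lemma \ref{elementary} stays inside $F$ (which is exactly its conclusion) and that the positivity $\lambda_{s^l}>0$ is genuinely strict, so that term-by-term fidelity $1$ is forced.
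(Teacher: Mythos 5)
Your proposal is correct and follows essentially the same route as the paper: decompose $\tilde\cn=\lambda_0\cn+\lambda_1\cn_1$ via Lemma \ref{elementary}, expand $\tilde\cn^{\otimes l}$, force each term to equal $1$ by strict positivity of the weights, specialize to $s^l=(0,\ldots,0)$, and conclude by symmetry; the paper argues identically for all three capacities.
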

\begin{proof} We assume w.l.o.g. that $\cn\neq \tilde\cn$ to avoid trivialities. Then setting $\cn_0:=\cn$ we can find $\cn_1\in F$ and $\lambda_0,\lambda_1\in (0,1), \lambda_0+\lambda_1=1$, with
\begin{equation}\label{face-reduction-1}
  \tilde\cn=\lambda_0\cn_0 +\lambda_1 \cn_1
\end{equation}
just by applying Lemma \ref{elementary} to $\cn_0,\tilde\cn\in \ri F$.\\
Let $(\fr_l,\cP,\crr)$ be an $(l,k_l)$ zero-error quantum code for $\tilde\cn $.
Then using the representation (\ref{face-reduction-1}) we obtain for any $x\in \fr_l, ||x||=1$
\begin{eqnarray}\label{face-reduction-2}
  1&=& \langle x, \crr\circ \tilde\cn^{\otimes l}\circ \cP (|x\rangle\langle x|  )x\rangle\\
&=& \sum_{s^l\in\{0,1  \}^l}\lambda_{s^l} \langle x, \crr\circ \cn_{s^l}\circ \cP (|x\rangle\langle x|  )x\rangle
\end{eqnarray}
and consequently, since $\lambda_{s^l}>0$ for all $s^l\in \{0,1  \}^l$, we are led to
\begin{equation}\label{face-reduction-3}
  \langle x, \crr\circ \cn_{s^l}\circ \cP (|x\rangle\langle x|  )x\rangle=1
\end{equation}
for all $s^l\in\{0,1  \}^l$ and all $x\in\fr_l, ||x||=1$. Choosing the sequence $s^l=(0,\ldots,0)$ and recalling that $\cn_0=\cn$ we arrive at
\begin{equation}\label{face-reduction-4}
  Q_0(\cn)\ge Q_0(\tilde\cn).
\end{equation}
The reverse inequality is derived by interchanging the roles of $\cn$ and $\tilde\cn$. The remaining assertions are shown in the same vein.
\end{proof}
We shall now pass to the set of faces $\mathfrak{F}:=\{F: \textrm{face of } \mathcal{C}(\hr,\kr) \}$ of $\mathcal{C}(\hr,\kr)$.
\begin{theorem}\label{face-theorem}
To each $\cn\in \mathcal{C}(\hr,\kr)$ there is a unique $F\in\mathfrak{F}$ with $\cn\in \ri F$. Moreover, each of the capacity functions $Q_0$, $C_{0\textup{EA}}$, and $C_{0}$ is constant on $\ri F$.
\end{theorem}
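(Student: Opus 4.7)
The plan is to handle the two assertions separately. For the first assertion --- that every $\cn\in\mathcal C(\hr,\kr)$ lies in the relative interior of a unique face --- I would appeal directly to a classical structural result from convex geometry: for any non-empty convex subset $C$ of a finite-dimensional real vector space, the family $\{\ri F : F \text{ is a face of } C\}$ is a partition of $C$, and $\ri F$ is non-empty for every non-empty face $F$. This is textbook material in Webster, which is already cited elsewhere in the paper for analogous background. Since $\mathcal C(\hr,\kr)$ is a compact convex subset of the finite-dimensional space $\mathcal{B}(\mathcal{B}(\hr),\mathcal{B}(\kr))$ (viewed over $\rr$), the existence and uniqueness of $F\in\mathfrak F$ with $\cn\in\ri F$ follows immediately.

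For the second assertion, I would invoke Lemma \ref{face-reduction} directly. Fix $F\in\mathfrak F$ and pick any reference channel $\tilde\cn\in\ri F$; this is possible by the first part. Because every face is in particular a convex set, the hypotheses of Lemma \ref{face-reduction} are satisfied with the convex set taken to be $F$ itself. The lemma then yields $Q_0(\cn)=Q_0(\tilde\cn)$, $C_{0\textup{EA}}(\cn)=C_{0\textup{EA}}(\tilde\cn)$, and $C_0(\cn)=C_0(\tilde\cn)$ for every $\cn\in\ri F$, which is the claimed constancy of the three zero-error capacities on $\ri F$.

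The substantive information-theoretic content --- namely, that in any convex decomposition $\cn=\sum_i \lambda_i\cn_i$ with $\lambda_i>0$, a zero-error code for $\cn$ is automatically a zero-error code for each $\cn_i$ --- has already been absorbed into Lemma \ref{face-reduction}. What remains is the convex-geometric packaging, which is essentially a citation. I therefore do not anticipate any genuine obstacle: the theorem is a combination of Lemma \ref{face-reduction} with the standard partition-into-relative-interiors theorem for convex sets, and the only care required is to ensure that the appeal to Webster's partition result is invoked in the correct generality (arbitrary compact convex sets, not merely polytopes), which it supports.
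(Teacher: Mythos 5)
Your proposal is correct and follows essentially the same route as the paper: the first assertion is exactly Webster's partition-into-relative-interiors-of-faces theorem (the paper cites Theorem 2.6.10 of Webster), and the second assertion is obtained, as you do, by applying Lemma \ref{face-reduction} with the convex set taken to be the face $F$ itself.
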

\begin{proof} According to Theorem 2.6.10 in \cite{webster} the family of sets $\{\ri F: F\in\mathfrak{F}  \}$ forms a partition of $\mathcal{C}(\hr,\kr)$. This shows the first assertion of the theorem. The second follows from Lemma \ref{face-reduction}.
\end{proof}

\begin{remark}
Notice that the results obtained so far show that the optimal (i.e. capacity achieving) code for any channel $\cn$ in the relative interior of any face  $F$ of 
$\mathcal{C}(\hr,\kr)$ is also optimal for any other channel in $\textrm{ri }F$.\end{remark}

\subsection{\label{subsec:lovasz}Discontinuity of quantum Lov\'asz $\tilde\theta$ function \& zero-error distillable entanglement}

In this final section we show that our methods are not only bound to the zero-error capacities of quantum channels. They apply to the quantum Lov\'asz $\tilde\theta$ function from 
\cite{duan-severini-winter} and also to zero-error distillable entanglement.
\paragraph{Discontinuity of quantum Lov\'asz $\tilde\theta$ function}\label{theta}
Preliminarily, following \cite{duan-severini-winter}, for a given channel $\cn\in\mathcal{C}(\hr,\kr)$ with a corresponding set of Kraus operators $\{ E_j \}_{j\in[K]}$ 
we define the non-commutative confusability graph following \cite{duan-severini-winter} by
\begin{eqnarray}\label{representation-of-S-N}
S(\cn)&:=&\textrm{span}\{E_j^{\ast}E_i: i,j\in [K]  \}\\
      &=& \hat\cn_{\ast}(\bo (\E)),  
\end{eqnarray}
where $\hat\cn_{\ast}$ is the adjoint of the complementary channel $\hat\cn\in\mathcal{C}(\hr,\E)$ defined via the Stinespring isometry $V:\hr\to \kr\otimes \E$
\begin{equation} Vx:=\sum_{j=1}^{K}E_j x\otimes f_j \end{equation}
with an orthonormal basis $\{f_1,\ldots, f_{K}  \}$ in $\E$.\\
Also, let us recall their definition of the quantum Lov\'asz $\tilde\theta$ function and its most fundamental property:
\begin{definition}[Quantum Lov\'asz $\tilde\theta$ function]
The quantum Lov\'asz $\tilde\theta$ function is, for a given confusability graph $S$ defined by 
\begin{equation}
 \tilde\theta(S):=\sup_{n\in\nn}\max\{\|\mathbf1_{\hr\otimes\mathbb C^n}+T\|:T\in S^\perp\otimes\mathcal B(\mathbb C^n),\ \mathbf1_{\hr\otimes\mathbb C^n}+T\geq0,\ T=T^*\},
\end{equation}
where $S^\perp:=\{a\in\mathcal B(\hr):\tr(ab)=0\ \forall b\in S\}$.
\end{definition}
This function gives an upper bound stemming from semi-definite programming (\cite{duan-severini-winter}, Theorem 8) on the entanglement-assisted capacity for transmission of classical messages with zero error: 
For every channel $\cn\in\mathcal C(\hr,\kr)$:
\begin{equation}
 C_{0EA}(\cn)\leq\log\tilde\theta(S(\cn))\label{eqn-zero-error-1}
\end{equation}
(Lemma 7 and Corollary 10 in \cite{duan-severini-winter}). We are going to employ the dual formulation:
\begin{theorem}[Theorem 9 in \cite{duan-severini-winter}]\label{thm:lovasz-theta-as-semidefinite-program}
For any $\cn\in\mathcal{C}(\hr,\kr)$ we have
\begin{equation}\label{dual-theta}
  \tilde\theta (S(\cn))=\min\left\{||\tr_{\hr} Y   ||: Y\in S(\cn)\otimes \bo(\hr'), Y\ge |\Phi\rangle\langle \Phi|  \right\},
\end{equation}
where $\hr'$ is just a copy of $\hr$ and $\Phi=\sum_{i=1}^{\dim \hr}e_i\otimes e'_i$ with ONBs $\{e_1,\ldots,e_{\dim\hr}  \}$ and 
$\{e'_1,\ldots,e'_{\dim\hr}  \}$ of $\hr$ and $\hr'$.
\end{theorem}
In the following we shall also need the next simple lemma.
\begin{lemma}\label{full-graph}
Let $\cn\in \ri \mathcal{C}(\hr,\kr)$. Then $S(\cn)=\bo (\hr)$.
\end{lemma}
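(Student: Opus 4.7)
The plan is to leverage the decomposition of channels in $\ri\mathcal{C}(\hr,\kr)$ provided by Lemma \ref{elementary}, together with an explicit computation for the completely depolarizing channel.

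First I would recall that although $S(\cn)$ is defined via a specific choice of Kraus operators, it depends only on $\cn$ itself: any two Kraus decompositions $\{E_i\}$, $\{F_j\}$ of $\cn$ are related by an isometry $F_j=\sum_i U_{ji}E_i$, so the spaces $\textrm{span}\{E_j^\ast E_i\}$ and $\textrm{span}\{F_j^\ast F_i\}$ coincide. Equivalently, $S(\cn)=\hat\cn_\ast(\mathcal{B}(\E))$ is determined by $\cn$ through its complementary channel, which is unique up to isometry on the environment.

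Second, I would handle the benchmark case $\cn_0(a):=\frac{\textrm{tr}(a)}{d_\kr}\idn_\kr$ by explicit calculation. Writing $K_{i,\alpha}:=\frac{1}{\sqrt{d_\kr}}|i\rangle\langle\alpha|$ for $i\in[d_\kr]$, $\alpha\in[d_\hr]$, one checks that $\{K_{i,\alpha}\}$ is a valid Kraus family for $\cn_0$, and
\[
K_{i,\alpha}^\ast K_{j,\beta}=\frac{\delta_{ij}}{d_\kr}|\alpha\rangle\langle\beta|,
\]
whose span is all of $\mathcal{B}(\hr)$. Hence $S(\cn_0)=\mathcal{B}(\hr)$.

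Third, for an arbitrary $\cn\in\ri\mathcal{C}(\hr,\kr)$ with $\cn\neq\cn_0$, I would invoke Lemma \ref{elementary} (noting that $\cn_0\in\ri\mathcal{C}(\hr,\kr)$ was already observed in the proof of Theorem \ref{ea-zero-in-interior}) to obtain $\cn_1\in\mathcal{C}(\hr,\kr)$ and $\lambda_0,\lambda_1\in(0,1)$ with $\cn=\lambda_0\cn_0+\lambda_1\cn_1$. Picking any Kraus family $\{F_k\}$ of $\cn_1$, the collection $\{\sqrt{\lambda_0}\,K_{i,\alpha}\}\cup\{\sqrt{\lambda_1}\,F_k\}$ is a Kraus family of $\cn$. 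Therefore $S(\cn)$, which by the first step is independent of representation, contains all products $(\sqrt{\lambda_0}K_{i,\alpha})^\ast(\sqrt{\lambda_0}K_{j,\beta})=\lambda_0 K_{i,\alpha}^\ast K_{j,\beta}$, and hence contains $S(\cn_0)=\mathcal{B}(\hr)$. Since trivially $S(\cn)\subseteq\mathcal{B}(\hr)$, equality follows, completing the argument (the case $\cn=\cn_0$ being handled by the second step directly). The only mildly subtle point is the representation-independence of $S(\cn)$, but this is standard and needs only the unitary freedom of Kraus decompositions.
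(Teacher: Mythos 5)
Your proof is correct and follows essentially the same route as the paper: decompose $\cn=\lambda_0\cn_0+\lambda_1\cn_1$ via Lemma \ref{elementary} with $\cn_0$ the fully depolarizing channel, observe that $S(\cn)\supseteq S(\cn_0)$ for such a convex combination, and use $S(\cn_0)=\bo(\hr)$. You merely spell out two details the paper leaves implicit, namely the representation-independence of $S(\cn)$ and the explicit Kraus computation showing $S(\cn_0)=\bo(\hr)$.
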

\begin{proof} Again we can represent $\cn$ as
\begin{equation}\cn=\lambda_0 \cn_0 +\lambda_1 \cn_1   \end{equation}
with $\lambda_0,\lambda_1\in (0,1)$, $\lambda_0+\lambda_1=1$, $\cn_0$ being the fully depolarizing channel, and
$\cn_1\in\mathcal{C}(\hr,\kr)$. The proof is concluded by the following simple observation: Given any two channels $\cn_0,\cn_1$ and $\lambda_0,\lambda_1\in (0,1)$ with $\lambda_0+\lambda_1=1$. Then for the channel $\cn:=\lambda_0\cn_{0}+\lambda_1\cn_1$ it holds that
\begin{equation} S(\cn)\supseteq S(\cn_0), S(\cn_1). \end{equation}
Since in our case $S(\cn_0)=\bo (\hr)$ we are done.
\end{proof}
With Theorem \ref{thm:lovasz-theta-as-semidefinite-program} and Lemma \ref{full-graph} at our disposal we can deduce the following discontinuity result for $\tilde\theta$:
\begin{theorem}
The function $\tilde\theta: \mathcal{C}(\hr,\hr)\to \rr_{+}$ assigning the number $\tilde\theta(S(\cn))$ to each quantum channel $\cn$ is discontinuous at any $\cn$ with $C_{0\textup{EA}}(\cn)>0$.
\end{theorem}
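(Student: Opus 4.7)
The plan is to mimic the strategy used for $Q_0$, $C_0$ and $C_{0\textup{EA}}$ in Corollary \ref{q-corollary}: identify a dense subset of $\mathcal{C}(\hr,\hr)$ on which $\tilde\theta$ takes the trivial value $1$, and use the lower bound $\tilde\theta(S(\cn))\ge 2^{C_{0\textup{EA}}(\cn)}$ from \cite{duan-severini-winter} to separate this trivial value from $\tilde\theta(S(\cn))$ at any positivity point.

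First I would show that $\tilde\theta(S(\cn))=1$ for every $\cn\in\ri\mathcal{C}(\hr,\hr)$. By Lemma \ref{full-graph} such a channel satisfies $S(\cn)=\bo(\hr)$, so the constraint $Y\in S(\cn)\otimes\bo(\hr')$ in the dual SDP (\ref{dual-theta}) reduces to $Y\in\bo(\hr\otimes\hr')$. The choice $Y:=|\Phi\rangle\langle\Phi|$ is then feasible (trivially $Y\ge|\Phi\rangle\langle\Phi|$), and a direct computation gives $\tr_{\hr}|\Phi\rangle\langle\Phi|=\idn_{\hr'}$, so $\|\tr_{\hr}Y\|=1$. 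Hence $\tilde\theta(S(\cn))\le 1$. The reverse inequality is immediate for any $\cn$: any feasible $Y$ satisfies $\tr Y\ge\tr|\Phi\rangle\langle\Phi|=\dim\hr$, and since $\tr_{\hr}Y$ acts on a space of dimension $\dim\hr$ we get $\|\tr_{\hr}Y\|\ge\tr(\tr_{\hr}Y)/\dim\hr\ge 1$. Therefore $\tilde\theta(S(\cn))=1$ on $\ri\mathcal{C}(\hr,\hr)$.

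Next I would invoke Theorem 8 of \cite{duan-severini-winter}, which gives the operational bound $C_{0\textup{EA}}(\cn)\le\log\tilde\theta(S(\cn))$ for every channel $\cn$. In particular, if $C_{0\textup{EA}}(\cn)>0$ then necessarily $\tilde\theta(S(\cn))>1$. Finally, since $\ri\mathcal{C}(\hr,\hr)$ is dense in $\mathcal{C}(\hr,\hr)$ with respect to $\|\cdot\|_{\lozenge}$ (Theorem 2.3.8 in \cite{webster}), for any $\cn$ with $C_{0\textup{EA}}(\cn)>0$ we can pick a sequence $(\cn_i)_{i\in\nn}\subset\ri\mathcal{C}(\hr,\hr)$ with $\cn_i\to\cn$ in diamond norm. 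By the first step $\tilde\theta(S(\cn_i))=1$ for all $i$, while $\tilde\theta(S(\cn))>1$ by the second step, establishing discontinuity of $\cn\mapsto\tilde\theta(S(\cn))$ at $\cn$.

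The only potentially delicate point is step one, specifically checking that $1$ is simultaneously an upper and a lower bound in the SDP (\ref{dual-theta}) when $S(\cn)=\bo(\hr)$; both halves are short but must be done with the correct norm conventions (operator norm on $\tr_{\hr}Y$, and the normalization of $\Phi$ used in \cite{duan-severini-winter}). Everything else parallels the proof of Corollary \ref{q-corollary} verbatim.
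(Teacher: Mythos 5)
Your proposal is correct and follows essentially the same route as the paper: Lemma \ref{full-graph} to get $S(\cn)=\bo(\hr)$ and hence $\tilde\theta(S(\cn))=1$ on $\ri\mathcal{C}(\hr,\kr)$ via feasibility of $Y=|\Phi\rangle\langle\Phi|$ in (\ref{dual-theta}), the bound $C_{0\textup{EA}}(\cn)\le\log\tilde\theta(S(\cn))$ from \cite{duan-severini-winter} to force $\tilde\theta(S(\cn))>1$ at positivity points, and density of the relative interior. Your explicit verification that $1$ is also a lower bound for the SDP value is a welcome detail that the paper leaves implicit, but the argument is the same.
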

\begin{proof} Note that for $\cn\in\ri \mathcal{C}(\hr,\kr)$ $S(\cn)=\bo(\hr)$ by Lemma \ref{full-graph}. Hence $|\Phi\rangle\langle \Phi|\in S(\cn)\otimes \bo(\hr')=\bo(\hr)\otimes \bo(\hr)$ and $|| \tr_{\hr}|\Phi\rangle\langle \Phi|   ||=1=\tilde\theta(S(\cn))$.
On the other hand, (\ref{eqn-zero-error-1}) implies that for any $\cn\in\mathcal{C}(\hr,\kr)$, $\tilde\theta (S(\cn))>1$ if $C_{0\textrm{EA}}(\cn)>0 $.\\
Since $ \ri \mathcal{C}(\hr,\kr) $ is dense in $\mathcal{C}(\hr,\kr)$ and since $\tilde\theta(S(\cn))=1$ for each $\cn\in \ri \mathcal{C}(\hr,\kr)$ we are done.
\end{proof}
Notice that the arguments given for the Lov\'asz $\tilde\theta$ function apply to any other upper bound to the entanglement-assisted zero-error capacity vanishing in the relative interior of $\mathcal{C}(\hr,\kr)$.
\paragraph{Zero-error distillation of entanglement}
The simple methods employed so far can also be applied to the problem of zero-error distillation of entanglement as we shall briefly indicate below. 
Assuming that $\rho \in \ri \cs (\hr_A\otimes \hr_B)$ we can find $\lambda_0,\lambda_1\in (0,1)$, $\lambda_0+\lambda_1=1$ such that
\begin{equation}\label{distillation-4}
  \rho=\lambda_0 \rho_0 +\lambda_1\rho_1,
\end{equation}
with $\rho_0=\frac{1}{d_A}\idn_{\hr_A}\otimes \frac{1}{d_B}\idn_{\hr_B}\in\ri \cs (\hr_A\otimes \hr_B)$, $d_A=\dim \hr_A, d_B=\dim\hr_B$, and $\rho_1\in \cs (\hr_A\otimes \hr_B)$. Then
\begin{equation}\label{distillation-5}
  \rho^{\otimes l}=\sum_{s^l\in\{ 0,1 \}^l}\lambda_{s^l}\rho_{s^l},
\end{equation}
and for any $(l,k_l)$ zero-error EDP $(\D, \vphi_{k_l})$ for $\rho$ we obtain
\begin{equation}\label{distillation-6}
  1= \sum_{s^l\in\{ 0,1 \}^l}\lambda_{s^l}\langle  \vphi_{k_l}, \D(\rho_{s^l})  \vphi_{k_l}     \rangle ,
\end{equation}
leading to
\begin{equation}\label{distillation-7}
  1= \langle  \vphi_{k_l}, \D(\rho_{s^l})  \vphi_{k_l}     \rangle
\end{equation}
for all $s^l\in\{ 0,1 \}^l  $. Choosing $s^l=(0,\ldots, 0)$ and noting that due to the fact that $\D$ is a LOCC operation the state $\D(\rho_0^{\otimes l})$ is separable, we obtain from \cite{horodecki-reduction}
\begin{equation}\label{distillation-8}
  1=\langle \vphi_{k_l}, \D(\rho_0^{\otimes l})\vphi_{k_l}  \rangle\le \frac{1}{k_l}. \end{equation}
Thus $k_l=1$ and $d(l,\rho)=1$ for all $l\in\nn$. We collect these observations in the following corollary.
\begin{corollary}
Let $\rho \in \ri \cs (\hr_A\otimes \hr_B) $. Then $d(l,\rho)=1$ for all $l\in\nn$ and $D_0(\rho)=0$. Moreover, the function $D_0$ is discontinuous at any $\rho\in \cs (\hr_A\otimes \hr_B)$ with $D_0(\rho)>0$.
\end{corollary}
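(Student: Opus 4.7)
The proof is essentially packaged into the paragraph immediately preceding the statement, so my plan is to organize that sketch into the three logical steps and verify each one, then add the density argument for the discontinuity claim.

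First, I would invoke the analogue of Lemma \ref{elementary} for the convex set $\cs(\hr_A\otimes\hr_B)$. Since the maximally mixed state $\rho_0=\tfrac{1}{d_Ad_B}\idn_{\hr_A\otimes\hr_B}$ lies in $\ri\cs(\hr_A\otimes\hr_B)$, and since $\rho\in\ri\cs(\hr_A\otimes\hr_B)$ by hypothesis, I can write $\rho=\lambda_0\rho_0+\lambda_1\rho_1$ with $\lambda_0,\lambda_1\in(0,1)$ and some $\rho_1\in\cs(\hr_A\otimes\hr_B)$. The case $\rho=\rho_0$ is handled separately but trivially. Tensorizing gives the convex decomposition $\rho^{\otimes l}=\sum_{s^l\in\{0,1\}^l}\lambda_{s^l}\rho_{s^l}$ with $\lambda_{s^l}:=\prod_i\lambda_{s_i}>0$ and $\rho_{s^l}:=\rho_{s_1}\otimes\cdots\otimes\rho_{s_l}$.

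Next, suppose $(\D,\vphi_{k_l})$ is an $(l,k_l)$ zero-error EDP for $\rho$, i.e.\ $\langle\vphi_{k_l},\D(\rho^{\otimes l})\vphi_{k_l}\rangle=1$. Expanding via the decomposition above and exploiting the strict positivity of every $\lambda_{s^l}$ together with the fact that each summand $\langle\vphi_{k_l},\D(\rho_{s^l})\vphi_{k_l}\rangle$ lies in $[0,1]$, I conclude that $\langle\vphi_{k_l},\D(\rho_{s^l})\vphi_{k_l}\rangle=1$ for every $s^l\in\{0,1\}^l$. Specializing to $s^l=(0,\ldots,0)$ gives $\langle\vphi_{k_l},\D(\rho_0^{\otimes l})\vphi_{k_l}\rangle=1$. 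Now $\rho_0^{\otimes l}$ is a product state, so $\D(\rho_0^{\otimes l})$ is separable because $\D$ is an LOCC operation. By the standard result of \cite{horodecki-reduction}, the fidelity of any separable state with a maximally entangled vector of Schmidt rank $k_l$ is bounded above by $1/k_l$, forcing $k_l=1$. Hence $d(l,\rho)=1$ for every $l\in\nn$ and $D_0(\rho)=0$.

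For the discontinuity claim, I would run the same density argument used for $Q_0$, $C_0$, $C_{0\textup{EA}}$, and $\tilde\theta$. If $\rho\in\cs(\hr_A\otimes\hr_B)$ satisfies $D_0(\rho)>0$, then necessarily $\rho\in\rebd\,\cs(\hr_A\otimes\hr_B)$ by what was just shown. Since $\ri\cs(\hr_A\otimes\hr_B)$ is dense in $\cs(\hr_A\otimes\hr_B)$ (Theorem 2.3.8 in \cite{webster}), there exists a sequence $(\rho_i)_{i\in\nn}\subset\ri\cs(\hr_A\otimes\hr_B)$ with $\rho_i\to\rho$ in trace norm while $D_0(\rho_i)=0$ for every $i$, establishing discontinuity at $\rho$.

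I do not expect a real obstacle here: the only point worth checking carefully is that the LOCC class is closed under tensor inputs in the way required (i.e.\ that applying $\D$ to a product state yields a separable output), which is immediate from the very definition of LOCC; the rest is a verbatim repetition of the pattern already used for $Q_0$, $C_{0\textup{EA}}$ and $\tilde\theta$ in the preceding subsections.
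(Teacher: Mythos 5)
Your proposal is correct and follows essentially the same route as the paper: the convex decomposition of $\rho$ through the maximally mixed state via Lemma \ref{elementary}, tensorization with strictly positive weights, the separability of $\D(\rho_0^{\otimes l})$ under LOCC combined with the bound from \cite{horodecki-reduction} to force $k_l=1$, and the density of $\ri\cs(\hr_A\otimes\hr_B)$ for the discontinuity claim. Your explicit handling of the trivial case $\rho=\rho_0$ is a harmless extra precaution not spelled out in the paper.
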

\section{Conclusion}
We have been able to derive a multi-letter analog of Ahlswede's dichotomy for quantum capacities of arbitrarily varying quantum channels: Either the classical, deterministic 
capacity of such a channel with average error criterion is zero , or else its deterministic and common-randomness-assisted entanglement transmission capacities are equal. 
Moreover, we have shown that the entanglement and strong subspace transmission capacities for this channel model are equal. It should be noted, however, that our proof of 
this does not rely on a strategy of ``hiding'' randomness in the encoding operation. In fact, by using a probabilistic variant of Dvoretzky's theorem we achieve this equality 
of capacities just  by restricting to an appropriate code subspace of comparable dimension on the exponential scale. Here we have left open the question whether the quantum 
capacity of arbitrarily varying quantum channels can be achieved with isometric encoding operations.\\
Simple conditions that guarantee single-letter capacity formulas have been provided. They are generalizations of those for memoryless and stationary quantum channels.\\ 
The major unresolved problem of this paper is the question whether there are AVQCs for which $C_{\textup{det}}(\mathfrak{I})=0$ and 
$\mathcal{A}_{\textup{random}}(\mathfrak{I})>0$ can occur. Notice that $C_{\textup{det}}(\mathfrak{I})=0$ immediately implies
$\mathcal{A}_{\textup{det}}(\mathfrak{I})=0$ and that the quantum version of Ahlswede's dichotomy proved in this paper shows that
$C_{\textup{det}}(\mathfrak{I})>0$ leads to $ \mathcal{A}_{\textup{det}}(\mathfrak{I})=\mathcal{A}_{\textup{random}}(\mathfrak{I})$. 
Therefore, the actual question behind the open problem stated above is, whether it can happen that 
$\mathcal{A}_{\textup{det}}(\mathfrak{I})=0$ and $\mathcal{A}_{\textup{random}}(\mathfrak{I})>0$ for some AVQC $\mathfrak{I}$. An affirmative example
to this question would display a rather striking super-activation phenomenon: a polynomial amount (in block length) of common randomness might
boost the capacity of an AVQC from $0$ to a positive value. On the other hand, showing that  $\mathcal{A}_{\textup{det}}(\mathfrak{I})=0$ implies
$\mathcal{A}_{\textup{random}}(\mathfrak{I})=0$ for all AVQCs $\mathfrak{I}$ would guarantee that 
$ \mathcal{A}_{\textup{det}}(\mathfrak{I})=\mathcal{A}_{\textup{random}}(\mathfrak{I})$ for all AVQCs $\mathfrak{I}$. The latter identity would relieve
us from checking the two rather intractable symmetrizability conditions developed in Sections \ref{subsec:classical-deterministic-average-error} and 
\ref{subsec:Classical-capacity-with-deterministic-codes-and-maximal error}.\\
Either affirmative or negative resolution of this issue is surprising for its own reason. At the present time we think that the latter answer is 
correct. 
\\\\
\emph{Acknowledgment.} We thank Andreas Winter for encouragement and his repeated insistence on launching this research.\\
We are indebted to an anonymous referee for numerous comments, questions, corrections, and advice that helped us a lot to improve the overall
structure and readability of the manuscript. Thank you!\\
Several very stimulating discussions with Toby Cubitt and Debbie Leung during the fall program 2010 at the Institut Mittag-Leffler have triggered our interest in zero-error capacities which resulted in the last part of the present paper. We thank both of them for these conversations.\\
Support by the Institut Mittag-Leffler (Djursholm, Sweden) is gratefully
acknowledged\\
This work is supported by the DFG via grants Bj 57/1-2 (IB), BO 1734/20-1 (IB and HB) and by the BMBF via grants 01BQ1050 (IB, HB, and JN) and 01BQ1052 (RA).





\begin{thebibliography}{99}
\bibitem{ahlswede-note} R. Ahlswede, ``A Note on the Existence of the Weak Capacity for Channels with Arbitrarily
Varying Channel Probability Functions and Its Relation to Shannon's Zero Error
Capacity'' \emph{The Annals of Mathematical Statistics}, Vol. 41, No. 3. (1970)

\bibitem{ahlswede-elimination}
R. Ahlswede, ``Elimination of Correlation in Random Codes for Arbitrarily Varying Channels'', \emph{Z. Wahrscheinlichkeitstheorie verw. Gebiete} 44, 159-175 (1978)

\bibitem{ahlswede-coloring}
R. Ahlswede, ``Coloring Hypergraphs: A New Approach to Multi-user Source Coding-II'', \emph{Journal of Combinatorics, Information \& System Sciences} Vol. 5, No. 3, 220-268 (1980)

\bibitem{ahlswede-gelfand-pinsker}
R. Ahlswede, ``Arbitrarily Varying Channels with States Sequence Known to the Sender'', \emph{IEEE Trans. Inf. Th.} Vol. 32, 621-629, (1986)

\bibitem{ahlswede-blinovsky}
R. Ahlswede, V. Blinovsky, ``Classical Capacity of Classical-Quantum Arbitrarily Varying Channels'', \emph{IEEE Trans. Inf. Th.} Vol. 53, No. 2, 526-533 (2007)

\bibitem{ahlswede-wolfowitz-2} R. Ahlswede, J. Wolfowitz, ``The Capacity of a Channel with Arbitrarily Varying Channel Probability Functions and Binary Output Alphabet'' Z. Wahrscheinlichkeitstheorie verw. Geb. 15, 186-194 (1970)

\bibitem{barnum-knill-nielsen} H. Barnum, E. Knill and M.A. Nielsen, ``On Quantum Fidelities and Channel Capacities'', \emph{IEEE Trans. Inf. Theory}, VOL. 46, NO. 4, (2000)

\bibitem{bennet-divincenzo-smolin}
C.H. Bennett, D.P. DiVincenzo, and J.A. Smolin, ``Capacities of Quantum Erasure Channels'', Phys. Rev. Lett. 78, 3217–3220 (1997)

\bibitem{bbn-1}
I. Bjelakovi\'c, H. Boche, J. N\"otzel, ``Quantum capacity of a class of compound channels'', \emph{Phys. Rev. A} 78, 042331, (2008)

\bibitem{bbn-2}
I. Bjelakovi\'c, H. Boche, J. N\"otzel, ``Entanglement transmission and generation under channel uncertainty: Universal quantum channel coding'', \emph{Commun. Math. Phys.} 292, 55-97 (2009)\\
I. Bjelakovi\'c, H. Boche, J. N\"otzel, ``Erratum to 'Entanglement transmission and generation under channel uncertainty: Universal quantum channel coding' ''...
\bibitem{bbt-avc}
D. Blackwell, L. Breiman, A.J. Thomasian, ``The capacities of certain channel classes under random coding'', \emph{Ann. Math. Stat.} 31, 558-567 (1960)

\bibitem{choi}
M.-D. Choi, ``Completely Positive Linear Maps on Complex Matrices'',
\emph{Linear Algebra and Its Applications} 10, 285-290 (1975)

\bibitem{csiszar}
I. Csiszar, J. K\"orner, \emph{Information Theory; Coding Theorems for Discrete Memoryless Systems}, Akad\'emiai Kiad\'o, Budapest/Academic Press Inc., New York 1981

\bibitem{csiszar-narayan}
I. Csiszar, P. Narayan, ``The Capacity of the Arbitrarily Varying Channel Revisited: Positivity, Constraints'', \emph{IEEE Trans. Inf. Th.} Vol. 34, No. 2, 181-193 (1989)

\bibitem{devetak-shor} I. Devetak and P.W. Shor, ``The Capacity of a Quantum Channel for Simultaneous Transmission of Classical and Quantum Information'', \emph{Commun. Math. Phys.} Vol. 256, Nr. 2 (2005)

\bibitem{duan-severini-winter} 
R. Duan, S. Severini, A. Winter, ``Zero-error communication via quantum channels, non-commutative graphs and a quantum Lov\'asz $\theta$ function'', arXiv:1002.2514v2

\bibitem{ericson}
T. Ericson, ``Exponential Error Bounds for Random Codes in the Arbitrarily Varying Channel'', \emph{IEEE Trans. Inf. Th.} Vol. 31, No. 1, 42-48 (1985)

\bibitem{fekete}
M. Fekete, ``\"Uber die Verteilung der Wurzeln bei gewissen algebraischen Gleichungen mit ganzzahligen Koeffizienten'', Mathematische Zeitschrift 17, 228 (1923).

\bibitem{gilbert} E.N. Gilbert, ``A comparison of signaling alphabets'', \emph{Bell System  Tech.  J.}
31,  504-522. (1952)

\bibitem{horodecki-reduction}
M. Horodecki, P. Horodecki, ``Reduction criterion of separability and limits for a class of distillation protocols'', \emph{Phys. Rev. A} Vol. 59, No. 6, 4206 (1999)

\bibitem{horodecki-horodecki-horodecki}M. Horodecki, P. Horodecki, R. Horodecki, ``General teleportation channel, singlet fraction, and quasidistillation
'', \emph{Phys. Rev. A} 60, 1888–1898 (1999)

\bibitem{kakutani} S. Kakutani, ``A Generalization of Brouwer's Fixed Point Theorem'', \emph{Duke Math. J.}, Volume 8, Number 3, 457-459 (1941)

\bibitem{kiefer-wolfowitz} J. Kiefer, J. Wolfowitz, ``Channels with arbitrarily varying channel probability functions'', \emph{Information and Control} 5, 44-54 (1962)

\bibitem{kitaev}
A.Y. Kitaev, A.H. Shen, M.N. Vyalyi, \emph{Classical and Quantum Computation}, Graduate Studies in Mathematics 47, American Mathematical Society, Providence, Rhode Island 2002

\bibitem{knill-laflamme}
E. Knill, R. Laflamme, ``Theory of quantum error-correcting codes'', \emph{Phys. Rev. A} Vol. 55, No. 2, 900-911 (1997)

\bibitem{koerner-orlitsky}
J. K\"orner, A. Orlitsky, ``Zero-error Information Theory'', \emph{IEEE Trans. Inf. Theory} Vol. 44, No. 6, 2207-2229 (1998)

\bibitem{leung-smith}
D. Leung, G. Smith, ``Continuity of quantum channel capacities'', \emph{Commun. Math. Phys.} 292, 201-215, (2009)

\bibitem{lieb-ruskai} E.H. Lieb and M.B. Ruskai, ``Proof of the strong subadditivity of quantum-mechanical entropy'', \emph{J. Math. Phys.} 14, 1938 (1973)

\bibitem{matousek} J. Matousek, \emph{Lectures on Discrete Geometry}, Graduate Texts in Mathematics, Vol. 212, Springer 2002

\bibitem{milman-schechtman} V.D. Milman, G. Schechtman \emph{Asymptotic Theory of Finite Dimensional Normed Spaces}, Lecture Notes in Mathematics vol. 1200, Springer-Verlag 1986

\bibitem{paulsen}
V. Paulsen, \emph{Completely Bounded Maps and Operator Algebras}, Cambridge Studies in Advanced Mathematics vol. 78, Cambridge University Press 2002

\bibitem{polya-szegoe} G. P\'olya, V. Szeg\"o \emph{Problems and Theorems in Analysis I}, Springer 1998

\bibitem{schumacher-nielsen} B. Schumacher, M.A. Nielsen, Quantum data processing and error correction. \emph{Phys. Rev. A} Vol. 54, No. 4,
2629 (1996)

\bibitem{shannon} C. E. Shannon,  ``The zero error capacity  of a  noisy channel''. \emph{IRE  Trans.  Information Theory  IT-2},
8-19  (1956)

\bibitem{von-neumann} J. von Neumann, ``Zur Theorie der Gesellschaftsspiele'', \emph{Math. Ann.} Vol. 100, 295-320 (1928)

\bibitem{webster}
R. Webster, \emph{Convexity}, Oxford University Press 1994

\bibitem{yard-devetak-hayden}
J. Yard, I. Devetak, P. Hayden, ``Capacity theorems for quantum multiple access channels: Classical-quantum and quantum-quantum capacity Regions'', \emph{IEEE Trans. Inf. Theory} 54, 3091 (2008) e-print arXiv:quant-ph/0501045.

\end{thebibliography}
\end{document}